\DeclarePairedDelimiter\bra{\langle}{\rvert}
\DeclarePairedDelimiter\ket{\lvert}{\rangle}
\DeclarePairedDelimiterX\braket[2]{\langle}{\rangle}{#1 \delimsize\vert #2}
\newtheorem{theorem}{Theorem}
\newtheorem{lemma}{Lemma}
\newtheorem{proposition}{Proposition}
\newtheorem{corollary}{Corollary}
\newcommand{\eq}[1]{(\ref{eq:#1})}
\newcommand{\thm}[1]{\hyperref[thm:#1]{Theorem~\ref*{thm:#1}}}
\newcommand{\defn}[1]{\hyperref[defn:#1]{Definition~\ref*{defn:#1}}}
\newcommand{\lem}[1]{\hyperref[lem:#1]{Lemma~\ref*{lem:#1}}}
\newcommand{\prop}[1]{\hyperref[prop:#1]{Proposition~\ref*{prop:#1}}}
\newcommand{\fig}[1]{\hyperref[fig:#1]{Figure~\ref*{fig:#1}}}
\newcommand{\tab}[1]{\hyperref[tab:#1]{Table~\ref*{tab:#1}}}
\renewcommand{\sec}[1]{\hyperref[sec:#1]{Section~\ref*{sec:#1}}}
\newcommand{\append}[1]{\hyperref[append:#1]{Appendix~\ref*{append:#1}}}
\newcommand{\cor}[1]{\hyperref[cor:#1]{Corollary~\ref*{cor:#1}}}
\newcommand{\obs}[1]{\hyperref[obs:#1]{Observation~\ref*{obs:#1}}}
\DeclareMathOperator{\spn}{span}
\DeclareMathOperator*{\argmax}{argmax}
\definecolor{amethyst}{rgb}{0.6, 0.4, 0.8}
\newcommand{\comment}[1]{}
\newcommand{\abs}[1]{\left\lvert#1\right\rvert}
\newcommand{\norm}[1]{\left\lVert#1\right\rVert}
\newcommand{\vertiii}[1]{{\left\vert\kern-0.25ex\left\vert\kern-0.25ex\left\vert #1
		\right\vert\kern-0.25ex\right\vert\kern-0.25ex\right\vert}}
\pgfplotsset{
	log x ticks with fixed point/.style={
		xticklabel={
			\pgfkeys{/pgf/fpu=true}
			\pgfmathparse{exp(\tick)}%
			\pgfmathprintnumber[fixed relative, precision=3]{\pgfmathresult}
			\pgfkeys{/pgf/fpu=false}
		}
	},
	log y ticks with fixed point/.style={
		yticklabel={
			\pgfkeys{/pgf/fpu=true}
			\pgfmathparse{exp(\tick)}%
			\pgfmathprintnumber[fixed relative, precision=3]{\pgfmathresult}
			\pgfkeys{/pgf/fpu=false}
		}
	}
}
\newcommand{\OO}[1]{O\left(#1\right)}
\newcommand{\cO}[1]{\mathcal{O}\left(#1\right)}
\newcommand{\Om}[1]{\Omega\left(#1\right)}
\newcommand{\tildecO}[1]{\widetilde{\mathcal{O}}\left(#1\right)}
\DeclareFontFamily{U}{matha}{\hyphenchar\font45}
\DeclareFontShape{U}{matha}{m}{n}{
	<5> <6> <7> <8> <9> <10> gen * matha
	<10.95> matha10 <12> <14.4> <17.28> <20.74> <24.88> matha12
}{}
\DeclareSymbolFont{matha}{U}{matha}{m}{n}
\DeclareFontFamily{U}{mathx}{\hyphenchar\font45}
\DeclareFontShape{U}{mathx}{m}{n}{
	<5> <6> <7> <8> <9> <10>
	<10.95> <12> <14.4> <17.28> <20.74> <24.88>
	mathx10
}{}
\DeclareSymbolFont{mathx}{U}{mathx}{m}{n}
\DeclareMathSymbol{\obot}         {2}{matha}{"6B}
\DeclareMathSymbol{\bigobot}       {1}{mathx}{"CB}
\begin{document}
\title{Nearly tight Trotterization of interacting electrons}
\author[aff1,aff2]{Yuan Su}
\orcid{0000-0003-1144-3563}
\author[aff1,aff3]{Hsin-Yuan Huang}
\orcid{0000-0001-5317-2613}
\author[aff4]{Earl T.\ Campbell}
\affiliation[aff1]{Institute for Quantum Information and Matter, Caltech, Pasadena, CA 91125, USA}
\affiliation[aff2]{Google Research, Venice, CA 90291, USA}
\affiliation[aff3]{Department of Computing and Mathematical Sciences, Caltech, Pasadena, CA 91125, USA}
\affiliation[aff4]{AWS Center for Quantum Computing, Pasadena, CA 91125, USA}
\maketitle

\begin{abstract}
	We consider simulating quantum systems on digital quantum computers. We show that the performance of quantum simulation can be improved by simultaneously exploiting commutativity of the target Hamiltonian, sparsity of interactions, and prior knowledge of the initial state. We achieve this using Trotterization for a class of interacting electrons that encompasses various physical systems, including the plane-wave-basis electronic structure and the Fermi-Hubbard model. We estimate the simulation error by taking the transition amplitude of nested commutators of the Hamiltonian terms within the $\eta$-electron manifold. We develop multiple techniques for bounding the transition amplitude and expectation of general fermionic operators, which may be of independent interest. We show that it suffices to use $\left(\frac{n^{5/3}}{\eta^{2/3}}+n^{4/3}\eta^{2/3}\right)n^{o(1)}$ gates to simulate electronic structure in the plane-wave basis with $n$ spin orbitals and $\eta$ electrons, improving the best previous result in second quantization up to a negligible factor while outperforming the first-quantized simulation when $n=\eta^{2-o(1)}$. We also obtain an improvement for simulating the Fermi-Hubbard model. We construct concrete examples for which our bounds are almost saturated, giving a nearly tight Trotterization of interacting electrons.
\end{abstract}

\newpage
{
	\clearpage\tableofcontents
}
\newpage

\section{Introduction}
\label{sec:intro}

Simulating quantum systems to model their dynamics and energy spectra is one of the most promising applications of digital quantum computers. Indeed, the difficulty of performing such simulations on classical computers led Feynman \cite{Fey82} and others to propose the idea of quantum computation. In 1996, Lloyd proposed the first explicit quantum algorithm for simulating local Hamiltonians \cite{lloyd1996universal}. Since then, various quantum simulation algorithms have been developed \cite{berry2007efficient,FractionalQuery14,Berry15,BCK15,QSP17,Low2019hamiltonian,LW18,haah2018quantum,childs2018faster,campbell2019random,ouyang2020compilation,LKW19}, with potential applications in studying condensed matter physics \cite{childs2018toward}, quantum chemistry \cite{cao2019quantum,mcardle2020quantum}, quantum field theories \cite{JLP12,JLP14b}, superstring/M-theory \cite{GHHL20}, as well as in designing other classical \cite{KAA20} and quantum algorithms \cite{Berry14,BS17,CCDFGS03,HHL09,AL19,LT20,WK20,GSLW19,Lin2020optimalpolynomial,Rall21}.

Lloyd's original work considered the simulation of $k$-local Hamiltonians. This was subsequently extended to the study of $d$-sparse Hamiltonians \cite{AT03,berry2007efficient}, which provides a framework that abstracts the design of quantum algorithms from the underlying physical settings. However, despite their theoretical value, algorithms for sparse Hamiltonian simulation do not always provide the fastest approach for simulating concrete physical systems. Hamiltonians arising in practice often have additional features beyond sparseness, such as locality \cite{haah2018quantum,Tran18}, commutativity \cite{CS19,CSTWZ19,Somma16}, and symmetry \cite{TSCT20,HLMJH20}, that can be used to improve the performance of simulation. Besides, prior knowledge of the initial state \cite{Somma15,SS20,BWMMNC18,CBC20} and the norm distribution of Hamiltonian terms \cite{campbell2019random,HP18,CHKT20,meister2020tailoring,LHLAC20} have also been proven useful for digital quantum simulation.

We show that a number of these features, in particular the sparsity, commutativity, and initial-state information, can be combined to give an even faster simulation. We achieve this improvement for a class of interacting-electronic Hamiltonians, which includes many physically relevant systems such as the plane-wave-basis electronic-structure Hamiltonian and the Fermi-Hubbard model. Our approach uses Trotterization---a method widely applied in digital quantum simulation.

Our analysis proceeds by computing the transition amplitude of simulation error within the $\eta$-electron manifold. To this end, we develop multiple techniques for bounding the transition amplitude/expectation of a general fermionic operator, which may be of independent interest. For an $n$-spin-orbital electronic-structure problem in the plane-wave basis, our result improves the best previous result in second quantization \cite{LW18,BWMMNC18,CSTWZ19} up to a negligible factor while outperforming the first-quantized result \cite{Babbush2019} when $n=\eta^{2-o(1)}$. We also obtain an improvement for simulating the Fermi-Hubbard model. We construct concrete examples for which our bounds are almost saturated, giving a nearly tight Trotterization of interacting electrons.

\subsection{Combining interaction sparsity, commutativity, and initial-state knowledge}
\label{sec:intro_combine}
Sparsity can be used to improve digital quantum simulation in multiple ways. A common notion of $d$-sparsity concerns the target Hamiltonian itself, where each row and column of the Hamiltonian contains $d$ nonzero elements accessed by querying quantum oracles. As aforementioned, this provides an abstract framework for designing efficient simulation algorithms and is versatile for establishing lower bounds \cite{berry2007efficient}, although it sometimes ignores other important properties of the target system, such as locality, commutativity, and symmetry. Another notion of sparsity, closely related to our paper, considers the interactions between the underlying qubits or modes \cite{Berry2019qubitizationof,XSSS20,PHOW20,MBLA14}. The sparsity of interactions does not in general imply the underlying Hamiltonian is sparse, but it provides a tighter bound on the number of terms in the Hamiltonian and may thus be favorable to digital quantum simulation.

Trotterization (and its alternative variants \cite{CSTWZ19,childs2018faster,ouyang2020compilation,HP18,LKW19}) provides a simple approach to digital quantum simulation and is so far the only known approach that can exploit the commutativity of the Hamiltonian. Indeed, in the extreme case where all the terms in the Hamiltonian commute, we can simultaneously diagonalize them and apply the first-order Lie-Trotter formula $\mathscr{S}_1(t)$ without error. Previous studies have also established commutator error bounds for certain low-order formulas \cite{Suzuki85} and specific systems \cite{CS19,Somma16}. An analysis of a general formula $\mathscr{S}_p(t)$ is, however, considerably more difficult and has remained elusive until the recent proof of the commutator scaling of Trotter error \cite{CSTWZ19}.

A different direction to speeding up digital quantum simulation is to exploit information about the initial state. The error of digital quantum simulation is commonly quantified in previous work by the spectral-norm distance, which considers all possible states in the underlying Hilbert space.  But if the state is known to be within some subspace throughout the simulation, then in principle this knowledge could be used to improve the algorithm. For instance, digital quantum simulation in practice often starts with an initial state in the low-energy subspace of the Hamiltonian, so a worst-case spectral-norm analysis will inevitably overestimate the error. To address this, recent studies have considered a low-energy projection on the simulation error and provided improved approaches, using either Trotterization \cite{Somma15,SS20,BWMMNC18,CBC20} or more advanced quantum algorithms \cite{LC17}, that can be advantageous when the energy of the initial state is sufficiently small.

Ideally, the sparsity of interactions, commutativity of the Hamiltonian, and prior knowledge about the initial state can all be combined to yield an even faster digital quantum simulation. This combination, however, appears to be technically challenging to achieve. Indeed, the state-of-the-art analysis of Trotterization represents the simulation error in terms of nested commutators of Hamiltonian terms with exponential conjugations \cite[Theorem 5]{CSTWZ19}. This representation is versatile for computing the commutator scaling of Trotter error, but it yields little information about the energy of the initial state. To the best of our knowledge, the only previous attempt to address this problem was made by Somma for simulating bosonic Hamiltonians \cite{Somma15}, whose solution seems to have a divergence issue.\footnote{Recent work \cite{AFL21} developed a tighter analysis of Trotter error for time-dependent Hamiltonian simulation that exploits both commutativity of the target Hamiltonian and knowledge about the initial state.} Instead, we combine the sparsity, the commutativity and the initial-state information to give an improved simulation of a class of interacting electrons.

\subsection{Simulating interacting electrons}
\label{sec:intro_electron}
Simulating interacting electrons has emerged as one of the most important applications of digital quantum simulation \cite{bauer2020quantum}. Following pioneering work such as \cite{aspuru2005simulated,OGKL01}, recent developments of efficient quantum algorithms for electronic structure simulation have dramatically reduced the simulation cost through various techniques \cite{mcardle2020quantum,cao2019quantum,LBGHMWB20,BLHSRRT20,WBA11,SRL12,TL13,Berry2019qubitizationof,MYMLMBC13,BWMMNC18,RWSWT17}.

Here, we consider simulating the following class of interacting electrons by Trotterization:
\begin{equation}
\label{eq:second_quantized_ham}
H=T+V:=\sum_{j,k}\tau_{j,k}A_j^\dagger A_k+\sum_{l,m}\nu_{l,m}N_l N_m,
\end{equation}
where $A_j^\dagger$, $A_k$ are the fermionic \emph{creation} and \emph{annihilation operators}, $N_l$ are the \emph{occupation-number operators}, $\tau$, $\nu$ are coefficient matrices, and the summation is over $n$ spin orbitals. The specific definitions of these fermionic operators are given in \sec{prelim_quantization}. We say the interactions are \emph{$d$-sparse} if there are at most $d$ nonzero elements within each row/column of $\tau$ and $\nu$. This model represents various systems arising in physics and chemistry, including the electronic-structure Hamiltonians in the plane-wave basis \cite{BWMMNC18} and the Fermi-Hubbard model \cite{CBC20,kivlichan2020improved}.

To apply Trotterization, we need to express the Hamiltonian as a sum of elementary terms, each of which can be directly exponentiated on a quantum computer; see \sec{prelim_pf} for a review of this algorithm. For the electronic Hamiltonian \eq{second_quantized_ham}, it suffices to consider the two-term decomposition $H=T+V$, as the exponentials of $T$ and $V$ can be directly implemented using various quantum circuits. For instance, all the terms in $V$ commute with each other so $e^{-i t V}= \prod_{l,m}  e^{-i t \nu_{l,m}N_l N_m }$, where each $e^{-i t \nu_{l,m}N_l N_m }$ corresponds to a two-qubit operation under the Jordan-Wigner transformation. On the other hand, the exponential $e^{-i t T}$ can be implemented by diagonalization, i.e., $e^{-i t T}= U e^{-i \sum \lambda_\ell N_\ell} U^\dagger $, where $U$ can be efficiently implemented using Givens rotations~\cite{Pou15,kivlichan2018quantum}. In cases where $\tau$ and $\nu$ are translationally invariant $\tau_{j,k}=\tau_{j+q,k+q}$, $\nu_{l,m}=\nu_{l+q,m+q}$, we can implement $e^{-itT}$ using the fast fermionic Fourier transform~\cite{BWMMNC18} and a related circuit implementation exists for $e^{-itV}$ \cite{LW18}.

We now apply a $p$th-order Trotterization $\mathscr{S}_p(t)$ to approximate the evolution of the electronic Hamiltonian \eq{second_quantized_ham} for time $t$. We prove the following bound on the error of this approximation.
\begin{theorem}[Fermionic seminorm of Trotter error]
	\label{thm:fermionic_seminorm}
	Let $H=T+V=\sum_{j,k}\tau_{j,k}A_j^\dagger A_k+\sum_{l,m}\nu_{l,m}N_lN_m$ be an interacting-electronic Hamiltonian \eq{second_quantized_ham} with $n$ spin orbitals, which we simulate using a $p$th-order formula $\mathscr{S}_p(t)$. Then,
	\begin{equation}
	\label{eq:general_bound}
	\norm{\mathscr{S}_p(t)-e^{-itH}}_{\eta}
	=\cO{\left(\norm{\tau}+\norm{\nu}_{\max}\eta\right)^{p-1}
		\norm{\tau}\norm{\nu}_{\max}\eta^2 t^{p+1}}.
	\end{equation}
	Furthermore, if the interactions are $d$-sparse,
	\begin{equation}
	\label{eq:sparse_bound}
	\norm{\mathscr{S}_p(t)-e^{-itH}}_{\eta}
	=\cO{\left(\norm{\tau}_{\max}+\norm{\nu}_{\max}\right)^{p-1}
		\norm{\tau}_{\max}\norm{\nu}_{\max}d^{p+1}\eta t^{p+1}}.
	\end{equation}
	Here, $\norm{\cdot}$ is the \emph{spectral norm}, $\norm{\cdot}_{\max}$ is the \emph{max-norm} denoting the largest matrix element in absolute value, and
	\begin{equation}
	\label{eq:fermionic_seminorm_def}
		\norm{X}_{\eta}
		:=\max_{\ket{\psi_\eta},\ket{\phi_\eta}}\abs{\bra{\phi_\eta}X\ket{\psi_\eta}}
	\end{equation}
	is the \emph{fermionic $\eta$-seminorm} for \emph{number-preserving operator} $X$, where $\ket{\psi_\eta}$, $\ket{\phi_\eta}$ are quantum states with $\eta$ electrons.
\end{theorem}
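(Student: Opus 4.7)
The plan is to start from the nested-commutator representation of Trotter error in \cite{CSTWZ19}, which expresses $\mathscr{S}_p(t)-e^{-itH}$ as an integral (carrying an overall factor $t^{p+1}$) of a sum of $(p+1)$-fold nested commutators of the form $[H_{i_1},[H_{i_2},[\cdots,H_{i_{p+1}}]]]$ with each $H_{i_j}\in\{T,V\}$, sandwiched between exponential conjugations $e^{\pm is_k H_{i_k}}$.

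The key simplification available in the fermionic $\eta$-seminorm is that both $T$ and $V$ preserve particle number, so each exponential $e^{\pm isT}$, $e^{\pm isV}$ maps the $\eta$-electron manifold to itself. These conjugations can therefore be absorbed into the choice of $\eta$-electron bra and ket in the definition \eq{fermionic_seminorm_def} without any loss, reducing the task to bounding $\norm{[H_{i_1},[H_{i_2},[\cdots,H_{i_{p+1}}]]]}_\eta$ for each nested commutator of $T$'s and $V$'s.

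I would then establish a recursive bound on the $\eta$-seminorm of these nested commutators. The base case is $\norm{[T,V]}_\eta=\cO{\norm{\tau}\norm{\nu}_{\max}\eta^2}$, obtained by expanding $[A_j^\dagger A_k, N_l N_m]$ explicitly and bounding the transition amplitudes of the resulting few-body fermionic operators between $\eta$-electron states. The inductive step is to show that outer-commuting with one more $T$ contributes only a factor $\cO{\norm{\tau}}$, whereas outer-commuting with one more $V$ contributes $\cO{\norm{\nu}_{\max}\eta}$. Applying this recursion to each term in the CSTWZ representation then produces the prefactor $(\norm{\tau}+\norm{\nu}_{\max}\eta)^{p-1}\norm{\tau}\norm{\nu}_{\max}\eta^2$ of \eq{general_bound}. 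The sparse version \eq{sparse_bound} follows from the same outline by replacing spectral norms with max-norms and tracking the $d$ nonzeros per row/column of $\tau$ and $\nu$.

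The main obstacle is the recursion step, in particular proving that an additional $T$-commutator costs only $\norm{\tau}$ rather than $\norm{\tau}\eta$. A naive submultiplicative argument using the operator norm of $T$ restricted to the $\eta$-electron manifold would already incur $\Theta(\norm{\tau}\eta)$ per $T$-commutator, leading to the looser exponent $(\norm{\tau}\eta+\norm{\nu}_{\max}\eta)^{p-1}$ instead of the desired $(\norm{\tau}+\norm{\nu}_{\max}\eta)^{p-1}$. Achieving the correct exponent requires transition-amplitude techniques that exploit the structural fact that these nested commutators remain few-body fermionic operators, acting nontrivially on only a constant number of electrons at a time; developing such bounds on the expectations and transition amplitudes of general fermionic operators is precisely the new machinery the abstract advertises, and I expect most of the technical work to be concentrated there.
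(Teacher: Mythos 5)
Your reduction to bounding $\norm{[H_{\gamma_{p+1}},\cdots[H_{\gamma_2},H_{\gamma_1}]]}_{\eta}$ is exactly the paper's first step (the conjugating exponentials are number-preserving unitaries and drop out by unitary invariance of the seminorm, giving \eq{pf2k_eta}), and your identification of the central difficulty --- that a naive submultiplicative estimate costs $\norm{\tau}\eta$ per $T$-commutator and $\norm{\nu}_{\max}\eta^2$ per $V$-commutator --- is accurate. But the proof stops there: the ``inductive step'' you propose is not a statement that can be proved by induction on seminorms. A bound of the form $\norm{[T,X]}_\eta\leq C\norm{\tau}\norm{X}_\eta$ is false for general number-preserving $X$ (and likewise for the claimed $\norm{\nu}_{\max}\eta$ cost of a $V$-commutator), so the inductive hypothesis ``$\norm{X}_\eta\leq\cdots$'' is too weak to propagate. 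The quantity that must be carried through the induction is the explicit algebraic structure of the nested commutator, not its seminorm: the paper (\prop{multilayer}) shows each nested commutator is a sum of ``fermionic chains'' $\sum_{\pmb j,\pmb k}\prod_x\tau_{j_x,k_x}\prod_x\delta_{k_{x+1},j_x}\,A_{j_q}^\dagger(\cdots)A_{k_1}$ in which each $T$-commutator contracts an index via a Kronecker delta and each $V$-commutator appends occupation-number factors, and only then bounds the whole chain in one pass (\prop{multilayer_seminorm}) by writing $\norm{X}_\eta^2=\norm{X^\dagger X}_\eta$ and alternately applying an operator Cauchy--Schwarz inequality, diagonalization of $\tau^\dagger\tau$ (this is where the spectral norm $\norm{\tau}$, rather than $\norm{\tau}\eta$, enters), and a H\"{o}lder-type inequality for expectations. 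None of this machinery appears in your proposal; you explicitly defer it (``I expect most of the technical work to be concentrated there''), which means the proof of \eq{general_bound} is missing its core.

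The sparse bound \eq{sparse_bound} has a further gap. The paper proves it by an entirely different path-counting argument (\prop{path_bound_seminorm} and \prop{cntpathsparse}): it reduces the seminorm to a maximum expectation, expands the state in configurations, and bounds the maximum degree of a weighted graph on configurations, using the observation that every fermionic path begins with an annihilation or occupation-number operator (which is what yields the single factor of $\eta$ rather than $\eta^2$) together with the $\delta$-function index matching that yields $d^{p+1}$. ``Replacing spectral norms with max-norms and tracking the $d$ nonzeros'' in your (unproved) recursion does not obviously reproduce the $d^{p+1}\eta$ scaling; in particular the reduction from $\eta^2$ to $\eta$ requires an argument you have not supplied.
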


This theorem follows from an inductive estimate of the fermionic seminorm of nested commutators of Hamiltonian terms, and will be formally proved in \sec{recurse} and \sec{path}. Note that in order to use the prior knowledge of the initial state, we have considered the fermionic seminorm $\norm{\cdot}_{\eta}$ of Trotter error with respect to the $\eta$-electron manifold. This seminorm is closely related to other metrics used to quantify the impact of initial-state information to digital quantum simulation \cite{Somma15,SS20,BWMMNC18,CBC20}; see \sec{prelim_seminorm} for a detailed discussion. The resulting bound depends on the number of electrons $\eta$, as well as the spectral norm $\norm{\tau}$, the max-norm $\norm{\tau}_{\max},\norm{\nu}_{\max}$, and the sparsity of interactions $d$, but there is no dependence on the total number of spin orbitals $n$. This improves over previous work \cite[Theorem F.5]{CSTWZ19} where an explicit $n$-scaling seems unavoidable. Meanwhile, other prior estimates of the fermionic seminorm \cite[Appendix G]{BWMMNC18} \cite[Theorem 13]{CBC20} did not exploit commutativity of the Hamiltonian and would introduce an additional factor of $\eta^{p}$ in the Trotter error bound. Our result thus improves the performance of digital quantum simulation by combining the initial-state information, the interaction sparsity, and commutativity of the Hamiltonian.

A common issue with the Trotterization algorithm is that existing analyses can be very loose for simulating certain physical systems. However, we address this with the following theorem, which shows that the asymptotic scaling of our bound is nearly tight.
\begin{theorem}[Tightness]
	\label{thm:fermionic_seminorm_tightness}
	For $s,w>0$ and positive integer $\eta\leq\frac{n}{2}$, there exists an interacting-electronic Hamiltonian $H=T+V=\sum_{j,k}\tau_{j,k}A_j^\dagger A_k+\sum_{l,m}\nu_{l,m}N_lN_m$ as in \eq{second_quantized_ham} with $n$ spin orbitals such that $\norm{\tau}=s$, $\norm{\nu}_{\max}=w$,
	\begin{equation}
	\label{eq:tightness_general}
		\big\Vert\underbrace{[T,\ldots[T}_{p},V]]\big\Vert_{\eta}=\Om{s^p w\eta},\qquad
		\big\Vert\underbrace{[V,\ldots[V}_{p},T]]\big\Vert_{\eta}=\Om{\left(w\eta\right)^{p} s/n}.
	\end{equation}
	In addition, for $u,w>0$ and positive integer $d\leq \eta\leq\frac{n}{2}$, there exists a $d$-sparse interacting-electronic Hamiltonian \eq{second_quantized_ham} with $n$ spin orbitals such that $\norm{\tau}_{\max}=u$, $\norm{\nu}_{\max}=w$,
	\begin{equation}
	\label{eq:tightness_sparse}
	\big\Vert\underbrace{[T,\ldots[T}_{p},V]]\big\Vert_{\eta}=\Om{\left(ud\right)^p wd},\qquad
	\big\Vert\underbrace{[V,\ldots[V}_{p},T]]\big\Vert_{\eta}=\Om{\left(wd\right)^p u}.
	\end{equation}
\end{theorem}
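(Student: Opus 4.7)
My plan is to establish each of the four lower bounds in \thm{fermionic_seminorm_tightness} by exhibiting explicit Hamiltonians $H=T+V$ and $\eta$-electron states $\ket{\psi_\eta},\ket{\phi_\eta}$ whose transition amplitude under the relevant nested commutator achieves the claimed scaling. The key structural observation, which I would prove by a single induction on $p$, is that $V$ is diagonal in the occupation-number basis and each $A_j^\dagger A_k$ acts as a ``signed hop'' on basis states, so
\begin{equation*}
\bigl[\underbrace{V,\ldots,[V}_{p},A_j^\dagger A_k]\cdots\bigr]\ket{\vec n}=\bigl(v(\vec n_{j,k})-v(\vec n)\bigr)^p A_j^\dagger A_k\ket{\vec n},
\end{equation*}
where $v(\vec n)=\sum_{l,m}\nu_{l,m}n_l n_m$ and $\vec n_{j,k}$ denotes $\vec n$ with one electron moved from $k$ to $j$. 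This identity reduces both $[V,\ldots,[V,T]]$ bounds to exhibiting a single hop with a large $V$-energy change.

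For the general-case $[V,\ldots,[V,T]]$ bound in \eq{tightness_general}, I would take $\tau_{j,k}=s/n$ uniformly (so $\norm{\tau}=s$) and $\nu_{l,m}=w$ for $l,m\neq 0$ and $0$ otherwise (so $\norm{\nu}_{\max}=w$ and $V=w(\hat n-N_0)^2$), with $\ket{\psi_\eta}$ placing electrons at $\{0,1,\ldots,\eta-1\}$ and $\ket{\phi_\eta}$ its image under a hop from site $0$ to some empty $j\geq\eta$; the hop induces a $V$-energy change of $w(2\eta-1)=\Om{w\eta}$, so together with $|\tau_{j,0}|=s/n$ this yields a matrix element of order $(s/n)(w\eta)^p$. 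The sparse analogue \eq{tightness_sparse} uses a single-bond $\tau$ of strength $u$ between $0$ and some $j^\star$, together with a ``star'' interaction $\nu_{0,m}=\nu_{m,0}=w$ for $m\in\{1,\ldots,d\}$; initializing with $d$ electrons at $\{0,1,\ldots,d-1\}$ plus $\eta-d$ spectator electrons disjoint from the support of $\tau$ and $\nu$, the hop changes the $V$-energy by $\Om{wd}$, yielding the advertised $u(wd)^p$.

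For the $[T,\ldots,[T,V]]$ bound in \eq{tightness_general}, I would use a ``hopping plus bath'' construction: $T=s(A_0^\dagger A_1+A_1^\dagger A_0)$ and $V=w N_0\hat n_B$ with $\hat n_B=\sum_{m=2}^{n-1}N_m$. Since $T$ commutes with $\hat n_B$, on the $\eta$-particle sector with $\eta-1$ fixed bath electrons the nested commutator reduces to a two-level problem on $\{\ket{0},\ket{1}\}$ in which $T$ plays the role of $s\sigma_x$ and $V$ of $w(\eta-1)\ket{0}\bra{0}$. A short SU(2) computation gives $[\underbrace{T,\ldots,[T}_p,V]\cdots]=2^{p-1}s^p w\,\hat n_B\cdot O_p$ with $O_p$ alternating between the hop $A_1^\dagger A_0-A_0^\dagger A_1$ and the number-difference $N_0-N_1$, producing a matrix element of order $s^p w(\eta-1)=\Om{s^p w\eta}$.

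The most delicate case is the sparse $[T,\ldots,[T,V]]$ bound $\Om{(ud)^p wd}$, since a single-bond $T$ would only contribute a factor of $u$ per nesting. To achieve $(ud)^p$ amplification while keeping $\norm{\tau}_{\max}=u$ and $d$-sparsity, I would take $T$ to be the \emph{bipartite} all-to-all hopping $T=u\sum_{j\in S_1,k\in S_2}(A_j^\dagger A_k+A_k^\dagger A_j)$ with $S_1,S_2$ disjoint sets of size $d$, so that $\tau$ has a $d\times d$ all-ones off-diagonal block and $\norm{\tau}=ud$, paired with $V=w\hat n_{S_1}\hat n_B$ for a disjoint bath $B$ of $d$ sites. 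On the single-particle sector of $S_1\cup S_2$, $T$ maps the symmetric states $\ket{+_i}=d^{-1/2}\sum_{j\in S_i}A_j^\dagger\ket{\mathrm{vac}}$ to $ud\ket{+_{3-i}}$ and annihilates their orthogonal complements, while the bath-dressed $\ket{+_1}$ carries $V$-energy $wd$ and the bath-dressed $\ket{+_2}$ carries $V$-energy $0$. The resulting two-dimensional invariant subspace reduces the problem to an SU(2) computation analogous to the general case and delivers a matrix element of size $\Om{(ud)^p wd}$ between the dressed $\ket{+_1}$ and $\ket{+_2}$, with $\eta-d-1$ additional spectator electrons bringing the total particle number to $\eta$. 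The main technical step is verifying that both $T$ and $V$ preserve this effective two-level subspace, which I expect to follow from the rank-one off-diagonal structure of $\tau$ and the disjointness of $B$ from $S_1\cup S_2$.
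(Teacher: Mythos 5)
Your route is genuinely different from the paper's and, where it works, cleaner. The paper fixes a single pair $T=\sum_{j,k}A_j^\dagger A_k$, $V=\sum_{l,m<n/2}N_lN_m$ (and a $d$-mode analogue), handles $[V,\ldots,[V,T]]$ directly in the occupation basis, and handles $[T,\ldots,[T,V]]$ by conjugating with the fermionic Fourier transform so that $T$ becomes $nN_0$, followed by a lengthy explicit cancellation of the transformed $V$ restricted to a two-dimensional subspace (\append{commutator_t}). You instead tailor a separate Hamiltonian to each commutator type: your diagonal-$V$ identity $[V,\ldots,[V,A_j^\dagger A_k]]\ket{\vec n}=(v(\vec n_{j,k})-v(\vec n))^pA_j^\dagger A_k\ket{\vec n}$ is correct and makes the $[V,\ldots,[V,T]]$ bounds essentially one-line computations of an energy gap, and your SU(2) reduction (with $T$ and $V$ both preserving the two-dimensional dressed subspace) replaces the paper's Fourier computation for $[T,\ldots,[T,V]]$; the bipartite all-ones block is a nice way to manufacture the $(ud)^p$ amplification while keeping $\norm{\tau}_{\max}=u$ and $d$-sparsity. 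What the paper's single-example approach buys in exchange for its heavier computation is exactly what your proposal currently lacks, namely the two points below.

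First, the theorem asserts the existence of \emph{one} Hamiltonian satisfying \emph{both} bounds in \eq{tightness_general} (and likewise for \eq{tightness_sparse}); you exhibit a different $(T,V)$ for each commutator type. This is fixable — e.g., place your two constructions on disjoint blocks of modes so that all cross-commutators vanish and the seminorm is witnessed by states supported in the appropriate block — but as written your proof establishes a weaker statement, and you should either add that patch or note that separate witnesses suffice for the tightness interpretation. Second, and more substantively, your sparse $[T,\ldots,[T,V]]$ construction requires three pairwise disjoint sets $S_1,S_2,B$ of size $d$ (plus room for spectators), i.e.\ roughly $3d\le n$, whereas the theorem allows any $d\le\eta\le n/2$; for $n/3<d\le n/2$ your Hamiltonian cannot be instantiated, and shrinking the bath degrades the bound to $(ud)^pw|B|$. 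The paper's sparse example lives entirely on $d$ modes precisely to avoid this. Finally, the boundary cases $\eta=1$ (your general $[T,\ldots,[T,V]]$ witness has an empty bath, so $V$ acts as zero) and $d=1$ need separate, if trivial, treatment — the paper also footnotes $d=1$.
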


We prove the above theorem by choosing $T=\sum_{j,k=0}^{n-1}A_j^\dagger A_k$ and $V=\sum_{l,m=0}^{n/2-1}N_lN_m$ and computing their rescaled nested commutators, both in the original basis and the Fourier basis; see \sec{tight} for the proof. Note that both commutators $[T,\ldots[T,V]]$ and $[V,\ldots,[V,T]]$ contribute to the Trotter error, as well as other types of nested commutators which do not dominate the error scaling (\prop{trotter_error_comm_rep}). Modulo an application of the triangle inequality, \thm{fermionic_seminorm_tightness} then shows that our bound \eq{general_bound} overestimates the Trotter error by a factor of $n\eta$ in the worst case, whereas \eq{sparse_bound} overestimates a factor of at most $\eta$. For $p$ sufficiently large, this only contributes $n^{o(1)}$ and $\eta^{o(1)}$ to the gate complexity, respectively. In this sense, we have given a nearly tight Trotterization of interacting-electronic Hamiltonians \eq{second_quantized_ham}.\footnote{For specific molecules, our complexity estimates may be further improved by using additional features of the Hamiltonians; see \sec{discuss} for a further discussion of this point.}

\subsection{Main techniques}
\label{sec:intro_tech}
The proof of \thm{fermionic_seminorm} relies on multiple approaches we develop for bounding the fermionic seminorm, which may be of independent interest. Recall from \eq{fermionic_seminorm_def} that the fermionic seminorm $\norm{X}_\eta$ of a number-preserving operator $X$ is the maximum transition amplitude of $X$ within the $\eta$-electron manifold.

Our first approach is based on the observation that the fermionic seminorm of $X$ can be alternatively represented using the expectation of $X^\dagger X$, i.e.,
\begin{equation}
\norm{X}_{\eta}
=\max_{\ket{\psi_\eta},\ket{\phi_\eta}}\abs{\bra{\phi_\eta}X\ket{\psi_\eta}}
=\max_{\ket{\psi_\eta}}\sqrt{\bra{\psi_\eta}X^\dagger X\ket{\psi_\eta}}.
\end{equation}
We then upper bound $X^\dagger X$ in terms of the \emph{particle-number operator} $N=\sum_{j}N_j$, so that the expectation scales with the number of electrons $\eta=\bra{\psi_\eta}N\ket{\psi_\eta}$ instead of the total number of spin orbitals. Assuming $X$ is a sum of product of fermionic operators, we contract the summation indices in $X^\dagger X$ by using either diagonalization (\lem{diagonalization}) or an operator Cauchy-Schwarz inequality (\lem{cauchy}) \cite{Otte10}. To extend this argument to general fermionic operators, we prove a H\"{o}lder-type inequality (\lem{fermionic_holder}) and apply it recursively to bound $X^\dagger X$. We detail this recursive approach in \sec{recurse} and use it to prove \eq{general_bound}.

Our second approach starts by bounding the fermionic seminorm
\begin{equation}
	\norm{X}_{\eta}=\max_{\ket{\psi_\eta},\ket{\phi_\eta}}\abs{\bra{\phi_\eta}X\ket{\psi_\eta}}
\end{equation}
in terms of the maximum expectation value $\max_{\ket{\psi_\eta}}\abs{\bra{\psi_\eta}X\ket{\psi_\eta}}$. We then expand $X$ and $\ket{\psi_\eta}$ and give a combinatorial argument to count the number of ``paths'' which have nonzero contribution to the expectation (\prop{path_bound_seminorm}). We discuss this path-counting approach in more detail in \sec{path} and use it to prove \eq{sparse_bound}. It is worth mentioning that the path-counting technique can also be used to prove the following alternative bound for the $p$th-order Trotterization
\begin{equation}
\label{eq:general_path_bound}
	\norm{\mathscr{S}_p(t)-e^{-itH}}_{\eta}
	=\cO{\left(n\norm{\tau}_{\max}+\norm{\nu}_{\max}\eta\right)^{p-1}
		\norm{\tau}_{\max}\norm{\nu}_{\max}n\eta^2 t^{p+1}}.
\end{equation}
This is slightly weaker than \eq{general_bound} since $\norm{\tau}\leq n\norm{\tau}_{\max}$ always holds but not necessarily saturates, but in our application it yields the same gate complexity for the electronic-structure simulation in the second-quantized plane-wave basis. We discuss this further in \append{pathcountdense}.

Note that the expectation of fermionic operators, when taken with respect to the computational-basis states, can be exactly computed using the so-called Wick's theorem \cite{Wick50,peskin2019introduction}. However, this approach would introduce unnecessary term reordering which actually complicates our proof. In contrast, the underlying idea of path counting is conceptually simpler and may have potential applications in other contexts beyond the analysis of Trotter error. 

\subsection{Applications}
\label{sec:intro_app}
The nearly tight Trotterization of electronic Hamiltonian \eq{second_quantized_ham} gives improved simulations of many systems arising in condensed matter physics and quantum chemistry, including the plane-wave-basis electronic-structure Hamiltonian and the Fermi-Hubbard model.

The electronic-structure problem considers electrons interacting with each other and some fixed nuclei. An efficient simulation of such systems could help understand chemical reactions, and provide insight into material properties. Here, we consider representing the electronic-structure Hamiltonian in the plane-wave basis \cite{BWMMNC18}:
\begin{equation}
\label{eq:plane_wave_dual}
\begin{aligned}
H&=\frac{1}{2n}\sum_{j,k,\mu}\kappa_{\mu}^2\cos[\kappa_{\mu}\cdot r_{k-j}]A_{j}^\dagger A_{k}\\
&\quad-\frac{4\pi}{\omega}\sum_{j,\iota,\mu\neq 0}\frac{\zeta_\iota\cos[\kappa_{\mu}\cdot(\widetilde{r}_\iota-r_j)]}{\kappa_{\mu}^2}N_{j}
+\frac{2\pi}{\omega}\sum_{\substack{j\neq k\\\mu\neq 0}}\frac{\cos[\kappa_{\mu}\cdot r_{j-k}]}{\kappa_{\mu}^2}N_{j}N_{k},
\end{aligned}
\end{equation}
where $\omega$ is the volume of the computational cell, $\kappa_{\mu}=2\pi\mu/\omega^{1/3}$ are $n$ vectors of plane-wave frequencies, $\mu$ are three-dimensional vectors of integers with elements in the interval $[-n^{1/3},n^{1/3}]$, $r_j$ are the positions of electrons; $\zeta_\iota$ are nuclear charges; and $\widetilde{r}_\iota$ are the nuclear coordinates. We can represent this Hamiltonian in the form \eq{second_quantized_ham} with
\begin{equation}
\norm{\tau}=\cO{\frac{n^{2/3}}{\omega^{2/3}}},\qquad
\norm{\nu}_{\max}=\cO{\frac{n^{1/3}}{\omega^{1/3}}}.
\end{equation}
Assuming a constant system density $\eta=\cO{\omega}$, \thm{fermionic_seminorm} then implies that
\begin{equation}
\norm{\mathscr{S}_p(t)-e^{-itH}}_\eta
=\cO{\left(\frac{n^{2/3}}{\eta^{2/3}}+n^{1/3}\eta^{2/3}\right)^{p}n^{1/3}\eta^{2/3} t^{p+1}}.
\end{equation}
This approximation is accurate for a short-time evolution. To simulate for a longer time, we divide the evolution into $r$ steps and apply $\mathscr{S}_p(t/r)$ within each step, obtaining
\begin{equation}
\begin{aligned}
\norm{\mathscr{S}_p^r(t/r)-e^{-itH}}_\eta
=\cO{\left(\frac{n^{2/3}}{\eta^{2/3}}+n^{1/3}\eta^{2/3}\right)^{p}n^{1/3}\eta^{2/3} \frac{t^{p+1}}{r^p}}.
\end{aligned}
\end{equation}
Therefore,
\begin{equation}
r=\cO{\left(\frac{n^{2/3}}{\eta^{2/3}}+n^{1/3}\eta^{2/3}\right)\left(n^{2/3}\eta^{1/3}\right)^{1/p}}
\end{equation}
steps suffices to simulate for a constant time and accuracy with a $p$th-order Trotterization. Implementing each step using the approach of \cite[Sect.\ 5]{LW18} and choosing $p$ sufficiently large, we obtain the gate complexity
\begin{equation}
\left(\frac{n^{5/3}}{\eta^{2/3}}+n^{4/3}\eta^{2/3}\right)n^{o(1)}.
\end{equation}
Up to the negligible factor $n^{o(1)}$, this improves the best previous result in second quantization while outperforming the first-quantized simulation when $n=\eta^{2-o(1)}$. See \tab{result_summary} for a gate-count comparison. We discuss this in detail in \sec{app_plane_wave}.

\begin{table}
	\begin{center}
		\hspace*{-0.6cm}
		\begin{tabu}{c|cc}
			\hline
			Simulation Algorithm & $n,\eta$ & $\eta=\Theta(n)$\\
			\hline
			Interaction-picture (Ref.\ \cite{Babbush2019}, first quantization) & $\tildecO{n^{1/3}\eta^{8/3}}$ & $\tildecO{n^3}$\\
			Qubitization (Ref.\ \cite{Babbush2019}, first quantization) & $\tildecO{n^{2/3}\eta^{4/3}+n^{1/3}\eta^{8/3}}$ & $\tildecO{n^3}$\\
			Interaction-picture (Ref.\ \cite{LW18}, second quantization) & $\tildecO{\frac{n^{8/3}}{\eta^{2/3}}}$ & $\tildecO{n^2}$\\
			Trotterization (Ref.\ \cite{BWMMNC18}, second quantization) & $\left(n^{5/3}\eta^{1/3}+n^{4/3}\eta^{5/3}\right)n^{o(1)}$ & $n^{3+o(1)}$\\
			Trotterization (Ref.\ \cite{CSTWZ19}, second quantization) & $\left(\frac{n^{7/3}}{\eta^{1/3}}\right)n^{o(1)}$ & $n^{2+o(1)}$\\
			\hline
			Trotterization (\thm{fermionic_seminorm}, second quantization) & $\left(\frac{n^{5/3}}{\eta^{2/3}}+n^{4/3}\eta^{2/3}\right)n^{o(1)}$ & $n^{2+o(1)}$\\
			\hline
		\end{tabu}
	\end{center}
	\caption{Comparison of our result and previous results for simulating the plane-wave-basis electronic structure with $n$ spin orbitals and $\eta$ electrons. We use $\tildecO{\cdot}$ to suppress polylogarithmic factors in the gate complexity scaling.}
	\label{tab:result_summary}
\end{table}

We also consider applications to the Fermi-Hubbard model, which is believed to capture the physics of some high temperature superconductors. This model is classically challenging to simulate \cite{zheng2017stripe,HalfPercent}, but is a potential candidate for near-term quantum simulation \cite{Cai20,cade2019strategies,Linke18,GoogleHubbard20}. We have
\begin{equation}
\label{eq:fermi_hubbard}
H=-s \sum_{\langle j,k\rangle,\sigma}\left(A_{j,\sigma}^\dagger A_{k,\sigma}+A_{k,\sigma}^\dagger A_{j,\sigma}\right)
+v\sum_{j}N_{j,0}N_{j,1},
\end{equation}
where $\langle j,k\rangle$ denotes a summation over nearest-neighbor lattice sites and $\sigma\in\{0,1\}$ labels the spin degree of freedom. The Fermi-Hubbard model represents a lattice system with nearest-neighbor interactions and, according to \cite{CS19}, can be simulated with $\cO{n^{1+1/p}}$ gates using a $p$th-order Trotterization for a constant time and accuracy. On the other hand, recent work \cite{CBC20} shows that the Trotterization algorithm has gate complexity $\cO{n\eta^{1+1/p}}$ when restricted to the $\eta$-electron manifold. By simultaneously using the sparsity of interactions, commutativity of the Hamiltonian and information about the initial state, we show in \sec{app_hubbard} that $\cO{n\eta^{1/p}}$ gates suffices, improving both results for the Fermi-Hubbard model.\footnotemark
\footnotetext{Note however that this does not significantly improve the approach based on Lieb-Robinson bounds \cite{haah2018quantum}, since that approach has gate complexity $\tildecO{n}$ when using a high-precision quantum simulation algorithm as a subroutine.}

We conclude the paper in \sec{discuss} with a discussion of the results and some open questions.

\section{Preliminaries}
\label{sec:prelim}

In this section, we summarize preliminaries of this paper, including a discussion of the Trotterization algorithm and its error analysis in \sec{prelim_pf}, a brief summary of the second-quantization representation in \sec{prelim_quantization}, and an introduction to the fermionic seminorm and its properties in \sec{prelim_seminorm}.

\subsection{Trotterization and Trotter error}
\label{sec:prelim_pf}
The Trotterization algorithm approximates the evolution of a sum of Hamiltonian terms using exponentials of the individual terms. For the interacting-electronic Hamiltonian \eq{second_quantized_ham}, it suffices to consider a two-term decomposition $H=T+V$, as the exponentials of $T$ and $V$ can be directly implemented on a quantum computer. Then, the ideal evolution under $H$ for time $t$ is given by $e^{-itH}=e^{-it(T+V)}$, which can be approximated by a $p$th-order product formula $\mathscr{S}_p(t)$, such as the first-order Lie-Trotter formula
\begin{equation}
\label{eq:pf1}
	\mathscr{S}_1(t):=e^{-itT}e^{-itV}
\end{equation}
and ($2k$)th-order Suzuki formulas \cite{suzuki1990fractal}
\begin{equation}
\label{eq:pf2k}
\begin{aligned}
	\mathscr{S}_{2}(t)&:=e^{-i\frac{t}{2}V}e^{-itT}e^{-i\frac{t}{2}V},\\
	\mathscr{S}_{2k}(t)&:=\mathscr{S}_{2k-2}(u_{k}t)^2 \, \mathscr{S}_{2k-2}((1-4u_{k})t) \, \mathscr{S}_{2k-2}(u_{k}t)^2,
\end{aligned}
\end{equation}
where $u_{k}:=1/(4-4^{1/(2k-1)})$. This approximation is accurate when $t$ is small. To simulate for a longer time, we divide the evolution into $r$ \emph{Trotter steps} and apply $\mathscr{S}_p(t/r)$ with \emph{Trotter error} at most $\epsilon/r$. We choose $r$ sufficiently large so that the simulation error, as quantified by the spectral norm $\norm{\mathscr{S}_p^r(t/r)-e^{-itH}}$, is at most $\epsilon$.

Trotterization (and its alternative variants) provides a simple approach to digital quantum simulation and is so far the only known approach that can exploit commutativity of the Hamiltonian. Indeed, in the extreme case where all the Hamiltonian terms commute, Trotterization can implement the exact evolution without error. Previous studies have also established commutator analysis of Trotter error for systems with geometrical locality \cite{CS19} and Lie-algebraic structures \cite{Somma16}, as well as certain low-order formulas \cite{Suzuki85}, including the first-order Lie-Trotter formula
\begin{equation}
\label{eq:pf1_error}
	\mathscr{S}_1(t)-e^{-itH}=
	\int_{0}^{t}\mathrm{d}\tau_1\int_{0}^{\tau_1}\mathrm{d}\tau_2\
	e^{-i(t-\tau_1)H}e^{-i\tau_1T}e^{i\tau_2T}\left[iT,iV\right]e^{-i\tau_2T}e^{-i\tau_1V}
\end{equation}
and the second-order Suzuki formula
\begin{equation}
\label{eq:pf2_error}
\begin{aligned}
\mathscr{S}_2(t)-e^{-itH}&=
\int_{0}^{t}\mathrm{d}\tau_1\int_{0}^{\tau_1}\mathrm{d}\tau_2\int_{0}^{\tau_2}\mathrm{d}\tau_3\
e^{-i(t-\tau_1)H}e^{-i\frac{\tau_1}{2}V}\\
&\quad\cdot\left(e^{-i\tau_3T}\left[-iT,\left[-iT,-i\frac{V}{2}\right]\right]e^{i\tau_3T}
+e^{i\frac{\tau_3}{2}V}\left[i\frac{V}{2},\left[i\frac{V}{2},iT\right]\right]e^{-i\frac{\tau_3}{2}V}\right)
e^{-i\tau_1T}e^{-i\frac{\tau_1}{2}V}.
\end{aligned}
\end{equation}
An analysis of the general case is, however, considerably more difficult and has remained elusive until the recent proof of commutator scaling of Trotter error \cite{CSTWZ19}. Here, we introduce a stronger version of that result which can be proved as in \cite[Appendix C]{CSTWZ19arXiv} by combining Theorem 3, 4, and 5 of \cite{CSTWZ19} without invoking the triangle inequality.\footnote{It is essential to invoke a representation of the Trotter error where nested commutators have at most a constant number of layers \cite{CSTWZ19}. If such a representation were not used, Trotterization would have a worse asymptotic complexity \cite[Appendix B]{WBCHT14}.}

\begin{proposition}[Commutator representation of Trotter error]
	\label{prop:trotter_error_comm_rep}
	Let $H=T+V$ be a two-term Hamiltonian and $\mathscr{S}_p(t)$ be a $p$th-order formula. Define $H_0=V$ and $H_1=T$. Then,
	\begin{equation}
	\begin{aligned}
		\mathscr{S}_p(t)-e^{-itH}
		&=\int_{0}^{t}\mathrm{d}\tau_1\int_{0}^{\tau_1}\mathrm{d}\tau_2\sum_{\pmb{\gamma},j}a_{\pmb{\gamma},j}(\tau_1,\tau_2)e^{-i(t-\tau_1)H}\\
		&\qquad\qquad\qquad\qquad\quad
		\cdot \mathscr{U}_{\pmb{\gamma},j}(\tau_1,\tau_2)
		\left[H_{\gamma_{p+1}},\cdots\left[H_{\gamma_2},H_{\gamma_1}\right]\right]
		\mathscr{W}_{\pmb{\gamma},j}(\tau_1,\tau_2),
	\end{aligned}
	\end{equation}
	where $\pmb{\gamma}\in\{0,1\}^{p+1}$ are binary vectors\footnotemark\ and $j$ goes through a constant range of numbers (depending on the order $p$). Here, $\mathscr{U}_{\pmb{\gamma},j}(\tau_1,\tau_2)$ and $\mathscr{W}_{\pmb{\gamma},j}(\tau_1,\tau_2)$ are products of evolutions of $T$ and $V$ with time variables $\tau_1$ and $\tau_2$ and $a_{\pmb{\gamma}}(\tau_1,\tau_2)$ are coefficients such that
	\footnotetext{We use bold symbols to represent vectors throughout this paper.}
	\begin{equation}
		\int_{0}^{t}\mathrm{d}\tau_1\int_{0}^{\tau_1}\mathrm{d}\tau_2\
		\abs{a_{\pmb{\gamma},j}(\tau_1,\tau_2)}
		=\cO{t^{p+1}}.
	\end{equation}
\end{proposition}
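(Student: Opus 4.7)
The plan is to follow the strategy of [CSTWZ19arXiv, Appendix~C], refining the Trotter-error derivation of Theorems~3--5 of [CSTWZ19] so that operator identities are preserved throughout and the triangle inequality is never applied to the nested-commutator factor. Keeping the full operator form is what later lets us quantify the error in the fermionic $\eta$-seminorm rather than the spectral norm, and it is the only substantive sharpening needed over the published version.

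The first step is to write the error via the standard Duhamel identity
\[
\mathscr{S}_p(t)-e^{-itH}=\int_0^t e^{-i(t-\tau)H}\,\mathscr{R}(\tau)\,d\tau,\qquad \mathscr{R}(\tau):=\partial_\tau\mathscr{S}_p(\tau)+iH\,\mathscr{S}_p(\tau),
\]
which already produces the prefactor $e^{-i(t-\tau_1)H}$ on the left. Since $\mathscr{S}_p(\tau)=\prod_\ell e^{-ia_\ell\tau H_{\gamma_\ell}}$ is an ordered product of the two elementary exponentials, differentiating by Leibniz inserts a single $H_{\gamma_\ell}$ between two partial products at each position, while $iH\,\mathscr{S}_p(\tau)=i(H_0+H_1)\mathscr{S}_p(\tau)$ contributes companion terms that will cancel the low-order pieces in the next step.

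The second step invokes the order condition. By Theorem~3 of [CSTWZ19] one has $\partial_\tau^k\mathscr{S}_p(\tau)\bigr|_{\tau=0}=(-iH)^k$ for $0\le k\le p$, hence $\mathscr{R}^{(k)}(0)=0$ for $0\le k\le p-1$. Taylor's theorem with integral remainder then yields
\[
\mathscr{R}(\tau_1)=\int_0^{\tau_1}\frac{(\tau_1-\tau_2)^{p-1}}{(p-1)!}\,\mathscr{R}^{(p)}(\tau_2)\,d\tau_2,
\]
which, inserted into Duhamel, already gives the announced double integral over $(\tau_1,\tau_2)$. The nontrivial work is to rewrite $\mathscr{R}^{(p)}(\tau_2)$ as a sum, of size depending only on $p$, of terms of the form $\mathscr{U}_{\pmb{\gamma},j}(\tau_1,\tau_2)\bigl[H_{\gamma_{p+1}},\cdots[H_{\gamma_2},H_{\gamma_1}]\bigr]\mathscr{W}_{\pmb{\gamma},j}(\tau_1,\tau_2)$ with nested-commutator depth exactly $p$: each of the $p+1$ differentiations inserts one $H_{\gamma_\ell}$ into the ordered product, and the $p+1$ insertions must be collapsed into a single nested commutator conjugated by bounded products of partial evolutions. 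This is carried out by repeatedly applying $e^{iA}Be^{-iA}=B+\int_0^1 e^{isA}[iA,B]e^{-isA}\,ds$ to slide insertions through exponentials, and using the order condition again to cancel the ``uncommuted'' remnants that would otherwise sit at lower commutator depth. This rearrangement is exactly the content of Theorems~4 and~5 of [CSTWZ19], carried out here at the operator level.

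The third step bounds the scalar weights. Each partial-evolution factor inside $\mathscr{R}^{(p)}(\tau_2)$ is unitary and each $a_\ell$ is a fixed constant of the formula, so after the Leibniz and conjugation-identity expansions the integrand is bounded by a constant on the region $\{0\le\tau_2\le\tau_1\le t\}$; multiplication by the Taylor kernel $(\tau_1-\tau_2)^{p-1}/(p-1)!$ and integration then gives $\int_0^t d\tau_1\int_0^{\tau_1}d\tau_2\,|a_{\pmb{\gamma},j}(\tau_1,\tau_2)|=\cO{t^{p+1}}$. The main obstacle is the combinatorial step in the second paragraph: keeping the commutator depth constant in $p$. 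A naive expansion would produce towers whose depth grows with the number of exponential factors in $\mathscr{S}_p$, and the footnote warns that such a representation would spoil the asymptotic complexity. Avoiding it requires carefully deciding at each stage which partial-evolution factors are absorbed into $\mathscr{U}_{\pmb{\gamma},j},\mathscr{W}_{\pmb{\gamma},j}$ and which must be further expanded into commutators, all without ever passing to a norm---the delicate bookkeeping inherited from [CSTWZ19arXiv, Appendix~C].
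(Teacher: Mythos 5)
Your proposal is correct and follows the same route the paper itself indicates: the variation-of-parameters (Duhamel) identity, the order conditions $\partial_\tau^k\mathscr{S}_p(\tau)\vert_{\tau=0}=(-iH)^k$ giving a Taylor remainder in $(\tau_1,\tau_2)$, and the constant-depth nested-commutator rearrangement of Theorems~3--5 of \cite{CSTWZ19} carried out at the operator level (as in Appendix~C of \cite{CSTWZ19arXiv}) so that the triangle inequality is never applied to the commutator factor. The only cosmetic difference is that you Taylor-expand $\mathscr{R}(\tau)$ itself rather than first factoring $\mathscr{R}(\tau)=T(\tau)\mathscr{S}_p(\tau)$ and expanding the conjugated sum $T(\tau)$, which is the cleaner way to realize the collapse of the $p+1$ insertions into a depth-$p$ nested commutator; both yield the stated form.
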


As an immediate application, we find that the spectral norm of the Trotter error scales with nested commutators of the Hamiltonian terms, i.e.,
\begin{equation}
	\norm{\mathscr{S}_p(t)-e^{-itH}}=\cO{\max_{\pmb{\gamma}}\norm{\left[H_{\gamma_{p+1}},\cdots\left[H_{\gamma_2},H_{\gamma_1}\right]\right]}t^{p+1}}.
\end{equation}
Note that the use of $\max_{\pmb{\gamma}}$ in place of $\sum_{\pmb{\gamma}}$ does not change the scaling as $\pmb{\gamma}$ only ranges over a constant number of binary vectors. We then divide the evolution into $r$ steps and apply the triangle inequality to obtain
\begin{equation}
	\norm{\mathscr{S}_p^r(t/r)-e^{-itH}}
	\leq r\norm{\mathscr{S}_p(t/r)-e^{-i\frac{t}{r}H}}
	=\cO{\max_{\pmb{\gamma}}\norm{\left[H_{\gamma_{p+1}},\cdots\left[H_{\gamma_2},H_{\gamma_1}\right]\right]}\frac{t^{p+1}}{r^p}}.
\end{equation}
It thus suffices to choose
\begin{equation}
	r=\cO{\frac{\left(\max_{\pmb{\gamma}}\norm{\left[H_{\gamma_{p+1}},\cdots\left[H_{\gamma_2},H_{\gamma_1}\right]\right]}\right)^{1/p}t^{1+1/p}}{\epsilon^{1/p}}}
\end{equation}
to ensure that the error of simulation is no more than $\epsilon$.

The above analysis is versatile for computing the commutator dependence of Trotter error. Unfortunately, the resulting bound does not use prior knowledge of the initial state and will in particular be loose if the initial state lies within a low-energy subspace. On the other hand, recent work of \c{S}ahino\u{g}lu and Somma proposed a Trotterization approach for simulating low-energy initial states but the commutativity of the Hamiltonian was ignored in their analysis \cite{SS20}. Here, we address this by simultaneously using commutativity of the Hamiltonian and prior knowledge of the initial state to improve the simulation of a class of interacting electrons. We obtain further improvement when the electronic Hamiltonian has sparse interactions. In the following, we introduce preliminaries about the second-quantization representation (\sec{prelim_quantization}) and the notion of fermionic seminorm (\sec{prelim_seminorm}), on which our analysis will be based.

\subsection{Second-quantization representation}
\label{sec:prelim_quantization}
In this section, we review several facts about the second-quantization representation that are relevant to our analysis. We refer the reader to the book of Helgaker, J\o{}rgensen, and Olsen \cite{helgaker2014molecular} for a detailed discussion of this topic.

We use the abstract Fock space to represent electronic Hamiltonians. Specifically, for a system of $n$ spin orbitals, we construct a $2^n$-dimensional space $\spn\{\ket{c_0,c_1,\ldots, c_{n-1}}\}$ spanned by the basis vectors $\ket{c_0,c_1,\ldots, c_{n-1}}$, where $c_j=1$ represents that mode $j$ is occupied and $c_j=0$ otherwise. General vectors in the Fock space, denoted by $\ket{\psi}$ or $\ket{\phi}$, are then given by linear combinations of these orthonormal basis vectors. We define the \emph{$\eta$-electron subspace} $\spn\{\ket{c_0,c_1,\ldots, c_{n-1}},\sum_jc_j=\eta\}$. By considering all $0\leq\eta\leq n$, we obtain the decomposition
\begin{equation}
	\spn\left\{\ket{c_0,c_1,\ldots, c_{n-1}}\right\}
	=\bigobot_{\eta=0}^{n}\spn\Big\{\ket{c_0,c_1,\ldots, c_{n-1}},\sum_jc_j=\eta\Big\},
\end{equation}
where $\obot$ denotes the orthogonal direct sum. Using bold symbol $\pmb c$ to represent an arbitrary \emph{fermionic configuration} and $\abs{\pmb c}=\sum_jc_j$ to denote the \emph{Hamming weight}, we have
\begin{equation}
\spn\left\{\ket{\pmb c}\right\}
=\bigobot_{\eta=0}^{n}\spn\Big\{\ket{\pmb c},\abs{\pmb c}=\eta\Big\}.
\end{equation}
We say that normalized vectors in the $\eta$-electron subspace form the \emph{$\eta$-electron manifold} and denote an arbitrary such vector by $\ket{\psi_\eta}$ or $\ket{\phi_\eta}$.

The $n$ elementary fermionic \emph{creation operators} are defined through the relations
\begin{equation} \label{eq:createrule}
\begin{aligned}
	A_j^\dagger\ket{c_0,c_1,\ldots,0_j,\ldots, c_{n-1}}&=(-1)^{\sum_{k=0}^{j-1}c_k}\ket{c_0,c_1,\ldots,1_j,\ldots, c_{n-1}},\\
	A_j^\dagger\ket{c_0,c_1,\ldots,1_j,\ldots, c_{n-1}}&=0,\\
\end{aligned}
\end{equation}
whereas the fermionic \emph{annihilation operators} are defined by
\begin{equation} \label{eq:annihirule}
\begin{aligned}
	A_j\ket{c_0,c_1,\ldots,0_j,\ldots, c_{n-1}}&=0,\\
	A_j\ket{c_0,c_1,\ldots,1_j,\ldots, c_{n-1}}&=(-1)^{\sum_{k=0}^{j-1}c_k}\ket{c_0,c_1,\ldots,0_j,\ldots, c_{n-1}}.\\
\end{aligned}
\end{equation}
The use of $\dagger$ is justified by the fact that $A_j^\dagger$ is indeed the Hermitian adjoint of $A_j$ with respect to the inner product in the Fock space. We also introduce the \emph{occupation-number operators} $N_j=A_j^\dagger A_j$ and add them together to get the \emph{particle-number operator} $N=\sum_{j=0}^{n-1}N_j$.

Fermionic creation and annihilation operators satisfy the canonical anticommutation relations
\begin{equation}
	A_j^\dagger A_k^\dagger +A_k^\dagger A_j^\dagger=A_j A_k +A_k A_j=0,\qquad
	A_j^\dagger A_k + A_k A_j^\dagger = \delta_{j,k} I,
\end{equation}
where the Kronecker-delta function $\delta_{j,k}$ is one if $j=k$ and zero otherwise. Applying these, we obtain the following commutation relations of second-quantized fermionic operators.
\begin{proposition}[Commutation relations of fermionic operators]
	\label{prop:fermionic_commutation}
	The following commutation relations hold for second-quantized fermionic operators:
	\begin{enumerate}
		\item $\left[A_l^\dagger A_m,A_j^\dagger\right]=\delta_{j,m}A_l^\dagger$, $\left[A_l^\dagger A_m,A_k\right]=-\delta_{k,l}A_m$;
		\item $\left[N_l,A_j^\dagger\right]=\delta_{l,j}A_j^\dagger$, $\left[N_l,A_k\right]=-\delta_{l,k}A_k$;
		\item $\left[N,A_j^\dagger\right]=A_j^\dagger$, $\left[N,A_k\right]=-A_k$;
		\item $\left[N_l,N_m\right]=0$.
	\end{enumerate}
\end{proposition}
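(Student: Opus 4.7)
The plan is to reduce everything to the canonical anticommutation relations stated just before the proposition, namely $\{A_j^\dagger,A_k^\dagger\}=\{A_j,A_k\}=0$ and $\{A_j^\dagger,A_k\}=\delta_{j,k}I$, via the elementary operator identity
\begin{equation}
[AB,C]=A\{B,C\}-\{A,C\}B,
\end{equation}
which converts a commutator with a product on the left into a combination of anticommutators. Because the right-hand sides of the CAR are either zero or a scalar multiple of the identity, each application of this identity collapses the expression to a single term, which is exactly what the proposition claims.

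First I would prove item 1 directly. For $[A_l^\dagger A_m, A_j^\dagger]$ the identity gives $A_l^\dagger\{A_m,A_j^\dagger\}-\{A_l^\dagger,A_j^\dagger\}A_m$; the second anticommutator vanishes and the first equals $\delta_{j,m}I$, yielding $\delta_{j,m}A_l^\dagger$. The computation for $[A_l^\dagger A_m, A_k]$ is symmetric and produces $-\delta_{k,l}A_m$. Item 2 is then the special case $l=m$ of item 1, using $N_l=A_l^\dagger A_l$. Item 3 follows by summing item 2 over $l$, since the Kronecker deltas collapse the sum. For item 4, I would expand $[N_l,N_m]=[N_l,A_m^\dagger A_m]=[N_l,A_m^\dagger]A_m+A_m^\dagger[N_l,A_m]$ and substitute item 2 to get $\delta_{l,m}A_m^\dagger A_m-\delta_{l,m}A_m^\dagger A_m=0$.

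There is no real obstacle here; the proposition is essentially a bookkeeping exercise once the CAR and the product rule for commutators are in hand. The only point that warrants a line of care is verifying the product-rule identity above, which follows by writing $[AB,C]=ABC-CAB$, inserting $\pm ACB$ between the two terms, and regrouping. Everything else is a one-line substitution, so I would structure the write-up as a single short derivation for item 1 followed by three brief remarks deducing items 2, 3, and 4 from it.
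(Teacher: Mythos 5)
Your proof is correct, and it follows exactly the route the paper intends: the paper states the proposition without proof, asserting only that it follows by "applying" the canonical anticommutation relations, and your reduction via $[AB,C]=A\{B,C\}-\{A,C\}B$ together with the deductions of items 2--4 from item 1 is the standard way to carry that out. All four computations check out, so nothing further is needed.
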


We say a fermionic operator is \emph{number-preserving} if every $\eta$-electron subspace is invariant under the action of this operator. Equivalently, operator $X$ is number-preserving if and only if it commutes with the particle-number operator, i.e., $\left[N,X\right]=0$. Yet another equivalent definition is based on the notion of \emph{$\eta$-electron projections}: letting $\Pi_\eta$ be orthogonal projections onto the $\eta$-electron subspaces, then $X$ is number-preserving if and only if it commutes with every $\Pi_\eta$, namely, $\left[\Pi_\eta,X\right]=0$. In the matrix representation, $X$ is block-diagonalized by the set of $\eta$-electron projections $\{\Pi_{\eta}\}$.

A special example of number-preserving operator is the particle-number operator $N$, which acts as a scalar multiplication by $\eta$ within the $\eta$-electron subspace. Other examples include \emph{excitation operators} $A_j^\dagger A_k$, occupation-number operators $N_l$, and elementary exponentials in the Trotterization algorithm $e^{-it\sum_{j,k}\tau_{j,k}A_j^\dagger A_k}$ and $e^{-it\sum_{l,m}\nu_{l,m}N_l N_m}$. The fermionic Fourier transform \cite{Ferris14} as given below is also number-preserving:\footnotemark
\footnotetext{This can alternatively be proved using the fact that the fermionic Fourier transform is generated by the fermionic swap and Hadamard gate \cite[Appendix I]{BWMMNC18}, both of which are number-preserving.}
\begin{equation}
\mathrm{FFFT}^\dagger\cdot A_j^\dagger\cdot \mathrm{FFFT}=\frac{1}{\sqrt{n}}\sum_{l=0}^{n-1} e^{-\frac{2\pi i jl}{n}}A_l^\dagger,\qquad
\mathrm{FFFT}^\dagger\cdot A_k\cdot \mathrm{FFFT}=\frac{1}{\sqrt{n}}\sum_{m=0}^{n-1} e^{\frac{2\pi i km}{n}}A_m,
\end{equation}
since
\begin{equation}
\begin{aligned}
	\mathrm{FFFT}^\dagger\cdot N\cdot \mathrm{FFFT}
	&=\sum_{j=0}^{n-1} \mathrm{FFFT}^\dagger\cdot A_j^\dagger \cdot \mathrm{FFFT}\cdot \mathrm{FFFT}^\dagger\cdot A_j \cdot \mathrm{FFFT}\\
	&=\frac{1}{n}\sum_{l,m=0}^{n-1} \left(\sum_{j=0}^{n-1}e^{\frac{2\pi ij(m-l)}{n}}\right)A_l^\dagger A_m
	=\sum_{l=0}^{n-1}A_l^\dagger A_l = N.
\end{aligned}
\end{equation}
In fact, the set of number-preserving operators contains identity and is closed under linear combination, multiplication, Hermitian conjugation, and taking limit.
\begin{proposition}[Number-preserving operators as a closed unital $\dagger$-subalgebra]
	The following operators are respectively number-preserving:
	\begin{enumerate}
		\item the identity operator $I$;
		\item $\lambda X+\mu Y$, if $X$ and $Y$ are number-preserving, and $\lambda$ and $\mu$ are complex numbers;
		\item $XY$, if $X$ and $Y$ are number-preserving;
		\item $X^\dagger$, if $X$ is number-preserving;
		\item $\lim\limits_{i\rightarrow\infty}X_i$, if $X_i$ are number-preserving and the limit exists.
	\end{enumerate}
\end{proposition}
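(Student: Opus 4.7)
The plan is to use the characterization, established just above the proposition, that $X$ is number-preserving if and only if $[N,X]=0$, and then verify each of the five closure properties as a direct consequence of elementary facts about the commutator bracket. Since $N$ is Hermitian, bounded on each $\eta$-electron subspace, and continuous, all five items reduce to one-line calculations.

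For item (1), we have $[N,I] = NI - IN = 0$. For item (2), bilinearity of the commutator yields $[N,\lambda X+\mu Y]=\lambda[N,X]+\mu[N,Y]=0$. For item (3), the Leibniz rule gives $[N,XY]=[N,X]Y+X[N,Y]=0$. For item (4), I would take the Hermitian conjugate of the relation $[N,X]=0$ and use $N^\dagger=N$: this gives $0=[N,X]^\dagger=X^\dagger N - N X^\dagger = -[N,X^\dagger]$, so $[N,X^\dagger]=0$. For item (5), I would invoke continuity of left- and right-multiplication by the fixed operator $N$: if $X_i \to X$ in any reasonable operator topology (e.g.\ the one in which the limit is taken in the statement), then $[N,X_i]\to[N,X]$, so $[N,X]=\lim_{i\to\infty}[N,X_i]=0$.

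There is no real obstacle here beyond being careful about item (5): one should specify that the limit is taken entrywise, or strongly, or in operator norm — in any of these the commutator with $N$ is continuous on the finite-dimensional Fock space, so nothing is lost. Alternatively, all five items could be proved equally easily from the projection characterization $[\Pi_\eta,X]=0$ for every $\eta$, which is occasionally more convenient when working subspace by subspace; I would mention this as a remark but carry out the argument with $N$ for brevity.
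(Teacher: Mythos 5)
Your proof is correct and is exactly the argument the paper implicitly relies on: the paper states this proposition without proof, having just established the characterization that $X$ is number-preserving if and only if $[N,X]=0$, and each of your five one-line verifications (bilinearity, the Leibniz rule, conjugating $[N,X]=0$ using $N^\dagger=N$, and continuity of the commutator on the finite-dimensional Fock space) is sound. Your remark that item (5) is unproblematic in any topology because the Fock space is finite-dimensional is the right thing to flag.
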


\subsection{Fermionic seminorm}
\label{sec:prelim_seminorm}
We now introduce the notion of fermionic seminorm, which we use to quantify the error of the Trotterization algorithm that takes the prior knowledge of the initial state into consideration.

For any number-preserving operator $X$ and $0\leq\eta\leq n$, we define the \emph{fermionic $\eta$-seminorm} as the maximum transition amplitude within the $\eta$-electron manifold:
\begin{equation}
\label{eq:fermionic_seminorm_def_prelim}
	\norm{X}_{\eta}
	:=\max_{\ket{\psi_\eta},\ket{\phi_\eta}}\abs{\bra{\phi_\eta}X\ket{\psi_\eta}},
\end{equation}
where $\ket{\psi_\eta},\ket{\phi_\eta}$ are quantum states containing $\eta$ electrons.\footnotemark\ When there is no ambiguity, we drop the dependence on $\eta$ and call $\norm{X}_\eta$ the fermionic seminorm of $X$. As the name suggests and the following proposition confirms, the fermionic seminorm is indeed a seminorm defined on the closed unital $\dagger$-subalgebra of number-preserving operators.
\footnotetext{Note that it is possible to extend this to define $\norm{\cdot}_{\eta\rightarrow\xi}$ for operators that map the $\eta$-electron subspace to $\xi$-electron subspace, although this is not needed in our analysis and will not be further pursued here.}
\begin{proposition}[Seminorm properties]
	\label{prop:fermionic_seminorm_property}
	The following properties hold for the fermionic seminorm:
	\begin{enumerate}
		\item $\norm{\lambda X}_{\eta}=\abs{\lambda}\norm{X}_{\eta}$, if $X$ is number-preserving and $\lambda$ is a complex number;
		\item $\norm{X+ Y}_{\eta}\leq\norm{X}_{\eta}+\norm{Y}_{\eta}$, if $X$ and $Y$ are number-preserving;
		\item $\norm{XY}_\eta\leq\norm{X}_\eta\norm{Y}_\eta$, if $X$ and $Y$ are number-preserving;
		\item $\norm{I}_{\eta}=1$;
		\item $\norm{UXW}_{\eta}=\norm{X}_{\eta}$, if $U,X,W$ are number-preserving and $U,W$ are unitary;
		\item $\norm{X^\dagger}_{\eta}=\norm{X}_{\eta}$, if $X$ is number-preserving.
	\end{enumerate}
\end{proposition}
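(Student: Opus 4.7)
The plan is to derive each of the six properties directly from the definition
\[
\norm{X}_\eta = \max_{\ket{\psi_\eta},\ket{\phi_\eta}} \abs{\bra{\phi_\eta} X \ket{\psi_\eta}},
\]
exploiting in each case the fact that a number-preserving operator $X$ leaves every $\eta$-electron subspace invariant. Items 1, 2, and 4 require essentially no work beyond the elementary properties of $\abs{\cdot}$ and $\max$: absolute homogeneity by pulling $\abs{\lambda}$ out of the transition amplitude, the triangle inequality by applying it pointwise to $\bra{\phi_\eta}(X+Y)\ket{\psi_\eta}$ before maximizing, and $\norm{I}_\eta = 1$ by taking $\ket{\phi_\eta} = \ket{\psi_\eta}$ to be any fixed $\eta$-electron state (such states exist whenever $0 \leq \eta \leq n$).

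For submultiplicativity (Property 3), the key step is that $Y$ being number-preserving forces $Y\ket{\psi_\eta}$ to lie in the $\eta$-electron subspace. If $Y\ket{\psi_\eta} \neq 0$, I set $\ket{\chi_\eta} := Y\ket{\psi_\eta}/\norm{Y\ket{\psi_\eta}}$, which is itself in the $\eta$-electron manifold, and write
\[
\abs{\bra{\phi_\eta} XY \ket{\psi_\eta}} = \norm{Y\ket{\psi_\eta}}\,\abs{\bra{\phi_\eta} X \ket{\chi_\eta}} \leq \norm{Y\ket{\psi_\eta}} \cdot \norm{X}_\eta.
\]
To finish, I note that $\norm{Y\ket{\psi_\eta}} = \max_{\ket{\phi}} \abs{\bra{\phi} Y \ket{\psi_\eta}}$ over all unit vectors, but since $Y\ket{\psi_\eta}$ is supported in the $\eta$-electron subspace the optimal $\ket{\phi}$ can be chosen there, so $\norm{Y\ket{\psi_\eta}} \leq \norm{Y}_\eta$; the case $Y\ket{\psi_\eta} = 0$ is trivial. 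For Property 5, I use that $U$ and $W$ are simultaneously unitary and number-preserving, hence they act as bijections on the $\eta$-electron manifold. Performing the changes of variable $\ket{\phi_\eta} \mapsto U\ket{\phi_\eta}$ and $\ket{\psi_\eta} \mapsto W^\dagger \ket{\psi_\eta}$ in the defining maximization shows $\norm{UXW}_\eta = \norm{X}_\eta$. Property 6 is immediate from $\overline{\bra{\phi_\eta} X^\dagger \ket{\psi_\eta}} = \bra{\psi_\eta} X \ket{\phi_\eta}$ combined with the symmetry of the maximization in its two state arguments.

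None of the steps presents a genuine obstacle; the entire content of the proposition is that \emph{whenever} a number-preserving operator acts on an $\eta$-electron state the result stays in the $\eta$-electron subspace, so all standard operator-norm manipulations restrict cleanly to this subspace. The only mild subtlety is handling the possibility that $Y\ket{\psi_\eta}$ vanishes in the proof of Property 3, but this is a trivial boundary case. Overall, the proposition can be viewed as the statement that $\norm{\cdot}_\eta$ coincides with the operator norm of the restriction $\Pi_\eta X \Pi_\eta$, from which all seminorm properties are inherited.
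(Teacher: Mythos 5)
Your proof is correct and follows essentially the same route as the paper's: the paper likewise treats items 1, 2, 4--6 as immediate and devotes its argument to submultiplicativity, resting everything on the fact that number-preserving operators map the $\eta$-electron subspace to itself (equivalently, that $\norm{X}_\eta=\norm{\Pi_\eta X\Pi_\eta}$). The only cosmetic difference is in Property 3, where the paper inserts $\Pi_\eta\Pi_\eta$ between $X$ and $Y$ and applies the Cauchy--Schwarz inequality to get $\norm{\Pi_\eta X^\dagger\ket{\phi_\eta}}\,\norm{\Pi_\eta Y\ket{\psi_\eta}}$, while you normalize $Y\ket{\psi_\eta}$ and use it directly as the intermediate state; both arguments are equivalent and equally rigorous, including your correct handling of the degenerate case $Y\ket{\psi_\eta}=0$.
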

\begin{proof}
	We will only prove the third statement, as the remaining follow directly from the definition of the fermionic seminorm.
	We consider
	\begin{equation}
	\begin{aligned}
		\norm{XY}_\eta
		&=\max_{\ket{\psi_\eta},\ket{\phi_\eta}}\abs{\bra{\phi_\eta}XY\ket{\psi_\eta}}\\
		&=\max_{\ket{\psi_\eta},\ket{\phi_\eta}}\abs{\bra{\phi_\eta}X\Pi_{\eta}\Pi_{\eta}Y\ket{\psi_\eta}}\\
		&\leq\max_{\ket{\phi_\eta}}\norm{\Pi_{\eta}X^\dagger\ket{\phi_\eta}}\max_{\ket{\psi_\eta}}\norm{\Pi_{\eta}Y\ket{\psi_\eta}},
	\end{aligned}
	\end{equation}
	where $\Pi_\eta$ is the orthogonal projection onto the $\eta$-electron subspace and the last step follows from the Cauchy-Schwarz inequality. To proceed, we optimize over an arbitrary state $\ket{\varphi}$ to get
	\begin{equation}
	\begin{aligned}
		\norm{\Pi_{\eta}X^\dagger\ket{\phi_\eta}}
		&=\max_{\ket{\varphi}}\abs{\bra{\varphi}\Pi_{\eta}X^\dagger\ket{\phi_\eta}}\\
		&=\max_{\ket{\varphi}}\norm{\Pi_{\eta}\ket{\varphi}}
		\abs{\frac{\bra{\varphi}\Pi_{\eta}}{\norm{\Pi_{\eta}\ket{\varphi}}}X^\dagger\ket{\phi_\eta}}\\
		&\leq\norm{X^\dagger}_{\eta}=\norm{X}_{\eta}
	\end{aligned}
	\end{equation}
	assuming $\Pi_{\eta}\ket{\varphi}\neq0$, as the case $\Pi_{\eta}\ket{\varphi}=0$ never leads to maximality. But on the other hand,
	\begin{equation}
	\begin{aligned}
		\norm{X}_{\eta}
		&=\norm{X^\dagger}_{\eta}
		=\max_{\ket{\phi_\eta},\ket{\varphi_\eta}}\abs{\bra{\varphi_\eta}X^\dagger\ket{\psi_\eta}}\\
		&=\max_{\ket{\phi_\eta},\ket{\varphi_\eta}}\abs{\bra{\varphi_\eta}\Pi_{\eta}X^\dagger\ket{\psi_\eta}}\\
		&\leq\max_{\ket{\phi_\eta},\ket{\varphi}}\abs{\bra{\varphi}\Pi_{\eta}X^\dagger\ket{\psi_\eta}}
		=\max_{\ket{\phi_\eta}}\norm{\Pi_{\eta}X^\dagger\ket{\phi_\eta}},
	\end{aligned}
	\end{equation}
	implying $\max_{\ket{\phi_\eta}}\norm{\Pi_{\eta}X^\dagger\ket{\phi_\eta}}=\norm{X}_{\eta}$. Similarly, we have $\max_{\ket{\psi_\eta}}\norm{\Pi_{\eta}Y\ket{\psi_\eta}}=\norm{Y}_{\eta}$. This completes the proof of the third statement.
\end{proof}

The fermionic seminorm, as defined in \eq{fermionic_seminorm_def_prelim} by the maximum transition amplitude within the $\eta$-electron manifold, provides a reasonable metric for quantifying the error of digital quantum simulation with initial-state constraints. Indeed, a seminorm similar to our definition was used by Somma \cite{Somma15} for analyzing quantum simulation of bosonic Hamiltonians. However, we point out that this is not the only error metric that takes the prior knowledge of the initial state into account. Recent work \cite{SS20} analyzed the low-energy simulation of $k$-local frustration-free Hamiltonians by computing the spectral norm of Trotter error projected on the low-energy subspace. However, the following proposition shows that these two error metrics are the same for fermionic systems.
\begin{proposition}[Fermionic seminorm as a projected spectral norm]
	For any number-preserving operator $X$, it holds that
	\begin{equation}
		\norm{X}_\eta=\max_{\ket{\psi_\eta},\ket{\phi_\eta}}\abs{\bra{\phi_\eta}X\ket{\psi_\eta}}
		=\norm{X\Pi_{\eta}}.
	\end{equation}
\end{proposition}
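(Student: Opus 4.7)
The plan is to show both inequalities $\norm{X\Pi_\eta}\leq\norm{X}_\eta$ and $\norm{X\Pi_\eta}\geq\norm{X}_\eta$ by exploiting the single key algebraic fact that, because $X$ is number-preserving, it commutes with $\Pi_\eta$, and hence
\[
X\Pi_\eta = \Pi_\eta X = \Pi_\eta X \Pi_\eta.
\]
Once this identity is in hand, the equivalence of the fermionic seminorm with a projected spectral norm follows from the usual variational characterizations of the spectral norm, so the work is really just to translate freely between ``unit vector in the whole Fock space'' and ``unit vector in the $\eta$-electron manifold.''

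First I would establish the direction $\norm{X\Pi_\eta}\geq\norm{X}_\eta$. Given any $\eta$-electron states $\ket{\psi_\eta}$ and $\ket{\phi_\eta}$, note that $\Pi_\eta\ket{\psi_\eta}=\ket{\psi_\eta}$, so
\[
\abs{\bra{\phi_\eta}X\ket{\psi_\eta}} = \abs{\bra{\phi_\eta}X\Pi_\eta\ket{\psi_\eta}} \leq \norm{X\Pi_\eta},
\]
using that $\ket{\phi_\eta}$ and $\ket{\psi_\eta}$ are unit vectors. Taking the maximum over $\ket{\psi_\eta},\ket{\phi_\eta}$ yields the claim.

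For the opposite direction $\norm{X\Pi_\eta}\leq\norm{X}_\eta$, I would write $\norm{X\Pi_\eta} = \max_{\ket{\phi},\ket{\psi}}\abs{\bra{\phi}X\Pi_\eta\ket{\psi}} = \max_{\ket{\phi},\ket{\psi}}\abs{\bra{\phi}\Pi_\eta X\Pi_\eta\ket{\psi}}$ using the commutation with $\Pi_\eta$. If the maximizers satisfy $\Pi_\eta\ket{\psi}=0$ or $\Pi_\eta\ket{\phi}=0$ then the quantity vanishes; otherwise I normalize by setting $\ket{\psi_\eta}:=\Pi_\eta\ket{\psi}/\norm{\Pi_\eta\ket{\psi}}$ and $\ket{\phi_\eta}:=\Pi_\eta\ket{\phi}/\norm{\Pi_\eta\ket{\phi}}$, which are bona fide $\eta$-electron states, so that
\[
\abs{\bra{\phi}\Pi_\eta X\Pi_\eta\ket{\psi}} = \norm{\Pi_\eta\ket{\phi}}\norm{\Pi_\eta\ket{\psi}}\abs{\bra{\phi_\eta}X\ket{\psi_\eta}} \leq \abs{\bra{\phi_\eta}X\ket{\psi_\eta}} \leq \norm{X}_\eta,
\]
since $\norm{\Pi_\eta\ket{\phi}},\norm{\Pi_\eta\ket{\psi}}\leq 1$. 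Combining the two directions gives equality.

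There is really no hard step here: the only thing to be slightly careful about is ensuring that the normalization step in the upper-bound direction is legitimate, which is guaranteed by the fact that a vanishing projection contributes zero to the maximum and can therefore be discarded. Everything else is bookkeeping with the variational definition of the spectral norm and the identity $X\Pi_\eta=\Pi_\eta X\Pi_\eta$.
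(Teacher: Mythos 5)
Your proof is correct and follows essentially the same route as the paper: both directions rest on the identity $X\Pi_\eta=\Pi_\eta X\Pi_\eta$ together with the normalization $\ket{\psi_\eta}=\Pi_\eta\ket{\psi}/\norm{\Pi_\eta\ket{\psi}}$ (discarding the degenerate case of vanishing projection), exactly as in the paper's argument. No gaps.
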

\begin{proof}
	The underlying idea behind this proposition is already hinted in the proof of \prop{fermionic_seminorm_property}. We have
	\begin{equation}
	\begin{aligned}
		\max_{\ket{\psi_\eta},\ket{\phi_\eta}}\abs{\bra{\phi_\eta}X\ket{\psi_\eta}}
		&=\max_{\ket{\psi_\eta},\ket{\phi_\eta}}\abs{\bra{\phi_\eta}\Pi_{\eta}X\Pi_{\eta}\ket{\psi_\eta}}\\
		&\leq\max_{\ket{\psi},\ket{\phi}}\abs{\bra{\phi}\Pi_{\eta}X\Pi_{\eta}\ket{\psi}}
		=\norm{\Pi_{\eta}X\Pi_{\eta}}.
	\end{aligned}
	\end{equation}
	But on the other hand,
	\begin{equation}
	\begin{aligned}
		\norm{\Pi_{\eta}X\Pi_{\eta}}
		&=\max_{\ket{\psi},\ket{\phi}}\abs{\bra{\phi}\Pi_{\eta}X\Pi_{\eta}\ket{\psi}}\\
		&=\max_{\ket{\psi},\ket{\phi}}\norm{\Pi_{\eta}\ket{\phi}}\norm{\Pi_{\eta}\ket{\psi}}
		\abs{\frac{\bra{\phi}\Pi_{\eta}}{\norm{\Pi_{\eta}\ket{\phi}}}X\frac{\Pi_{\eta}\ket{\psi}}{\norm{\Pi_{\eta}\ket{\psi}}}}\\
		&\leq\max_{\ket{\psi_\eta},\ket{\phi_\eta}}\abs{\bra{\phi_\eta}X\ket{\psi_\eta}}
	\end{aligned}
	\end{equation}
	assuming $\Pi_{\eta}\ket{\phi}\neq0$ and $\Pi_{\eta}\ket{\psi}\neq0$, as the zero vector will not lead to maximality. The proposition then follows since number-preserving operator $X$ commutes with the $\eta$-electron projection $\Pi_{\eta}$.
\end{proof}

Another common approach to quantify the error of digital quantum simulation is to take the maximum expectation $\max_{\ket{\psi_\eta}}\abs{\bra{\psi_\eta}\cdot\ket{\psi_\eta}}$ within the $\eta$-electron manifold. This approach is used by previous work \cite{BWMMNC18,Pou15,CBC20} and appears to give a natural metric when digital quantum simulation is used as a subroutine in phase estimation. We show that this only differs from our definition \eq{fermionic_seminorm_def_prelim} by at most a constant factor, reaffirming the fermionic seminorm as a proper error metric for simulating fermionic systems.
\begin{proposition}[Transition amplitude and expectation]
	\label{prop:transition_expectation}
	For any number-preserving operator $X$, the following statements hold:
	\begin{enumerate}
		\item $\max_{\ket{\psi_\eta},\ket{\phi_\eta}}\abs{\bra{\phi_\eta}X\ket{\psi_\eta}}
		=\max_{\ket{\psi_\eta}}\abs{\bra{\psi_\eta}X\ket{\psi_\eta}}$, if $X$ is Hermitian;
		\item $\max_{\ket{\psi_\eta},\ket{\phi_\eta}}\abs{\bra{\phi_\eta}X\ket{\psi_\eta}}
		=\max_{\ket{\psi_\eta}}\sqrt{\bra{\psi_\eta}X^\dagger X\ket{\psi_\eta}}$ \big(equivalently, $\norm{X^\dagger X}_\eta=\norm{X}_\eta^2$\big);
		\item $\max_{\ket{\psi_\eta}}\abs{\bra{\psi_\eta}X\ket{\psi_\eta}}
		\leq\max_{\ket{\psi_\eta},\ket{\phi_\eta}}\abs{\bra{\phi_\eta}X\ket{\psi_\eta}}
		\leq2\max_{\ket{\psi_\eta}}\abs{\bra{\psi_\eta}X\ket{\psi_\eta}}$.\footnotemark
	\end{enumerate}
\end{proposition}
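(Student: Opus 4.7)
The plan is to establish the three statements separately by reducing each to a standard Hilbert-space fact applied to the $\eta$-electron subspace, using the number-preserving property of $X$ to keep all auxiliary vectors inside that subspace. I would start with Statement 2, which is Cauchy-Schwarz plus the standard saturation argument. The upper bound $\abs{\bra{\phi_\eta}X\ket{\psi_\eta}}\le\sqrt{\bra{\psi_\eta}X^\dagger X\ket{\psi_\eta}}$ is immediate. For the matching lower bound I would fix a maximizer $\ket{\psi_\eta}$ on the right-hand side and set $\ket{\phi_\eta}:=X\ket{\psi_\eta}/\norm{X\ket{\psi_\eta}}$ (the case $X\ket{\psi_\eta}=0$ being trivial); because $X$ commutes with the particle-number operator, $X\ket{\psi_\eta}$ stays in the $\eta$-electron subspace, so this $\ket{\phi_\eta}$ is a legitimate $\eta$-electron state that saturates the bound.

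Statement 1 then follows quickly from the spectral theorem. When $X$ is Hermitian and number-preserving, the restriction $\Pi_\eta X\Pi_\eta$ is a Hermitian operator on the finite-dimensional $\eta$-electron subspace, and the previous proposition identifies the left-hand side with $\norm{\Pi_\eta X\Pi_\eta}$. A unit eigenvector $\ket{\psi_\eta}$ attached to an extremal eigenvalue of $\Pi_\eta X\Pi_\eta$ then realizes $\abs{\bra{\psi_\eta}X\ket{\psi_\eta}}=\norm{\Pi_\eta X\Pi_\eta}$, giving the nontrivial direction; the reverse inequality is immediate from the specialization $\ket{\phi_\eta}=\ket{\psi_\eta}$, which also yields the first inequality in Statement 3.

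For the factor-of-two upper bound in Statement 3 I would invoke the polarization identity
\begin{equation}
4\bra{\phi_\eta}X\ket{\psi_\eta}=\sum_{k=0}^{3}i^{-k}\bra{\phi_\eta+i^k\psi_\eta}X\ket{\phi_\eta+i^k\psi_\eta},
\end{equation}
noting that each combination $\ket{\phi_\eta}+i^k\ket{\psi_\eta}$ remains in the (linear) $\eta$-electron subspace and that a direct expansion gives $\sum_k\norm{\ket{\phi_\eta}+i^k\ket{\psi_\eta}}^2=8$. Bounding each polarization term by the maximum expectation of $X$ over normalized $\eta$-electron states, times the squared norm of the corresponding vector, and applying the triangle inequality, then yields the factor of $2$. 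The one delicate point throughout is ensuring every auxiliary state I construct still belongs to the $\eta$-electron manifold, but this is automatic from linearity of the subspace and from $X$ being number-preserving, so I do not anticipate any serious obstacle---the proposition is essentially the restriction to the $\eta$-electron subspace of well-known facts relating numerical radius, operator norm, and Cauchy-Schwarz on Hilbert space.
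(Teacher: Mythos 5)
Your proposal is correct and follows essentially the same route as the paper: Statement 1 via the spectral theorem applied to $\Pi_\eta X\Pi_\eta$, Statement 2 via Cauchy--Schwarz with saturation, and Statement 3 via the polarization identity with the total squared norm $8$ giving the factor of $2$. The only cosmetic difference is in Statement 2, where you construct the saturating state $\ket{\phi_\eta}=X\ket{\psi_\eta}/\norm{X\ket{\psi_\eta}}$ explicitly while the paper invokes the operator identity $\norm{X\Pi_\eta}=\sqrt{\norm{\Pi_\eta X^\dagger X\Pi_\eta}}$; these are the same fact.
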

\footnotetext{Note that the second inequality is tight by considering $X=A_0^\dagger A_1$ on a fermionic system with two spin orbitals and one electron.}
\begin{proof}
	The first statement follows from the fact that $\Pi_{\eta}X\Pi_{\eta}$ is Hermitian and that the spectral norm of a Hermitian operator is its largest eigenvalue in absolute value. For the second statement,
	\begin{equation}
		\max_{\ket{\psi_\eta},\ket{\phi_\eta}}\abs{\bra{\phi_\eta}X\ket{\psi_\eta}}
		=\norm{X}_{\eta}
		=\norm{X\Pi_\eta}
		=\sqrt{\norm{\Pi_\eta X^\dagger X\Pi_\eta}}
		=\max_{\ket{\psi_\eta}}\sqrt{\bra{\psi_\eta}X^\dagger X\ket{\psi_\eta}}.
	\end{equation}
	The first inequality of Statement 3 is trivial. For the second inequality, we apply the polarization identity
	\begin{equation}
	\begin{aligned}
		\bra{\phi_\eta}X\ket{\psi_\eta}
		&=\frac{1}{4}\Big(\left(\bra{\phi_\eta}+\bra{\psi_\eta}\right)X\left(\ket{\phi_\eta}+\ket{\psi_\eta}\right)
		-\left(\bra{\phi_\eta}-\bra{\psi_\eta}\right)X\left(\ket{\phi_\eta}-\ket{\psi_\eta}\right)\\
		&\qquad\quad-i\left(\bra{\phi_\eta}-i\bra{\psi_\eta}\right)X\left(\ket{\phi_\eta}+i\ket{\psi_\eta}\right)
		+i\left(\bra{\phi_\eta}+i\bra{\psi_\eta}\right)X\left(\ket{\phi_\eta}-i\ket{\psi_\eta}\right)\Big)
	\end{aligned}
	\end{equation}
	to obtain
	\begin{equation}
	\begin{aligned}
		&\abs{\bra{\phi_\eta}X\ket{\psi_\eta}}\\
		&\leq\frac{\max_{\ket{\varphi_\eta}}\abs{\bra{\varphi_\eta}X\ket{\varphi_\eta}}}{4}\left(\norm{\ket{\phi_\eta}+\ket{\psi_\eta}}^2
		+\norm{\ket{\phi_\eta}-\ket{\psi_\eta}}^2
		+\norm{\ket{\phi_\eta}+i\ket{\psi_\eta}}^2
		+\norm{\ket{\phi_\eta}-i\ket{\psi_\eta}}^2\right)\\
		&=2\max_{\ket{\varphi_\eta}}\abs{\bra{\varphi_\eta}X\ket{\varphi_\eta}},
	\end{aligned}
	\end{equation}
	from which the claimed inequality follows by maximizing over states $\ket{\psi_\eta}$ and $\ket{\phi_\eta}$.
\end{proof}

We now apply \prop{trotter_error_comm_rep} to compute the fermionic seminorm of Trotter error, obtaining
\begin{equation}
\label{eq:pf2k_eta}
\norm{\mathscr{S}_p(t)-e^{-itH}}_{\eta}=\cO{\max_{\pmb{\gamma}}\norm{\left[H_{\gamma_{p+1}},\cdots\left[H_{\gamma_2},H_{\gamma_1}\right]\right]}_{\eta}t^{p+1}}.
\end{equation}
We find that the resulting error bound depends on the fermionic seminorm of nested commutators, and the performance of digital quantum simulation can thus be potentially improved by simultaneously exploiting commutativity of the Hamiltonian and prior knowledge of the initial state. However, the main difficulty here is to give a tight estimate of $\norm{\left[H_{\gamma_{p+1}},\cdots\left[H_{\gamma_2},H_{\gamma_1}\right]\right]}_{\eta}$, which seems technically challenging to address. To this end, we develop two approaches for bounding the expectation/transition amplitude of general fermionic operators in \sec{recurse} and \sec{path} and prove our main result \thm{fermionic_seminorm}, establish the tightness of our bound in \sec{tight}, and discuss applications and further implications of our result in \sec{app} and \sec{discuss}.

\section{Recursive bound on the expectation of fermionic operators}
\label{sec:recurse}

In this section, we present our first approach for bounding the expectation of fermionic operators, and thereby bounding the fermionic seminorm of Trotter error. We introduce in \sec{recurse_tech} the main techniques used in our approach, including an operator Cauchy-Schwarz inequality, a diagonalization procedure, and a H\"{o}lder-type inequality for the expectation value. We then describe our approach in detail and apply it to prove Eq.\ \eq{general_bound} of our main result \thm{fermionic_seminorm}. The proof is based on induction: we analyze the base case in \sec{recurse_single} and the inductive step in \sec{recurse_multi}, respectively.

\subsection{Main techniques}
\label{sec:recurse_tech}
Recall that the main technical challenge to estimate the simulation error of the electronic Hamiltonian \eq{second_quantized_ham} is to bound the fermionic seminorm $\norm{\left[H_{\gamma_{p+1}},\cdots\left[H_{\gamma_2},H_{\gamma_1}\right]\right]}_{\eta}$, where $\gamma_j=0,1$, $H_0=V$ and $H_1=T$. Applying the commutation relations in \prop{fermionic_commutation}, we see that we need to analyze a general fermionic operator of the form
\begin{equation}
X=\sum_{\pmb{j},\pmb{k},\pmb{l}}w_{\pmb{j},\pmb{k},\pmb{l}}\
\cdots A_{j_x}^\dagger \cdots N_{l_z}\cdots A_{k_y}\cdots
\end{equation}

Our first approach starts by reexpressing the fermionic seminorm of $X$ using the expectation of $X^\dagger X$:
\begin{equation}
\norm{X}_{\eta}=\max_{\ket{\psi_\eta}}\sqrt{\bra{\psi_\eta}X^\dagger X\ket{\psi_\eta}}.
\end{equation}
We note that $X^\dagger X$ is a positive semidefinite operator, and an upper bound of it with respect to the partial ordering of positive semidefiniteness will therefore give a bound on the expectation value. We achieve this by contracting the corresponding indices in $X$ and $X^\dagger$, using either an operator Cauchy-Schwarz inequality (\lem{cauchy}) or diagonalization (\lem{diagonalization}).

\begin{lemma}[Operator Cauchy-Schwarz inequality {\cite[Proposition 3.4]{Otte10}}]
	\label{lem:cauchy}
	For any finite lists of operators $\{B_j\}$ and $\{C_j\}$ with the same cardinality, we have
	\begin{equation}
	-\sum_{j,k}B_j^\dagger C_k^\dagger C_kB_j
	\leq \sum_{j,k}B_j^\dagger C_k^\dagger C_jB_k
	\leq \sum_{j,k}B_j^\dagger C_k^\dagger C_kB_j,
	\end{equation}
	where Hermitian operators are partially ordered according to the positive semidefiniteness.
\end{lemma}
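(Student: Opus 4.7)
The plan is to derive both inequalities from the elementary fact that $O^\dagger O \ge 0$ for any operator $O$, specialized to the antisymmetric and symmetric combinations of the products $C_j B_k$. The asymmetry between the pairings $(j,k)\mapsto C_j B_k$ versus $(j,k)\mapsto C_k B_j$ is exactly what generates cross terms matching the middle sum in the statement, so this choice of test operator is essentially forced by the form of the desired inequality.

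For the upper bound, I would start from the manifestly positive operator
\[
\sum_{j,k}(C_j B_k - C_k B_j)^\dagger (C_j B_k - C_k B_j) \ge 0
\]
and expand it into four terms: two ``diagonal'' pieces of the form $B_k^\dagger C_j^\dagger C_j B_k$ and $B_j^\dagger C_k^\dagger C_k B_j$, and two ``cross'' pieces $-B_k^\dagger C_j^\dagger C_k B_j$ and $-B_j^\dagger C_k^\dagger C_j B_k$. The key reduction is to relabel the dummy indices $j\leftrightarrow k$ in the outer double sum: both diagonal pieces then contribute the same $\sum_{j,k} B_j^\dagger C_k^\dagger C_k B_j$, while both cross pieces contribute the same $-\sum_{j,k} B_j^\dagger C_k^\dagger C_j B_k$. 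Dividing by $2$ yields the upper bound.

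The lower bound is handled symmetrically by expanding
\[
\sum_{j,k}(C_j B_k + C_k B_j)^\dagger (C_j B_k + C_k B_j) \ge 0
\]
and applying the same relabeling trick, which reduces it to $\sum_{j,k} B_j^\dagger C_k^\dagger C_k B_j + \sum_{j,k} B_j^\dagger C_k^\dagger C_j B_k \ge 0$, i.e.\ the claimed lower bound. I expect the only nontrivial bookkeeping to be confirming that the middle expression $\sum_{j,k} B_j^\dagger C_k^\dagger C_j B_k$ is actually Hermitian, so that the two-sided inequality in the partial ordering of positive semidefiniteness is meaningful; this follows by taking the adjoint of a single summand and again relabeling $j\leftrightarrow k$. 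Once the Hermiticity and index-swapping are pinned down, the proof is a routine expansion and collection of like terms, with no analytical subtleties.
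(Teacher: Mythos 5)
Your proposal is correct and follows essentially the same route as the paper: expand the manifestly positive operators $\sum_{j,k}(C_kB_j\mp C_jB_k)^\dagger(C_kB_j\mp C_jB_k)$ and collapse the four resulting terms by relabeling the dummy indices $j\leftrightarrow k$, yielding both bounds. The extra remark on the Hermiticity of the middle sum is a harmless addition the paper leaves implicit.
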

\begin{proof}
	We have
	\begin{equation}
	\begin{aligned}
	0&\leq\sum_{j,k}\left(C_kB_j\mp C_jB_k\right)^\dagger\left(C_kB_j\mp C_jB_k\right)\\
	&=\sum_{j,k}\left(B_j^\dagger C_k^\dagger C_kB_j
	\mp B_k^\dagger C_j^\dagger C_kB_j
	\mp B_j^\dagger C_k^\dagger C_jB_k
	+B_k^\dagger C_j^\dagger C_jB_k\right)\\
	&=2\sum_{j,k}B_j^\dagger C_k^\dagger C_kB_j
	\mp 2\sum_{j,k}B_j^\dagger C_k^\dagger C_jB_k.
	\end{aligned}
	\end{equation}
	This implies
	\begin{equation}
	\pm\sum_{j,k}B_j^\dagger C_k^\dagger C_jB_k \leq \sum_{jk}B_j^\dagger C_k^\dagger C_kB_j,
	\end{equation}
	from which the claimed inequality follows.
\end{proof}

\begin{lemma}[Diagonalization]
	\label{lem:diagonalization}
	For any finite list of operators $\{B_j\}$ and Hermitian coefficient matrix $\mu$, we have
	\begin{equation}
	-\norm{\mu}\sum_jB_j^\dagger B_j
	\leq\sum_{j,k}\mu_{j,k}B_j^\dagger B_k
	\leq\norm{\mu}\sum_jB_j^\dagger B_j,
	\end{equation}
	where Hermitian operators are partially ordered according to the positive semidefiniteness.
\end{lemma}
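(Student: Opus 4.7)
The plan is to reduce the mixed sum $\sum_{j,k}\mu_{j,k}B_j^\dagger B_k$ to a non-negative combination of terms of the form $C_l^\dagger C_l$ by diagonalizing the coefficient matrix $\mu$. Since $\mu$ is Hermitian, I would invoke the spectral theorem to write $\mu_{j,k}=\sum_l U_{j,l}\lambda_l U_{k,l}^{*}$, where $U$ is unitary and the $\lambda_l$ are the (real) eigenvalues of $\mu$, so that $\max_l |\lambda_l|=\|\mu\|$.

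Next I would define the new operators $C_l:=\sum_j U_{j,l}^{*}B_j$, which are linear combinations of the original $B_j$'s. Substituting the spectral decomposition into the mixed sum and rearranging gives
\begin{equation}
\sum_{j,k}\mu_{j,k}B_j^\dagger B_k=\sum_l \lambda_l\,C_l^\dagger C_l.
\end{equation}
Because each $C_l^\dagger C_l$ is positive semidefinite, and $-\|\mu\|\le \lambda_l\le \|\mu\|$, this immediately yields $-\|\mu\|\sum_l C_l^\dagger C_l\le \sum_{j,k}\mu_{j,k}B_j^\dagger B_k\le \|\mu\|\sum_l C_l^\dagger C_l$ in the sense of positive semidefiniteness.

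The last step is to verify the orthogonality identity $\sum_l C_l^\dagger C_l=\sum_j B_j^\dagger B_j$, which follows from $\sum_l U_{j,l}U_{k,l}^{*}=(UU^\dagger)_{j,k}=\delta_{j,k}$. Combining this with the previous display gives the stated two-sided bound. No step here is genuinely hard; the only point that requires any care is keeping the complex conjugations consistent so that the $C_l$'s are really the Hermitian adjoints of $\sum_j U_{j,l}B_j^\dagger$, and that the closure relation from unitarity of $U$ really collapses $\sum_l C_l^\dagger C_l$ back to $\sum_j B_j^\dagger B_j$ without any residual coefficients.
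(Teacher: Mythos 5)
Your proof is correct and follows essentially the same route as the paper: diagonalize the Hermitian matrix $\mu$ by a unitary, rewrite the quadratic form as $\sum_l\lambda_l C_l^\dagger C_l$ with $C_l$ the transformed operators, bound the eigenvalues by $\pm\norm{\mu}$, and use unitarity to collapse $\sum_l C_l^\dagger C_l$ back to $\sum_j B_j^\dagger B_j$. The conjugation conventions you chose are consistent, so no gap remains.
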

\begin{proof}
	Since $\mu$ is Hermitian, we may diagonalize it to $\widetilde{\mu}$ by unitary transformation $w$ as
	\begin{equation}
	\mu=w^\dagger\widetilde{\mu}w,
	\end{equation}
	where $\widetilde{\mu}$ is a diagonal matrix with all eigenvalues of $\mu$ as the diagonal elements. We then define $\widetilde{B}_l:=\sum_{k}w_{l,k}B_k$ so that
	\begin{equation}
	\sum_{j,k}\mu_{j,k}B_j^\dagger B_k
	=\sum_{l}\widetilde{\mu}_l\widetilde{B}_l^\dagger\widetilde{B}_l,
	\end{equation}
	which implies
	\begin{equation}
	-\norm{\mu}\sum_{l}\widetilde{B}_l^\dagger\widetilde{B}_l
	\leq\sum_{j,k}\mu_{j,k}B_j^\dagger B_k
	\leq\norm{\mu}\sum_{l}\widetilde{B}_l^\dagger\widetilde{B}_l.
	\end{equation}
	But $\sum_{l}\widetilde{B}_l^\dagger\widetilde{B}_l$ has identity as the coefficient matrix which is invariant under a change of basis:
	\begin{equation}
	\sum_{l}\widetilde{B}_l^\dagger\widetilde{B}_l
	=\sum_jB_j^\dagger B_j.
	\end{equation}
	This completes the proof.
\end{proof}

By applying \lem{cauchy} or \lem{diagonalization}, we can get a bound of $X^\dagger X$ with respect to the partial ordering of positive semidefiniteness, with one pair of the corresponding indices in $X$ and $X^\dagger$ contracted. Indeed, these techniques were used by Otte to establish the boundedness of quadratic fermionic operators in infinite-dimensional Hilbert spaces \cite{Otte10}. However, the difficulty here is that we need to handle more complex products of fermionic operators in the Trotter error estimate. To this end, we prove a H\"{o}lder-type inequality for the expectation value, which allows us to apply \lem{cauchy} and \lem{diagonalization} recursively to get a desired bound.
\begin{lemma}[H\"{o}lder-type inequality for expectation]
	\label{lem:fermionic_holder}
	For any finite lists of fermionic operators $\{B_j\}$ and $\{C_k\}$ with the same cardinality,
	\begin{equation}
	\max_{\ket{\psi_\eta}}\bra{\psi_\eta}\sum_{j}B_j^\dagger C_j^\dagger C_jB_j\ket{\psi_\eta}
	\leq\max_{\ket{\psi_\eta}}\bra{\psi_\eta}\sum_{j}B_j^\dagger B_j\ket{\psi_\eta}
	\max_{k,\ket{\phi_\xi}}\bra{\phi_\xi}C_k^\dagger C_k\ket{\phi_\xi},
	\end{equation}
	where we assume $B_j$ map the $\eta$-electron subspace to the $\xi$-electron subspace and $C_j$ are number-preserving. In terms of the fermionic seminorm, we have
	\begin{equation}
	\Big\Vert\sum_{j}B_j^\dagger C_j^\dagger C_jB_j\Big\Vert_{\eta}
	\leq\Big\Vert\sum_{j}B_j^\dagger B_j\Big\Vert_{\eta}
	\max_k\norm{C_k^\dagger C_k}_{\xi}.
	\end{equation}
\end{lemma}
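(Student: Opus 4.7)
The plan is to bound the left-hand side expectation term-by-term. Fix any state $\ket{\psi_\eta}$ in the $\eta$-electron manifold and, for each index $j$, rewrite
\[
\bra{\psi_\eta} B_j^\dagger C_j^\dagger C_j B_j \ket{\psi_\eta} = \norm{C_j B_j \ket{\psi_\eta}}^2,
\]
introducing the (possibly unnormalized) intermediate vector $\ket{\chi_j} := B_j \ket{\psi_\eta}$. By the hypothesis that each $B_j$ maps the $\eta$-electron subspace into the $\xi$-electron subspace, $\ket{\chi_j}$ lies entirely inside the $\xi$-electron subspace; the case $\ket{\chi_j} = 0$ contributes trivially and can be discarded.

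Since $C_j^\dagger C_j$ is positive semidefinite and $\ket{\chi_j}/\norm{\ket{\chi_j}}$ (when nonzero) is a state in the $\xi$-electron manifold, a Rayleigh-quotient bound yields
\[
\norm{C_j \ket{\chi_j}}^2 \leq \norm{\ket{\chi_j}}^2 \max_{\ket{\phi_\xi}} \bra{\phi_\xi} C_j^\dagger C_j \ket{\phi_\xi} \leq \norm{B_j \ket{\psi_\eta}}^2 \max_{k,\ket{\phi_\xi}} \bra{\phi_\xi} C_k^\dagger C_k \ket{\phi_\xi}.
\]
Summing over $j$, using $\sum_j \norm{B_j \ket{\psi_\eta}}^2 = \bra{\psi_\eta} \sum_j B_j^\dagger B_j \ket{\psi_\eta}$, and then maximizing over $\ket{\psi_\eta}$ on both sides yields the first inequality of the lemma.

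For the seminorm reformulation, note that $\sum_j B_j^\dagger B_j$ and $\sum_j B_j^\dagger C_j^\dagger C_j B_j$ are both positive semidefinite and number-preserving (on the $\eta$-electron subspace). For any such operator $P$, writing $P = R^\dagger R$ with number-preserving $R$ and applying statement~2 of \prop{transition_expectation} gives $\norm{P}_\eta = \norm{R}_\eta^2 = \max_{\ket{\psi_\eta}} \bra{\psi_\eta} P \ket{\psi_\eta}$; the same identity converts $\max_{\ket{\phi_\xi}} \bra{\phi_\xi} C_k^\dagger C_k \ket{\phi_\xi}$ into $\norm{C_k^\dagger C_k}_\xi$. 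The only real subtlety is the opening observation that each $\ket{\chi_j}$ is confined to the $\xi$-electron manifold, since it is precisely this confinement that allows the bound on $C_j^\dagger C_j$ to be taken over $\xi$-electron states rather than the whole Fock space, which is what makes this H\"{o}lder-type inequality useful for the inductive argument of \sec{recurse}.
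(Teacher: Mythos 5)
Your proof is correct and follows essentially the same route as the paper's: your Rayleigh-quotient bound on the $\xi$-electron vector $B_j\ket{\psi_\eta}$ is the state-level version of the paper's insertion of projections $\Pi_\xi$ and the operator inequality $\Pi_{\xi}C_j^\dagger C_j\Pi_{\xi}\leq\norm{C_j^\dagger C_j}_{\xi}\,I$, after which both arguments pull the maximum over $k$ out of the sum and convert expectations to seminorms via the positive semidefiniteness of the operators involved.
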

\begin{proof}
	The claimed inequality follows from
	\begin{equation}
	\begin{aligned}
	\Big\Vert\sum_{j}B_j^\dagger C_j^\dagger C_jB_j\Big\Vert_{\eta}
	&=\Big\Vert\sum_{j}B_j^\dagger \Pi_{\xi}C_j^\dagger C_j\Pi_{\xi}B_j\Big\Vert_{\eta}\\
	&\leq\bigg\Vert\sum_{j}\norm{\Pi_{\xi}C_j^\dagger C_j\Pi_{\xi}}B_j^\dagger B_j\bigg\Vert_{\eta}\\
	&\leq\Big\Vert\sum_{j}B_j^\dagger B_j\Big\Vert_{\eta}
	\max_k\norm{\Pi_{\xi}C_k^\dagger C_k\Pi_{\xi}}\\
	&=\Big\Vert\sum_{j}B_j^\dagger B_j\Big\Vert_{\eta}
	\max_k\norm{C_k^\dagger C_k}_{\xi}.
	\end{aligned}
	\end{equation}
\end{proof}

Using the above lemmas, we can now prove Eq.\ \eq{general_bound} of our main result \thm{fermionic_seminorm} by induction. We analyze the base case in \sec{recurse_single} and the inductive step in \sec{recurse_multi}.

\subsection{Single-layer commutator}
\label{sec:recurse_single}
We now prove Eq.\ \eq{general_bound} of our main result \thm{fermionic_seminorm} by induction. In the base case, we consider simulating the interacting-electronic Hamiltonian \eq{second_quantized_ham} using the first-order formula $\mathscr{S}_1(t)$. We know from \eq{pf1_error} that
\begin{equation}
	\norm{\mathscr{S}_1(t)-e^{-itH}}_{\eta}\leq\frac{t^2}{2}\norm{\left[T,V\right]}_{\eta},
\end{equation}
where $T=\sum_{j,k}\tau_{j,k}A_j^\dagger A_k$ and $V=\sum_{l,m}\nu_{l,m}N_lN_m$. Our goal is to show that
\begin{equation}
	\norm{\left[T,V\right]}_{\eta}=\cO{\norm{\tau}\norm{\nu}_{\max}\eta^2}.
\end{equation}

To this end, we apply \prop{fermionic_commutation} to expand the single-layer commutator $\left[T,V\right]$ into linear combinations of fermionic creation, annihilation, and occupation-number operators. We have
\begin{equation}
\begin{aligned}
	\left[T,V\right]
	&=\sum_{j,k,l,m}\tau_{j,k}\nu_{l,m}\left[A_j^\dagger A_k,N_lN_m\right]\\
	&=\sum_{j,k,l,m}\tau_{j,k}\nu_{l,m}A_j^\dagger\left[ A_k,N_lN_m\right]+\sum_{j,k,l,m}\tau_{j,k}\nu_{l,m}\left[A_j^\dagger,N_lN_m\right]A_k\\
	&=\sum_{j,k,m}\tau_{j,k}\nu_{k,m}A_j^\dagger A_kN_m
	+\sum_{j,k,l}\tau_{j,k}\nu_{l,k}A_j^\dagger N_lA_k\\
	&\quad -\sum_{j,k,m}\tau_{j,k}\nu_{j,m}A_j^\dagger N_mA_k
	-\sum_{j,k,l}\tau_{j,k}\nu_{l,j}N_lA_j^\dagger A_k.
\end{aligned}
\end{equation}
At this stage, it is possible to directly bound the terms in the last equality using \lem{cauchy}, \lem{diagonalization}, and \lem{fermionic_holder} from the previous subsection. However, we will further commute the occupation-number operator in between the creation and annihilation operators, obtaining
\begin{equation}
\begin{aligned}
	\left[T,V\right]
	&=\sum_{j,k,m}\tau_{j,k}\nu_{k,m}A_j^\dagger N_mA_k
	+\sum_{j,k}\tau_{j,k}\nu_{k,k}A_j^\dagger A_k
	+\sum_{j,k,l}\tau_{j,k}\nu_{l,k}A_j^\dagger N_lA_k\\
	&\quad -\sum_{j,k,m}\tau_{j,k}\nu_{j,m}A_j^\dagger N_mA_k
	-\sum_{j,k}\tau_{j,k}\nu_{j,j}A_j^\dagger A_k
	-\sum_{j,k,l}\tau_{j,k}\nu_{l,j}A_j^\dagger N_l A_k.
\end{aligned}
\end{equation}
This additional commutation leads to an error bound with the same asymptotic scaling but a slightly larger prefactor. The benefit is that the analysis can be directly extended to handle the inductive step in the next subsection.

\begin{proposition}[Structure of single-layer commutator]
	\label{prop:singlelayer_structure}
	Let $H=T+V=\sum_{j,k}\tau_{j,k}A_j^\dagger A_k+\sum_{l,m}\nu_{l,m}N_lN_m$ be an interacting-electronic Hamiltonian \eq{second_quantized_ham}. Then, the commutator $[T,V]$ has the expansion
	\begin{equation}
	\label{eq:singlelayer_expand}
	\begin{aligned}
	\left[T,V\right]
	&=\sum_{j,k,m}\tau_{j,k}\nu_{k,m}A_j^\dagger N_mA_k
	+\sum_{j,k}\tau_{j,k}\nu_{k,k}A_j^\dagger A_k
	+\sum_{j,k,l}\tau_{j,k}\nu_{l,k}A_j^\dagger N_lA_k\\
	&\quad -\sum_{j,k,m}\tau_{j,k}\nu_{j,m}A_j^\dagger N_mA_k
	-\sum_{j,k}\tau_{j,k}\nu_{j,j}A_j^\dagger A_k
	-\sum_{j,k,l}\tau_{j,k}\nu_{l,j}A_j^\dagger N_l A_k.
	\end{aligned}
	\end{equation}
\end{proposition}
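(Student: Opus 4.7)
The plan is to derive the identity by a direct calculation using the bilinearity of the commutator together with the elementary relations collected in \prop{fermionic_commutation}. First I would write $\left[T,V\right]=\sum_{j,k,l,m}\tau_{j,k}\nu_{l,m}\left[A_j^\dagger A_k, N_l N_m\right]$ and then apply the Leibniz rule twice, once on the left argument to split $\left[A_j^\dagger A_k, N_l N_m\right]=A_j^\dagger \left[A_k, N_l N_m\right]+\left[A_j^\dagger, N_l N_m\right] A_k$, and then on the right argument to reduce everything to commutators of $A_j^\dagger$ and $A_k$ with the single occupation operators $N_l$ and $N_m$.

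Next I would substitute the elementary relations $\left[N_l, A_j^\dagger\right]=\delta_{l,j}A_j^\dagger$ and $\left[N_l, A_k\right]=-\delta_{l,k}A_k$ from \prop{fermionic_commutation}. This yields four sums in which one of the indices $l$ or $m$ has been contracted against $j$ or $k$ by a Kronecker delta, leaving terms of the form $A_j^\dagger A_k N_m$, $A_j^\dagger N_l A_k$, $N_m A_j^\dagger A_k$, and $N_l A_j^\dagger A_k$ with the appropriate signs and coefficient contractions $\tau_{j,k}\nu_{k,m}$, $\tau_{j,k}\nu_{l,k}$, $\tau_{j,k}\nu_{j,m}$, $\tau_{j,k}\nu_{l,j}$. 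At this point the statement of the proposition could essentially be read off, except that the occupation operators are not yet sandwiched between $A_j^\dagger$ and $A_k$ in every term.

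Finally I would move each stray $N_m$ or $N_l$ to sit between $A_j^\dagger$ and $A_k$. For the term $\sum_{j,k,m}\tau_{j,k}\nu_{k,m}A_j^\dagger A_k N_m$, I would use $A_k N_m=N_m A_k+\delta_{k,m}A_k$, which produces $\sum_{j,k,m}\tau_{j,k}\nu_{k,m}A_j^\dagger N_m A_k+\sum_{j,k}\tau_{j,k}\nu_{k,k}A_j^\dagger A_k$; this matches the first two terms on the right-hand side of \eq{singlelayer_expand}. The analogous reordering of $N_l A_j^\dagger A_k$ using $N_l A_j^\dagger=A_j^\dagger N_l+\delta_{l,j}A_j^\dagger$ produces the last two terms, and the remaining two already have the desired form.

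The main obstacle is not conceptual but purely bookkeeping: keeping the signs, the Kronecker-delta contractions, and the pairing of summation indices consistent when the Leibniz rule is applied twice and the occupation operators are commuted into the middle. Since the calculation never uses anything beyond \prop{fermionic_commutation}, the only care required is to verify that the two ``diagonal'' single-sum terms $\sum_{j,k}\tau_{j,k}\nu_{k,k}A_j^\dagger A_k$ and $-\sum_{j,k}\tau_{j,k}\nu_{j,j}A_j^\dagger A_k$ arising from the $\delta_{k,m}$ and $\delta_{l,j}$ contractions appear with the correct signs, which is a straightforward check.
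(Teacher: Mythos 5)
Your proposal is correct and follows essentially the same route as the paper: expand $[T,V]$ by the Leibniz rule, contract indices via the Kronecker deltas from \prop{fermionic_commutation}, and then commute the stray occupation-number operators into the middle, which generates the two diagonal terms $\sum_{j,k}\tau_{j,k}\nu_{k,k}A_j^\dagger A_k$ and $-\sum_{j,k}\tau_{j,k}\nu_{j,j}A_j^\dagger A_k$ with the right signs. One small slip in your intermediate listing: the term with coefficient $-\tau_{j,k}\nu_{j,m}$ already emerges as $A_j^\dagger N_m A_k$ rather than $N_m A_j^\dagger A_k$, which is consistent with your later (correct) assertion that only two of the four terms require reordering.
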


It is worth noting that the above six terms from the expansion of $[T,V]$ share a similar structure. Specifically, they all consist of creation operator $A_j^\dagger$, annihilation operator $A_k$, and (possibly) occupation-number operator $N_l$, with one coefficient matrix $\tau$ and one matrix $\nu$. The main difference between these terms is that the coefficient matrix $\nu$ is acting on different indices. See \fig{single_fermionic_chain} for a graph illustration of this structure. 

\begin{figure}[t]
	\centering
	\begin{subfigure}[t]{0.3\linewidth}
		\centering
		\begin{tikzpicture}
		\tikzstyle{hidden}=[draw=orange!75,fill=orange!50]
		
		\draw (1, 3) node[circle, black, draw](j){$j$};
		\draw (3, 3) node[circle, black, draw](k){$k$};
		\draw (3, 1) node[circle, black, draw](m){$m$};
		
		\draw[-] (j) -- node[above] {$\tau$} (k);
		\draw[-] (k) -- node[right] {$\nu$} (m);
		
		\end{tikzpicture}
		\caption{$\sum_{j,k,m}\tau_{j,k}\nu_{k,m}A_j^\dagger N_mA_k$}
	\end{subfigure}
	\begin{subfigure}[t]{0.3\linewidth}
		\centering
		\begin{tikzpicture}
		\tikzstyle{hidden}=[draw=orange!75,fill=orange!50]
		
		\draw (1, 1) node[circle, black, draw](j){$j$};
		\draw (3, 1) node[circle, black, draw](k){$k$};
		
		\draw[-] (j) -- node[above] {$\tau$} (k);
		\path[every loop/.style={in=300,out=240,looseness=8}] (k) edge [loop below] node {$\nu$} (k);
		
		\end{tikzpicture}
		\caption{$\sum_{j,k}\tau_{j,k}\nu_{k,k}A_j^\dagger A_k$}
	\end{subfigure}
	\begin{subfigure}[t]{0.3\linewidth}
		\centering
		\begin{tikzpicture}
		\tikzstyle{hidden}=[draw=orange!75,fill=orange!50]
		
		\draw (1, 3) node[circle, black, draw](j){$j$};
		\draw (3, 3) node[circle, black, draw](k){$k$};
		\draw (3, 1) node[circle, black, draw](l){$l$};
		
		\draw[-] (j) -- node[above] {$\tau$} (k);
		\draw[-] (k) -- node[right] {$\nu$} (l);
		
		\end{tikzpicture}
		\caption{$\sum_{j,k,l}\tau_{j,k}\nu_{l,k}A_j^\dagger N_lA_k$}
	\end{subfigure}
	\\\vspace*{0.5cm}
	\begin{subfigure}[t]{0.3\linewidth}
		\centering
		\begin{tikzpicture}
		\tikzstyle{hidden}=[draw=orange!75,fill=orange!50]
		
		\draw (1, 3) node[circle, black, draw](j){$j$};
		\draw (3, 3) node[circle, black, draw](k){$k$};
		\draw (1, 1) node[circle, black, draw](m){$m$};
		
		\draw[-] (j) -- node[above] {$\tau$} (k);
		\draw[-] (j) -- node[right] {$\nu$} (m);
		
		\end{tikzpicture}
		\caption{$\sum_{j,k,m}\tau_{j,k}\nu_{j,m}A_j^\dagger N_mA_k$}
	\end{subfigure}
	\begin{subfigure}[t]{0.3\linewidth}
		\centering
		\begin{tikzpicture}
		\tikzstyle{hidden}=[draw=orange!75,fill=orange!50]
		
		\draw (1, 1) node[circle, black, draw](j){$j$};
		\draw (3, 1) node[circle, black, draw](k){$k$};
		
		\draw[-] (j) -- node[above] {$\tau$} (k);
		\path[every loop/.style={in=300,out=240,looseness=8}] (j) edge [loop below] node {$\nu$} (j);
		
		\end{tikzpicture}
		\caption{$\sum_{j,k}\tau_{j,k}\nu_{j,j}A_j^\dagger A_k$}
	\end{subfigure}
	\begin{subfigure}[t]{0.3\linewidth}
		\centering
		\begin{tikzpicture}
		\tikzstyle{hidden}=[draw=orange!75,fill=orange!50]
		
		\draw (1, 3) node[circle, black, draw](j){$j$};
		\draw (3, 3) node[circle, black, draw](k){$k$};
		\draw (1, 1) node[circle, black, draw](l){$l$};
		
		\draw[-] (j) -- node[above] {$\tau$} (k);
		\draw[-] (j) -- node[right] {$\nu$} (l);
		
		\end{tikzpicture}
		\caption{$\sum_{j,k,l}\tau_{j,k}\nu_{l,j}A_j^\dagger N_l A_k$}
	\end{subfigure}
	\caption{Graph illustration of the expansion terms from the single-layer commutator $[T,V]$. Here, vertices in the graph denote the indices in the summation and edges represent the coefficients. Note that the graphs can be made directional so that they are one-to-one corresponding to fermionic operators, although this is not needed in our analysis and will not be further pursued here.}
	\label{fig:single_fermionic_chain}
\end{figure}
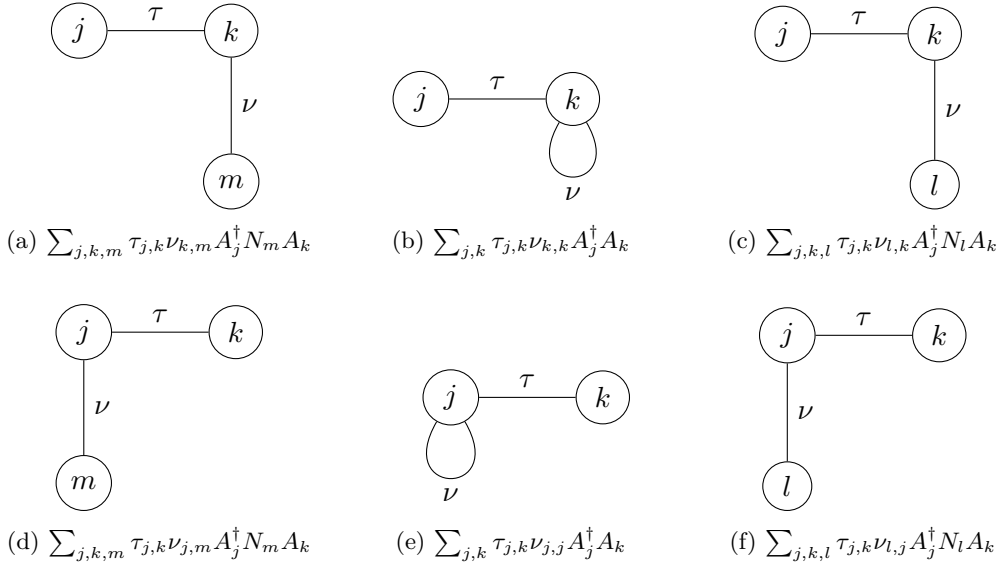

We now bound the asymptotic scaling of the fermionic seminorm for each of the six terms in the commutator expansion.
\begin{proposition}[Fermionic seminorm of single-layer commutator]
	\label{prop:singlelayer_seminorm}
	Let $H=T+V=\sum_{j,k}\tau_{j,k}A_j^\dagger A_k+\sum_{l,m}\nu_{l,m}N_lN_m$ be an interacting-electronic Hamiltonian \eq{second_quantized_ham}. Then, 
	\begin{enumerate}
		\item $\norm{\sum_{j,k,m}\tau_{j,k}\nu_{k,m}A_j^\dagger N_mA_k}_\eta\leq\norm{\tau}\norm{\nu}_{\max}\eta^2$;
		\item $\norm{\sum_{j,k}\tau_{j,k}\nu_{k,k}A_j^\dagger A_k}_\eta\leq\norm{\tau}\norm{\nu}_{\max}\eta$;
		\item $\norm{\sum_{j,k,l}\tau_{j,k}\nu_{l,k}A_j^\dagger N_lA_k}_\eta\leq\norm{\tau}\norm{\nu}_{\max}\eta^2$;
		\item $\norm{\sum_{j,k,m}\tau_{j,k}\nu_{j,m}A_j^\dagger N_mA_k}_\eta\leq\norm{\tau}\norm{\nu}_{\max}\eta^2$;
		\item $\norm{\sum_{j,k}\tau_{j,k}\nu_{j,j}A_j^\dagger A_k}_\eta\leq\norm{\tau}\norm{\nu}_{\max}\eta$;
		\item $\norm{\sum_{j,k,l}\tau_{j,k}\nu_{l,j}A_j^\dagger N_l A_k}_\eta\leq\norm{\tau}\norm{\nu}_{\max}\eta^2$.
	\end{enumerate}
\end{proposition}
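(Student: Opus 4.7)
The plan is to establish all six bounds uniformly by applying $X$ to an arbitrary $\eta$-electron state $\ket{\psi_\eta}$, peeling off the outermost $A_j^\dagger$ via a key inequality, and then cascading through the operator norm $\norm{\tau}$ and a scalar bound on diagonal sums of occupation-number operators. The six items split into the pure-hopping form of items 2 and 5 and the sandwiched-$N$ form of items 1, 3, 4, 6.

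For items 2 and 5, I absorb $\nu$ into a modified coefficient matrix: $\tilde\tau_{j,k}=\tau_{j,k}\nu_{k,k}$ for item 2 (so $\tilde\tau=\tau\cdot\mathrm{diag}(\nu_{k,k})$), and $\tilde\tau_{j,k}=\nu_{j,j}\tau_{j,k}$ for item 5. In both cases $\norm{\tilde\tau}\leq\norm{\tau}\norm{\nu}_{\max}$ since $\tilde\tau$ is $\tau$ times (respectively, from the left by) a diagonal matrix whose entries are bounded by $\norm{\nu}_{\max}$. Writing $X\ket{\psi_\eta}=\sum_j A_j^\dagger\ket{w_j}$ with $\ket{w_j}=\sum_k\tilde\tau_{j,k}A_k\ket{\psi_\eta}$ lying in the $(\eta-1)$-electron subspace, the bound reduces to the key inequality $\norm{\sum_j A_j^\dagger\ket{w_j}}^2\leq\eta\sum_j\norm{\ket{w_j}}^2$ combined with $\sum_j\norm{\ket{w_j}}^2\leq\norm{\tilde\tau}^2\sum_k\norm{A_k\ket{\psi_\eta}}^2=\norm{\tilde\tau}^2\eta$, which delivers $\norm{X}_\eta\leq\norm{\tilde\tau}\eta\leq\norm{\tau}\norm{\nu}_{\max}\eta$.

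For items 1, 3, 4, 6, I rewrite each operator as $X=\sum_{j,k}\tau_{j,k}A_j^\dagger D_\alpha A_k$, where $D_\alpha$ is a $k$-dependent (items 1, 3, with $\alpha=k$) or $j$-dependent (items 4, 6, with $\alpha=j$) linear combination of occupation-number operators whose coefficients are drawn from $\nu$. Since $D_\alpha$ is diagonal in the computational basis and every configuration in the $(\eta-1)$-electron subspace has Hamming weight $\eta-1$, the operator $D_\alpha$ restricted to that subspace is bounded in absolute value by $\norm{\nu}_{\max}(\eta-1)\leq\norm{\nu}_{\max}\eta$. Peeling off the outermost $A_j^\dagger$ in the same way produces the cascade
\[\norm{X\ket{\psi_\eta}}^2\leq\eta\sum_j\norm{\ket{w_j}}^2\leq\eta\cdot\norm{\tau}^2\cdot\norm{\nu}_{\max}^2\eta^2\cdot\eta=\norm{\tau}^2\norm{\nu}_{\max}^2\eta^4,\]
yielding $\norm{X}_\eta\leq\norm{\tau}\norm{\nu}_{\max}\eta^2$. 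The only variation across the four cases is whether $D_\alpha$ is inserted inside the $\tau$-contraction (items 1, 3, with $\ket{v_k}=D_k A_k\ket{\psi_\eta}$) or applied afterward to the vector $\sum_k\tau_{j,k}A_k\ket{\psi_\eta}$ (items 4, 6); the identical chain of estimates applies.

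The key inequality $\norm{\sum_j A_j^\dagger\ket{w_j}}^2\leq\eta\sum_j\norm{\ket{w_j}}^2$ for $\ket{w_j}$ in the $(\eta-1)$-electron subspace is the central technical step. Expanding via the anti-commutation $A_j A_{j'}^\dagger=\delta_{j,j'}I-A_{j'}^\dagger A_j$ yields $\norm{\sum_j A_j^\dagger\ket{w_j}}^2=\sum_j\norm{\ket{w_j}}^2-\sum_{j,j'}\bra{w_j}A_{j'}^\dagger A_j\ket{w_{j'}}$, and I bound the magnitude of the cross sum by invoking \lem{cauchy} with $B_j=\ket{w_j}$ (viewed as operators from $\mathbb{C}$ into the Fock space) and $C_j=A_j$, producing $\bigl|\sum_{j,j'}\bra{w_j}A_{j'}^\dagger A_j\ket{w_{j'}}\bigr|\leq\sum_j\bra{w_j}N\ket{w_j}=(\eta-1)\sum_j\norm{\ket{w_j}}^2$. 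I anticipate the principal obstacle to be the bookkeeping of indices and electron counts across the six items---verifying that $\ket{w_j}$ lies in the $(\eta-1)$-electron subspace in each case and that the specific placement of $D_\alpha$ does not obstruct extracting the outermost $A_j^\dagger$.
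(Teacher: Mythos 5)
Your proof is correct and follows essentially the same route as the paper's: the key inequality $\norm{\sum_j A_j^\dagger\ket{w_j}}^2\leq\eta\sum_j\norm{\ket{w_j}}^2$ is the vector-level form of the paper's operator Cauchy--Schwarz step, the contraction of the $\tau$ indices via $\norm{\tilde\tau}$ or $\norm{\tau^\dagger\tau}$ is the paper's diagonalization lemma, and the pointwise bound $\norm{D_\alpha}_{\eta-1}\leq\norm{\nu}_{\max}(\eta-1)$ plays the role of the H\"{o}lder-type inequality. All six cases check out, including the verification that each $\ket{w_j}$ lies in the $(\eta-1)$-electron subspace.
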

\begin{proof}
	We present the proof of the first two statements here. The remaining justifications proceed in a similar way and are left to \append{singlelayer}.

	Letting $X=\sum_{j,k,m}\tau_{j,k}\nu_{k,m}A_j^\dagger N_mA_k$, we have $\norm{X}_\eta=\sqrt{\norm{X^\dagger X}_\eta}$. Now,
	\begin{equation}
	\begin{aligned}
		X^\dagger X
		&=\sum_{j_1,k_1,m_1,j_2,k_2,m_2}\bar{\tau}_{j_1,k_1}\bar{\nu}_{k_1,m_1}\tau_{j_2,k_2}\nu_{k_2,m_2}
		A_{k_1}^\dagger N_{m_1}A_{j_1} A_{j_2}^\dagger N_{m_2}A_{k_2}\\
		&=\sum_{j_1,k_1,m_1,k_2,m_2}\bar{\tau}_{j_1,k_1}\bar{\nu}_{k_1,m_1}\tau_{j_1,k_2}\nu_{k_2,m_2}
		A_{k_1}^\dagger N_{m_1} N_{m_2}A_{k_2}\\
		&\quad-\sum_{j_1,k_1,m_1,j_2,k_2,m_2}\bar{\tau}_{j_1,k_1}\bar{\nu}_{k_1,m_1}\tau_{j_2,k_2}\nu_{k_2,m_2}
		A_{k_1}^\dagger N_{m_1}A_{j_2}^\dagger A_{j_1}  N_{m_2}A_{k_2},
	\end{aligned}
	\end{equation}
	where $\bar{\tau}_{j_1,k_1}$ is the complex conjugate of $\tau_{j_1,k_1}$ and we have used the anti-commutation relation $A_{j_1} A_{j_2}^\dagger+A_{j_2}^\dagger A_{j_1}=\delta_{j_1,j_2}I$. For the second term, we let $B_{j_1}^\dagger=\sum_{k_1,m_1}\bar{\tau}_{j_1,k_1}\bar{\nu}_{k_1,m_1}A_{k_1}^\dagger N_{m_1}$ and apply the operator Cauchy-Schwarz inequality (\lem{cauchy}):
	\begin{equation}
	\label{eq:cauchy_cal}
	\begin{aligned}
		&-\sum_{j_1,k_1,m_1,j_2,k_2,m_2}\bar{\tau}_{j_1,k_1}\bar{\nu}_{k_1,m_1}\tau_{j_2,k_2}\nu_{k_2,m_2}
		A_{k_1}^\dagger N_{m_1}A_{j_2}^\dagger A_{j_1}  N_{m_2}A_{k_2}\\
		=&-\sum_{j_1,j_2}B_{j_1}^\dagger A_{j_2}^\dagger A_{j_1}B_{j_2}
		\leq\sum_{j_1,j_2}B_{j_1}^\dagger A_{j_2}^\dagger A_{j_2}B_{j_1}
		=\sum_{j_1}B_{j_1}^\dagger NB_{j_1}\\
		=&\sum_{j_1}\sum_{k_1,m_1,k_2,m_2}\bar{\tau}_{j_1,k_1}\bar{\nu}_{k_1,m_1}\tau_{j_1,k_2}\nu_{k_2,m_2}
		A_{k_1}^\dagger N_{m_1}NN_{m_2}A_{k_2}\\
		=&\sum_{j_1}\sum_{k_1,m_1,k_2,m_2}\bar{\tau}_{j_1,k_1}\bar{\nu}_{k_1,m_1}\tau_{j_1,k_2}\nu_{k_2,m_2}
		A_{k_1}^\dagger N_{m_1}N_{m_2}A_{k_2}N\\
		&-\sum_{j_1}\sum_{k_1,m_1,k_2,m_2}\bar{\tau}_{j_1,k_1}\bar{\nu}_{k_1,m_1}\tau_{j_1,k_2}\nu_{k_2,m_2}
		A_{k_1}^\dagger N_{m_1}N_{m_2}A_{k_2}.
	\end{aligned}
	\end{equation}
	This implies
	\begin{equation}
	\label{eq:single_comm_1}
	\begin{aligned}
		X^\dagger X
		&\leq\sum_{j_1}\sum_{k_1,m_1,k_2,m_2}\bar{\tau}_{j_1,k_1}\bar{\nu}_{k_1,m_1}\tau_{j_1,k_2}\nu_{k_2,m_2}
		A_{k_1}^\dagger N_{m_1}N_{m_2}A_{k_2}N\\
		&=\sum_{k_1,k_2}\left(\sum_{j_1}\bar{\tau}_{j_1,k_1}\tau_{j_1,k_2}\right)
		\left(\sum_{m_1}\bar{\nu}_{k_1,m_1}A_{k_1}^\dagger N_{m_1}\right)\left(\sum_{m_2}\nu_{k_2,m_2}N_{m_2}A_{k_2}\right)N.
	\end{aligned}
	\end{equation}

	Note that $\sum_{j_1}\bar{\tau}_{j_1,k_1}\tau_{j_1,k_2}$ gives the $(k_1,k_2)$ matrix element of $\tau^\dagger \tau$. Then, we define $C_{k_1}^\dagger=\sum_{m_1}\bar{\nu}_{k_1,m_1}A_{k_1}^\dagger N_{m_1}$ and perform diagonalization (\lem{diagonalization}):
	\begin{equation}
	\label{eq:single_comm_2}
	\begin{aligned}
		&\sum_{k_1,k_2}\left(\sum_{j_1}\bar{\tau}_{j_1,k_1}\tau_{j_1,k_2}\right)
		\left(\sum_{m_1}\bar{\nu}_{k_1,m_1}A_{k_1}^\dagger N_{m_1}\right)\left(\sum_{m_2}\nu_{k_2,m_2}N_{m_2}A_{k_2}\right)\\
		=&\sum_{k_1,k_2}\left(\tau^\dagger \tau\right)_{k_1,k_2}C_{k_1}^\dagger C_{k_2}
		\leq\norm{\tau^\dagger \tau}\sum_{k_1}C_{k_1}^\dagger C_{k_1}\\
		=&\norm{\tau^\dagger \tau}\sum_{k_1}\sum_{m_1,m_2}\bar{\nu}_{k_1,m_1}\nu_{k_1,m_2}A_{k_1}^\dagger N_{m_1}N_{m_2}A_{k_1}.
	\end{aligned}
	\end{equation}

	Now that the indices $k_1$ and $k_2$ are contracted, we can apply the H\"{o}lder-type inequality for the expectation value (\lem{fermionic_holder}). To this end, we let $D_{k_1}^\dagger=\sum_{m_1}\bar{\nu}_{k_1,m_1}N_{m_1}$ and compute
	\begin{equation}
	\label{eq:single_comm_3}
	\begin{aligned}
		\norm{\sum_{k_1}\sum_{m_1,m_2}\bar{\nu}_{k_1,m_1}\nu_{k_1,m_2}A_{k_1}^\dagger N_{m_1}N_{m_2}A_{k_1}}_\eta
		&=\norm{\sum_{k_1}A_{k_1}^\dagger D_{k_1}^\dagger D_{k_1}A_{k_1}}_\eta\\
		&\leq\norm{\sum_{k_1}A_{k_1}^\dagger A_{k_1}}_\eta \max_{k_1}\norm{D_{k_1}^\dagger D_{k_1}}_{\eta-1}.
	\end{aligned}
	\end{equation}
	The first factor can be directly bounded as
	\begin{equation}
	\label{eq:single_comm_4}
		\norm{\sum_{k_1}A_{k_1}^\dagger A_{k_1}}_\eta
		=\norm{N}_\eta=\eta.
	\end{equation}
	For the second factor, we have
	\begin{equation}
	\begin{aligned}
		D_{k_1}^\dagger D_{k_1}
		&=\sum_{m_1,m_2}\bar{\nu}_{k_1,m_1}\nu_{k_1,m_2}N_{m_1}N_{m_2}
		=\sum_{m_1,m_2}\bar{\nu}_{k_1,m_1}\nu_{k_1,m_2}N_{m_1}N_{m_2}N_{m_1}\\
		&\leq\norm{\nu}_{\max}^2\sum_{m_1,m_2}N_{m_1}N_{m_2}N_{m_1}
		=\norm{\nu}_{\max}^2N^2,
	\end{aligned}
	\end{equation}
	which implies
	\begin{equation}
	\label{eq:single_comm_5}
		\norm{D_{k_1}^\dagger D_{k_1}}_{\eta-1}
		\leq\norm{\nu}_{\max}^2\eta^2.
	\end{equation}
	Combining \eq{single_comm_1}, \eq{single_comm_2}, \eq{single_comm_3}, \eq{single_comm_4}, and \eq{single_comm_5} establishes the first statement.

	For the second statement, we let $X=\sum_{j,k}\tau_{j,k}\nu_{k,k}A_j^\dagger A_k$ and compute
	\begin{equation}
	\begin{aligned}
		X^\dagger X
		&=\sum_{j_1,k_1,j_2,k_2}\bar{\tau}_{j_1,k_1}\bar{\nu}_{k_1,k_1}\tau_{j_2,k_2}\nu_{k_2,k_2}A_{k_1}^\dagger A_{j_1}A_{j_2}^\dagger A_{k_2}\\
		&=\sum_{j_1,k_1,k_2}\bar{\tau}_{j_1,k_1}\bar{\nu}_{k_1,k_1}\tau_{j_1,k_2}\nu_{k_2,k_2}A_{k_1}^\dagger A_{k_2}
		-\sum_{j_1,k_1,j_2,k_2}\bar{\tau}_{j_1,k_1}\bar{\nu}_{k_1,k_1}\tau_{j_2,k_2}\nu_{k_2,k_2}A_{k_1}^\dagger A_{j_2}^\dagger A_{j_1}A_{k_2}.
	\end{aligned}
	\end{equation}
	Applying \lem{cauchy},
	\begin{equation}
	\begin{aligned}
		X^\dagger X
		&\leq\sum_{j_1,k_1,k_2}\bar{\tau}_{j_1,k_1}\bar{\nu}_{k_1,k_1}\tau_{j_1,k_2}\nu_{k_2,k_2}A_{k_1}^\dagger A_{k_2}
		+\sum_{j_1,k_1,j_2,k_2}\bar{\tau}_{j_1,k_1}\bar{\nu}_{k_1,k_1}\tau_{j_1,k_2}\nu_{k_2,k_2}A_{k_1}^\dagger A_{j_2}^\dagger A_{j_2}A_{k_2}\\
		&=\sum_{j_1,k_1,k_2}\bar{\tau}_{j_1,k_1}\bar{\nu}_{k_1,k_1}\tau_{j_1,k_2}\nu_{k_2,k_2}A_{k_1}^\dagger A_{k_2}
		+\sum_{j_1,k_1,k_2}\bar{\tau}_{j_1,k_1}\bar{\nu}_{k_1,k_1}\tau_{j_1,k_2}\nu_{k_2,k_2}A_{k_1}^\dagger NA_{k_2}\\
		&=\sum_{j_1,k_1,k_2}\bar{\tau}_{j_1,k_1}\bar{\nu}_{k_1,k_1}\tau_{j_1,k_2}\nu_{k_2,k_2}A_{k_1}^\dagger A_{k_2}N\\
		&=\sum_{k_1,k_2}\left(\sum_{j_1}\bar{\tau}_{j_1,k_1}\tau_{j_1,k_2}\right)\bar{\nu}_{k_1,k_1}\nu_{k_2,k_2}A_{k_1}^\dagger A_{k_2}N.
	\end{aligned}
	\end{equation}
	Performing diagonalization using \lem{diagonalization}, we have
	\begin{equation}
		X^\dagger X\leq\norm{\tau}^2\sum_{k_1}\bar{\nu}_{k_1,k_1}\nu_{k_1,k_1}A_{k_1}^\dagger A_{k_1}N.
	\end{equation}
	Note that we could directly bound the above operators as $\norm{\tau}^2\norm{\nu}_{\max}^2N^2$ and thereby complete the proof. But we choose to instead apply \lem{fermionic_holder} so that the analysis can then be directly extended to analyze multilayer nested commutators. We have
	\begin{equation}
	\begin{aligned}
		\norm{X^\dagger X}_{\eta}
		&\leq\norm{\norm{\tau}^2\sum_{k_1}\bar{\nu}_{k_1,k_1}\nu_{k_1,k_1}A_{k_1}^\dagger A_{k_1}N}_{\eta}
		=\norm{\tau}^2\eta\norm{\sum_{k_1}\bar{\nu}_{k_1,k_1}\nu_{k_1,k_1}A_{k_1}^\dagger A_{k_1}}_{\eta}\\
		&\leq\norm{\tau}^2\eta\norm{\sum_{k_1}A_{k_1}^\dagger A_{k_1}}_{\eta}\max_{k_1}\norm{\bar{\nu}_{k_1,k_1}\nu_{k_1,k_1}I}_{\eta-1}
		\leq\norm{\tau}^2\norm{\nu}_{\max}^2\eta^2.
	\end{aligned}
	\end{equation}
	The proof of the second statement is now completed. See \append{singlelayer} for the proof of the remaining statements.
\end{proof}

\subsection{Multilayer nested commutators}
\label{sec:recurse_multi}
We now analyze the error of simulating the interacting-electronic Hamiltonian \eq{second_quantized_ham} using a general $p$th-order formula $\mathscr{S}_p(t)$. We know from \eq{pf2k_eta} that
\begin{equation}
\norm{\mathscr{S}_p(t)-e^{-itH}}_{\eta}=\cO{\max_{\pmb{\gamma}}\norm{\left[H_{\gamma_{p+1}},\cdots\left[H_{\gamma_2},H_{\gamma_1}\right]\right]}_{\eta}t^{p+1}},
\end{equation}
where $H_0=V=\sum_{l,m}\nu_{l,m}N_lN_m$ and $H_1=T=\sum_{j,k}\tau_{j,k}A_j^\dagger A_k$. Our goal is to show that
\begin{equation}
	\norm{\left[H_{\gamma_{p+1}},\cdots\left[H_{\gamma_2},H_{\gamma_1}\right]\right]}_{\eta}
	=\cO{\left(\norm{\tau}+\norm{\nu}_{\max}\eta\right)^{p-1}
		\norm{\tau}\norm{\nu}_{\max}\eta^2}
\end{equation}
for each multilayer nested commutator $\left[H_{\gamma_{p+1}},\cdots\left[H_{\gamma_2},H_{\gamma_1}\right]\right]$.

To this end, we assume that $\left[H_{\gamma_{p+1}},\cdots\left[H_{\gamma_2},H_{\gamma_1}\right]\right]$ is expressed as a fermionic operator of the form
\begin{equation}
	\sum_{\pmb{j},\pmb{k},\pmb{l}}w_{\pmb{j},\pmb{k},\pmb{l}}\
	\cdots A_{j_x}^\dagger \cdots N_{l_z}\cdots A_{k_y}\cdots
\end{equation}
and analyze its commutator with either $T$ or $V$. For the commutator with $T$, we have from \prop{fermionic_commutation}
\begin{equation}
\label{eq:commutator_t}
	\left[A_j^\dagger A_k,A_{j_x}^\dagger\right]=\delta_{k,j_x}A_j^\dagger,\quad
	\left[A_j^\dagger A_k,A_{k_y}\right]=-\delta_{k_y,j}A_k,\quad
	\left[A_j^\dagger A_k,N_{l_z}\right]
	=\delta_{k,l_z}A_j^\dagger A_k
	-\delta_{j,l_z}A_j^\dagger A_k.
\end{equation}
To develop some intuitions about these commutations, we introduce the notion of \emph{fermionic chain}, which refers to a product of fermionic operators that has a creation operator on the left and an annihilation operator on the right. Then, the above commutations either extend an existing fermionic chain (in the case where commutator is taken with $A_{j_x}^\dagger$ or $A_{k_y}$), or create a new chain (in the case where commutator is taken with $N_{l_z}$). On the other hand, we also apply \prop{fermionic_commutation} to compute the commutator with $V$:
\begin{equation}
\label{eq:commutator_v}
\begin{aligned}
	\left[N_lN_m,A_{j_x}^\dagger\right]
	&=\delta_{m,j_x}N_lA_{j_x}^\dagger+\delta_{l,j_x}A_{j_x}^\dagger N_m
	=\delta_{m,j_x}A_{j_x}^\dagger N_l+\delta_{l,j_x}A_{j_x}^\dagger N_m+\delta_{m,j_x}\delta_{l,j_x}A_{j_x}^\dagger,\\
	\left[N_lN_m,A_{k_x}\right]
	&=-\delta_{m,k_x}N_lA_{k_x}-\delta_{l,k_x}A_{k_x} N_m
	=-\delta_{m,k_x}N_lA_{k_x}-\delta_{l,k_x} N_mA_{k_x}-\delta_{m,k_x}\delta_{l,k_x} A_{k_x}.
\end{aligned}
\end{equation}
Unlike the commutator with $T$, these commutations do not extend an existing chain or create a new chain. Rather, their effect is to append occupation-number operators to an existing chain.

We now apply \eq{commutator_t} and \eq{commutator_v} iteratively to expand a general multilayer nested commutator $\left[H_{\gamma_{p+1}},\cdots\left[H_{\gamma_2},H_{\gamma_1}\right]\right]$. We summarize the structure of the resulting operator in the following proposition.
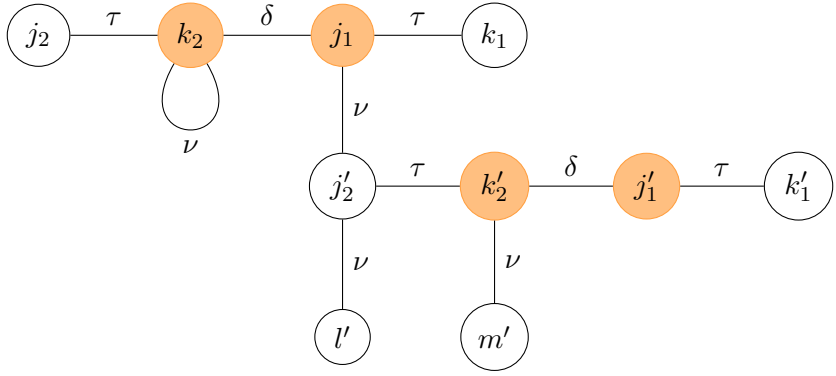
\begin{figure}[t]
	\centering
	\begin{tikzpicture}
	\tikzstyle{hidden}=[draw=orange!75,fill=orange!50]

	\draw (1, 5) node[circle, black, draw](j2){$j_2$};
	\draw (3, 5) node[circle, hidden, draw](k2){$k_2$};
	\draw (5, 5) node[circle, hidden, draw](j1){$j_1$};
	\draw (7, 5) node[circle, black, draw](k1){$k_1$};

	\draw (5, 3) node[circle, black, draw](j2tilde){${j}_2'$};
	\draw (7, 3) node[circle, hidden, draw](k2tilde){${k}_2'$};
	\draw (9, 3) node[circle, hidden, draw](j1tilde){${j}_1'$};
	\draw (11, 3) node[circle, black, draw](k1tilde){${k}_1'$};

	\draw (5, 1) node[circle, black, draw](ltilde){${l}'$};
	\draw (7, 1) node[circle, black, draw](mtilde){${m}'$};

	\draw[-] (j2) -- node[above] {$\tau$} (k2);
	\draw[-] (k2) -- node[above] {$\delta$} (j1);
	\draw[-] (j1) -- node[above] {$\tau$} (k1);
	\path[every loop/.style={in=300,out=240,looseness=8}] (k2) edge [loop below] node {$\nu$} (k2);

	\draw[-] (j1) -- node[right] {$\nu$} (j2tilde);

	\draw[-] (j2tilde) -- node[above] {$\tau$} (k2tilde);
	\draw[-] (k2tilde) -- node[above] {$\delta$} (j1tilde);
	\draw[-] (j1tilde) -- node[above] {$\tau$} (k1tilde);

	\draw[-] (j2tilde) -- node[right] {$\nu$} (ltilde);
	\draw[-] (k2tilde) -- node[right] {$\nu$} (mtilde);

	\end{tikzpicture}
	\caption{Graph illustration of the fermionic chain $X=\sum_{\pmb{j},\pmb{k}}\tau_{j_2,k_2}\delta_{k_2,j_1}\tau_{j_1,k_1} A_{j_2}^\dagger C_{2,1}B_{1,1} A_{k_1}$ with $C_{2,1}=\nu_{k_2,k_2}$. Here, $B_{1,1}=\sum_{{\pmb{j}}',{\pmb{k}}'}\beta_{1,2}\tau_{{j}_2',{k}_2'}\delta_{{k}_2',{j}_1'}\tau_{{j}_1',{k}_1'} A_{{j}_2'}^\dagger {B}_{2,1}'{C}_{2,1}' A_{{k}_1'}$ is a fermionic subchain with $\beta_{1,2}=\nu_{{j}_2',j_1}$, ${B}_{2,1}'=\sum_{l'}\nu_{{l}',{j}_2'}N_{{l}'}$ and ${C}_{2,1}'=\sum_{{m}'}\nu_{{k}_2',{m}'}N_{{m}'}$. Vertices in the graph denote the indices in the summation, whereas edges represent the coefficients. We color a vertex if there is no fermionic operator corresponding to this index (due to taking nested commutators).}
	\label{fig:fermionic_chain}
\end{figure}
\begin{proposition}[Structure of multilayer nested commutators]
	\label{prop:multilayer}
	Let $H=T+V=\sum_{j,k}\tau_{j,k}A_j^\dagger A_k+\sum_{l,m}\nu_{l,m}N_lN_m$ be an interacting-electronic Hamiltonian as in \eq{second_quantized_ham}. Then, each nested commutator $\left[H_{\gamma_{p+1}},\cdots\left[H_{\gamma_2},H_{\gamma_1}\right]\right]$ where $H_0=V$ and $H_1=T$ is a linear combination of \emph{fermionic chains}:
	\begin{equation}
	\label{eq:fermionic_chain}
		X=\sum_{\pmb j,\pmb k}\prod_{x=1}^{q}\tau_{j_x,k_x}\prod_{x=1}^{q-1}\delta_{k_{x+1},j_x}
		\cdot A_{j_q}^\dagger\prod_{x=1}^{q}\left(\prod_{y}B_{x,y}\prod_{z}C_{x,z}\right)A_{k_1}
	\end{equation}
	for some integer $q \leq p$. Here, we have $B_{x,y}=\sum_{l}\nu_{l,j_x}N_l$, $\sum_{m}\nu_{j_x,m}N_m$, $\nu_{j_x,j_x}I$ and $C_{x,z}=\sum_{l}\nu_{l,k_x}N_l$, $\sum_{m}\nu_{k_x,m}N_m$, $\nu_{k_x,k_x}I$, or they define \emph{fermionic subchains}:
	\begin{equation}
	\label{eq:fermionic_subchain}
	\begin{aligned}
	B_{x,y}&=
	\sum_{{\pmb j}',{\pmb k}'}\beta_{x,{x}_0'}
	\prod_{{x}'=1}^{{q}'}\tau_{{j}_{{x}'}',{k}_{{x}'}'}\prod_{{x}'=1}^{{q}'-1}\delta_{{k}_{{x}'+1}',{j}_{{x}'}'}
	\cdot A_{{j}_{{q}'}'}^\dagger\prod_{{x}'=1}^{{q}'}\left(\prod_{{y}'}{B}_{{x}',{y}'}'\prod_{{z}'}{C}_{{x}',{z}'}'\right)A_{{k}_1'},\\
	C_{x,z}&=
	\sum_{{\pmb j''},{\pmb k''}}\chi_{x,{x}_0''}
	\prod_{{x}''=1}^{{q}''}\tau_{{j}_{{x}''}'',{k}_{{x}''}''}\prod_{{x}''=1}^{{q}''-1}\delta_{{k}''_{{x}''+1},{j}_{{x}''}''}
	\cdot A_{{j}_{{q}''}''}^\dagger\prod_{{x}''=1}^{{q}''}\left(\prod_{{y}''}{B}_{{x}'',{y}''}''\prod_{{z}''}{C}_{{x}'',{z}''}''\right)A_{{k}_1''}.
	\end{aligned}
	\end{equation}
	The definition of fermionic subchain is similar to that of the fermionic chain, except we have $\beta_{x,{x}_0'}=\nu_{{j}_{{x}_0'}',j_x}$, $\nu_{j_x,{j}_{{x}_0'}'}$, $\nu_{{k}_{{x}_0'}',j_x}$, $\nu_{j_x,{k}_{{x}_0'}'}$ for some ${x}_0'\leq{q}'$ and $\chi_{x,{x}_0''}=\nu_{{j}_{{x}_0''}'',k_x}$, $\nu_{k_x,{j}_{{x}_0''}''}$, $\nu_{{k}_{{x}_0''}'',k_x}$, $\nu_{k_x,{k}_{{x}_0''}''}$ for some ${x}_0''\leq{q}''$.\footnotemark\ See \fig{fermionic_chain} for a graph illustration of this structure.
	\footnotetext{For readability, we have omitted the dependence of $\pmb{j},\pmb{k},{\pmb j}',{\pmb k}',{\pmb j}'',{\pmb k}''$ in the definition of fermionic chain and subchain. When written in full, operator $B_{x,y}=B_{x,y}(\pmb{j},\pmb{k})$ will depend on $\pmb{j},\pmb{k}$ and similar modifications apply to other operators.}
	
	Furthermore, there are at most $6^{p}p!$ fermionic chains in each $\left[H_{\gamma_{p+1}},\cdots\left[H_{\gamma_2},H_{\gamma_1}\right]\right]$.\footnotemark\ Within each chain, coefficient $\tau$ appears $\abs{\pmb \gamma}$ times and $\nu$ appears $p+1-\abs{\pmb \gamma}$ times, where $\abs{\pmb \gamma}:=\sum_{s=1}^{p+1}\gamma_s$. All $B_{x,y}$, $C_{x,z}$ and hence the entire chain are number-preserving. 
	\footnotetext{The exact number of fermionic chains does not matter as long as it only depends on $p$.}
\end{proposition}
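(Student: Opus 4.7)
I would proceed by induction on the number of commutator layers $p$. The base case $p = 1$ is already handled by \prop{singlelayer_structure}: each of the six terms of \eq{singlelayer_expand} fits the chain template \eq{fermionic_chain} with $q = 1$, one $\tau$-coefficient, one $\nu$-coefficient, and no proper subchain. Every such chain has exactly one $A_{j_1}^\dagger$ on the left and one $A_{k_1}$ on the right with only occupation-number operators or identity in between, so \prop{fermionic_commutation} item 3 gives $[N,X]=0$ and $X$ is number-preserving.

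For the inductive step, assume every $p$-layer commutator is a sum of at most $6^p p!$ fermionic chains/subchains of the stated form. To form $[H_{\gamma_{p+1}},X]$ for such a chain $X$, I distribute the commutator term by term via the Leibniz rule and apply the elementary identities \eq{commutator_t} and \eq{commutator_v} at each fermionic operator appearing in $X$. If $H_{\gamma_{p+1}}=T$, an elementary piece $\tau_{j,k}A_j^\dagger A_k$ can either contract with an $A^\dagger$ or $A$ endpoint of the outer chain (or of some existing subchain), thereby extending that chain by a new $\tau$-edge $q\to q+1$, or contract with an intermediate $N_{l_z}$ inside some $B_{x,y}$ or $C_{x,z}$ factor. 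In the latter case, since the $N$-factor originally sat inside a $B$ or $C$ attached by some $\nu$-edge, the replacement turns that elementary $B$ (resp.\ $C$) into a fresh subchain whose backbone starts with the new $\tau$-factor, with coefficient $\beta$ (resp.\ $\chi$) equal to the preexisting $\nu$-edge, which is exactly one of the four patterns listed after \eq{fermionic_subchain}. If $H_{\gamma_{p+1}}=V$, the number operators inside $V$ commute with every $N$-factor and scalar in $X$, so the commutator only acts at the endpoints; each such action appends either a new elementary $B$ or $C$ factor (via \eq{commutator_v}), or extends an existing one, in each case introducing exactly one new $\nu$-coefficient coupled to a chain index. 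In all cases the resulting operator still matches the template \eq{fermionic_chain}--\eq{fermionic_subchain}. The new coefficient gained is $\tau$ when $\gamma_{p+1}=1$ and $\nu$ when $\gamma_{p+1}=0$, so $\tau$ appears $|\pmb\gamma|$ times and $\nu$ appears $p+2-|\pmb\gamma|$ times; number-preservation persists because every chain and subchain retains a single outer $A^\dagger$ and single outer $A$ throughout.

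For the count, each chain produced under the inductive hypothesis has $O(p)$ fermionic operators (creation, annihilation, or number operators), and each of these yields at most a constant number of daughter terms under \eq{commutator_t} or \eq{commutator_v}, so the chain count multiplies by at most $6p$ per layer, yielding $6^{p+1}(p+1)!$. The main obstacle will be the exhaustive bookkeeping to verify that the chain/subchain template is genuinely closed under commutation with $T$ and $V$: whenever a new subchain is spawned by cutting through some $N_{l_z}$, or whenever a new $\nu$-edge is appended at an endpoint $j_x$ or $k_x$, one must check that the attachment coefficient always lands in one of the four allowed patterns for $\beta_{x,x_0'}$ or $\chi_{x,x_0''}$. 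This reduces to a finite enumeration over which of $j_x,k_x,j'_{x_0'},k'_{x_0'}$ is hit by the emerging Kronecker delta; it is tedious but mechanical, and it is what ultimately fixes the constant hidden in the bound $6^p p!$.
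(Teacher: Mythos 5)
Your proposal is correct and follows essentially the same route as the paper: induction on the number of layers with base case given by \prop{singlelayer_structure}, an inductive step that distributes the commutator over the chain via the Leibniz rule and the elementary relations \eq{commutator_t} and \eq{commutator_v} (commutators with $T$ extending the backbone or spawning subchains, commutators with $V$ appending occupation-number factors or recursing into subchains), and the same counting $3\times 2(p+1)$ per layer giving $6^{p+1}(p+1)!$. The residual bookkeeping you flag (checking the attachment coefficients land in the allowed $\beta$, $\chi$ patterns) is likewise left as a direct verification in the paper's own proof.
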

\begin{proof}
	We will analyze the structure of multilayer nested commutators by induction. In the base case where $p=1$, we have $\left[H_{\gamma_2},H_{\gamma_1}\right]=[T,V]$. This commutator has the expansion \eq{singlelayer_expand} with six terms, each of which is indeed a fermionic chain and number-preserving, containing coefficient $\tau$ and $\nu$ each once. This completes the proof of the base case.

	Assuming the claim holds for the nested commutator $\left[H_{\gamma_{p+1}},\cdots\left[H_{\gamma_2},H_{\gamma_1}\right]\right]$, we now consider the structure of $\left[H_{\gamma_{p+2}},\cdots\left[H_{\gamma_2},H_{\gamma_1}\right]\right]$. By induction, this nested commutator is a linear combination of
	\begin{equation}
	\begin{aligned}
		&\sum_{\pmb j,\pmb k}\prod_{x=1}^{q}\tau_{j_x,k_x}\prod_{x=1}^{q-1}\delta_{k_{x+1},j_x}
		\cdot \left[T,A_{j_q}^\dagger\right]\prod_{x=1}^{q}\left(\prod_{y}B_{x,y}\prod_{z}C_{x,z}\right)A_{k_1},\\
		&\sum_{\pmb j,\pmb k}\prod_{x=1}^{q}\tau_{j_x,k_x}\prod_{x=1}^{q-1}\delta_{k_{x+1},j_x}
		\cdot A_{j_q}^\dagger\prod_{x=1}^{q}\left(\prod_{y}\left[T,B_{x,y}\right]\prod_{z}C_{x,z}\right)A_{k_1},\\
		&\sum_{\pmb j,\pmb k}\prod_{x=1}^{q}\tau_{j_x,k_x}\prod_{x=1}^{q-1}\delta_{k_{x+1},j_x}
		\cdot A_{j_q}^\dagger\prod_{x=1}^{q}\left(\prod_{y}B_{x,y}\prod_{z}\left[T,C_{x,z}\right]\right)A_{k_1},\\
		&\sum_{\pmb j,\pmb k}\prod_{x=1}^{q}\tau_{j_x,k_x}\prod_{x=1}^{q-1}\delta_{k_{x+1},j_x}
		\cdot A_{j_q}^\dagger\prod_{x=1}^{q}\left(\prod_{y}B_{x,y}\prod_{z}C_{x,z}\right)\left[T,A_{k_1}\right],
	\end{aligned}
	\end{equation}
	and
	\begin{equation}
	\begin{aligned}
	&\sum_{\pmb j,\pmb k}\prod_{x=1}^{q}\tau_{j_x,k_x}\prod_{x=1}^{q-1}\delta_{k_{x+1},j_x}
	\cdot \left[V,A_{j_q}^\dagger\right]\prod_{x=1}^{q}\left(\prod_{y}B_{x,y}\prod_{z}C_{x,z}\right)A_{k_1},\\
	&\sum_{\pmb j,\pmb k}\prod_{x=1}^{q}\tau_{j_x,k_x}\prod_{x=1}^{q-1}\delta_{k_{x+1},j_x}
	\cdot A_{j_q}^\dagger\prod_{x=1}^{q}\left(\prod_{y}\left[V,B_{x,y}\right]\prod_{z}C_{x,z}\right)A_{k_1},\\
	&\sum_{\pmb j,\pmb k}\prod_{x=1}^{q}\tau_{j_x,k_x}\prod_{x=1}^{q-1}\delta_{k_{x+1},j_x}
	\cdot A_{j_q}^\dagger\prod_{x=1}^{q}\left(\prod_{y}B_{x,y}\prod_{z}\left[V,C_{x,z}\right]\right)A_{k_1},\\
	&\sum_{\pmb j,\pmb k}\prod_{x=1}^{q}\tau_{j_x,k_x}\prod_{x=1}^{q-1}\delta_{k_{x+1},j_x}
	\cdot A_{j_q}^\dagger\prod_{x=1}^{q}\left(\prod_{y}B_{x,y}\prod_{z}C_{x,z}\right)\left[V,A_{k_1}\right].
	\end{aligned}
	\end{equation}
	In each case, we see from \eq{commutator_t} and \eq{commutator_v} that the result is again a fermionic chain. Specifically, commutators $\left[T,A_{j_q}^\dagger\right]$ and $\left[T,A_{k_1}\right]$ increase the ``length'' of the current fermionic chain from $q$ to $q+1$; commutators $\left[T,B_{x,y}\right]$ and $\left[T,C_{x,z}\right]$ either create a fermionic subchain or give the zero operator, or they can be computed recursively when $B_{x,y}$ and $C_{x,z}$ are fermionic subchains; commutators $\left[V,A_{j_q}^\dagger\right]$ and $\left[V,A_{k_1}\right]$ do not increase the length $q$ of the current fermionic chain, but they increase the number of $B_{x,y}$ and $C_{x,z}$ by one; commutators $\left[V,B_{x,y}\right]$ and $\left[V,C_{x,z}\right]$ either give the zero operator, or they can be computed recursively if $B_{x,y}$ and $C_{x,z}$ are fermionic subchains.
	
	Each application of commutation rules \eq{commutator_t} and \eq{commutator_v} increases the number of terms by a factor of at most $3$. The nested commutator $\left[H_{\gamma_{p+1}},\cdots\left[H_{\gamma_2},H_{\gamma_1}\right]\right]$ contains products of at most $2(p+1)$ elementary fermionic operators, giving at most $6^{p+1}(p+1)!$ terms in total. Meanwhile, the number of $\tau$ or $\nu$ increases by one depending on whether $H_{\gamma_{p+2}}=T$ or $H_{\gamma_{p+2}}=V$. The claim about the number preservation can be verified directly. This completes the inductive step.
\end{proof}

\begin{proposition}[Fermionic seminorm of fermionic chain and subchain]
	\label{prop:multilayer_seminorm}
	Let $H=T+V=\sum_{j,k}\tau_{j,k}A_j^\dagger A_k+\sum_{l,m}\nu_{l,m}N_lN_m$ be an interacting-electronic Hamiltonian \eq{second_quantized_ham}. Then, we can bound the fermionic seminorm of the fermionic chain $X$ in \eq{fermionic_chain} as
	\begin{equation}
	\label{eq:fermionic_chain_bound}
	\norm{X}_\eta\leq\norm{\tau}^q\eta
	\prod_{x=1}^{q}\left(\prod_{y}\max_{j_x}\norm{B_{x,y}}_{\eta-1}\prod_{z}\max_{k_x}\norm{C_{x,z}}_{\eta-1}\right),
	\end{equation}
	whereas fermionic subchains $B_{x,y}$, $C_{x,z}$ in \eq{fermionic_subchain} can be similarly bounded as
	\begin{equation}
	\label{eq:fermionic_subchain_bound}
	\begin{aligned}
		\norm{B_{x,y}}_\eta&\leq\norm{\tau}^{{q}'}\norm{\nu}_{\max}\eta
		\prod_{{x}'=1}^{{q}'}\left(\prod_{{y}'}\max_{{j}_{{x}'}'}\norm{B_{{x}',{y}'}}_{\eta-1}\prod_{{z}'}\max_{{k}_{{x}'}'}\norm{C_{{x}',{z}'}}_{\eta-1}\right),\\
		\norm{C_{x,y}}_\eta&\leq\norm{\tau}^{{q}''}\norm{\nu}_{\max}\eta
		\prod_{{x}''=1}^{{q}''}\left(\prod_{{y}''}\max_{{j}_{{x}''}''}\norm{B_{{x}'',{y}''}}_{\eta-1}\prod_{{z}''}\max_{{k}_{{x}''}''}\norm{C_{{x}'',{z}''}}_{\eta-1}\right).
	\end{aligned}
	\end{equation}
\end{proposition}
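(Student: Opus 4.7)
The plan is to prove \eq{fermionic_chain_bound} and \eq{fermionic_subchain_bound} together by induction on the chain length $q$ (respectively $q'$, $q''$), generalizing the reductions carried out term-by-term in the proof of \prop{singlelayer_seminorm}. The base case $q=1$ is essentially one of the sub-cases already handled there, with the six variants in \prop{singlelayer_structure} corresponding to the different placements of $\nu$-indices in the outer middle layer. The chain and subchain cases differ only through the bridging scalar $\beta_{x,x_0'}$ or $\chi_{x,x_0''}$, whose absolute value is at most $\norm{\nu}_{\max}$, which accounts for the extra $\norm{\nu}_{\max}$ factor in \eq{fermionic_subchain_bound}.

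For the inductive step I would use $\norm{X}_\eta^2=\norm{X^\dagger X}_\eta$ from \prop{transition_expectation}, write $X=\sum_{j_q}A_{j_q}^\dagger Z_{j_q}$ by peeling off the outermost creation operator, and expand
\begin{equation*}
X^\dagger X=\sum_{j_q}Z_{j_q}^\dagger Z_{j_q}-\sum_{j_q,j_q'}Z_{j_q}^\dagger A_{j_q'}^\dagger A_{j_q} Z_{j_q'}
\end{equation*}
via the anticommutation relation $A_{j_q}A_{j_q'}^\dagger=\delta_{j_q,j_q'}I-A_{j_q'}^\dagger A_{j_q}$. The off-diagonal piece is controlled by the operator Cauchy-Schwarz inequality (\lem{cauchy}), which contracts the $(j_q,j_q')$ pair and produces a factor $N$; combined with the diagonal piece and the identity $[N,Z_{j_q}]=-Z_{j_q}$ (since $Z_{j_q}$ removes one particle), this yields the clean bound $X^\dagger X\leq\sum_{j_q}Z_{j_q}^\dagger Z_{j_q}\cdot N$, and restriction to the $\eta$-electron subspace trades $N$ for the scalar $\eta$. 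The H\"older-type inequality (\lem{fermionic_holder}) then peels off the $j_q$-dependent outer layer $\prod_y B_{q,y}$ as $\prod_y\max_{j_q}\norm{B_{q,y}}_{\eta-1}^2$; diagonalization (\lem{diagonalization}) of the Hermitian coefficient $(\tau^\dagger\tau)_{k_q,k_q'}=\sum_{j_q}\bar\tau_{j_q,k_q}\tau_{j_q,k_q'}$ contributes $\norm{\tau}^2$ and contracts $k_q$ with $k_q'$; and a second application of \lem{fermionic_holder} peels off $\prod_z C_{q,z}$ as $\prod_z\max_{k_q}\norm{C_{q,z}}_{\eta-1}^2$. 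What remains is $\sum_{k_q}W_{k_q}^\dagger W_{k_q}$, where $W_{k_q}$ encodes the remaining length-$(q-1)$ portion of the chain, and the inductive hypothesis supplies the bound on this residue. Taking a square root gives \eq{fermionic_chain_bound}; the subchain case \eq{fermionic_subchain_bound} is proved identically with the scalar $\beta$ or $\chi$ pulled out in absolute value at the very beginning.

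The main obstacle is bookkeeping rather than any new technical idea. At each reduction step one must verify that the intermediate operator retains the shape required to continue the induction, particularly that the residual $\sum_{k_q}W_{k_q}^\dagger W_{k_q}$ can be identified with the squared fermionic seminorm of a genuine shorter chain, which requires reinterpreting $k_q$ as the new outermost index of a length-$(q-1)$ chain. One must also handle the degenerate sub-cases in which $B_{q,y}$ or $C_{q,z}$ reduces to a scalar $\nu_{j_q,j_q}I$ or $\nu_{k_q,k_q}I$, and track carefully that the particle-number shift $\eta\mapsto\eta-1$ occurs exactly once per outer annihilation absorbed by H\"older; this is what ultimately produces the single $\eta$ (rather than $\eta^q$) in the final bound, since the $N$ from Cauchy-Schwarz and the $N=\sum_{k_q}A_{k_q}^\dagger A_{k_q}$ from the outer H\"older combine into a single factor of $\eta^2$ inside the square root. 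These counting subtleties exactly mirror the chain of calculations \eq{single_comm_1}--\eq{single_comm_5} in the base case and are the only delicate points.
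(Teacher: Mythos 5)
Your proposal is correct and follows essentially the same route as the paper: peel off the outermost $A_{j_q}^\dagger$, apply the operator Cauchy--Schwarz inequality to produce a single factor of $N$, then alternate the H\"older-type inequality (for the $B_{q,y}$ and $C_{q,z}$) with diagonalization of $\tau^\dagger\tau$, descend the chain via the delta-function links, and bound the bridging scalar $\beta$ or $\chi$ by $\norm{\nu}_{\max}$ for subchains. The one point to fix is that the residual $\sum_{k_q}W_{k_q}^\dagger W_{k_q}$ is \emph{not} the squared seminorm of a shorter chain (the shorter chain's square would carry an extra $-\sum W^\dagger A^\dagger A W$ cross term), so the induction must be run on the decapitated quantity $\Vert\sum_j D_j^\dagger D_j\Vert_\eta$ itself --- which is exactly what the paper's explicit $q$-fold iteration does, ending at $\Vert\sum_{k_1}A_{k_1}^\dagger A_{k_1}\Vert_\eta=\eta$ whose square root combines with the $\eta^{1/2}$ from the initial Cauchy--Schwarz step to give the single factor of $\eta$.
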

\begin{proof}
	We will prove this bound using \lem{cauchy}, \lem{diagonalization}, and \lem{fermionic_holder} in a similar way as in \prop{singlelayer_seminorm}. Specifically, we write $X=\sum_{j_q}A_{j_q}^\dagger D_{j_q}$, where
	\begin{equation}
	D_{j_q}=\sum_{\substack{j_1,\ldots,\\j_{q-1},\pmb k}}\prod_{x=1}^{q}\tau_{j_x,k_x}\prod_{x=1}^{q-1}\delta_{k_{x+1},j_x}
	\cdot \prod_{x=1}^{q}\left(\prod_{y}B_{x,y}\prod_{z}C_{x,z}\right)A_{k_1}.
	\end{equation}
	Then,
	\begin{equation}
	X^\dagger X
	=\sum_{j_{q_1},j_{q_2}}D_{j_{q_1}}^\dagger A_{j_{q_1}}A_{j_{q_2}}^\dagger D_{j_{q_2}}
	=\sum_{j_{q_1}}D_{j_{q_1}}^\dagger D_{j_{q_1}}
	-\sum_{j_{q_1},j_{q_2}}D_{j_{q_1}}^\dagger A_{j_{q_2}}^\dagger A_{j_{q_1}}D_{j_{q_2}}.
	\end{equation}
	Applying the operator Cauchy-Schwarz inequality (\lem{cauchy}), we obtain
	\begin{equation}
	X^\dagger X
	\leq\sum_{j_{q_1}}D_{j_{q_1}}^\dagger D_{j_{q_1}}
	+\sum_{j_{q_1},j_{q_2}}D_{j_{q_1}}^\dagger A_{j_{q_2}}^\dagger A_{j_{q_2}}D_{j_{q_1}}
	=\sum_{j_{q_1}}D_{j_{q_1}}^\dagger D_{j_{q_1}}N.
	\end{equation}
	Next, we write $D_{j_q}=\prod_{y}B_{q,y}E_{j_q}$, where
	\begin{equation}
	E_{j_q}=\sum_{\substack{j_1,\ldots,\\j_{q-1},\pmb k}}\prod_{x=1}^{q}\tau_{j_x,k_x}\prod_{x=1}^{q-1}\delta_{k_{x+1},j_x}
	\cdot \prod_{z}C_{q,z}\prod_{x=1}^{q-1}\left(\prod_{y}B_{x,y}\prod_{z}C_{x,z}\right)A_{k_1}.
	\end{equation}
	Invoking the H\"{o}lder-type inequality for the expectation value (\lem{fermionic_holder}), we get
	\begin{equation}
	\norm{X}_\eta
	=\sqrt{\norm{X^\dagger X}_\eta}
	\leq\eta^{1/2}\sqrt{\Big\Vert\sum_{j_{q_1}}D_{j_{q_1}}^\dagger D_{j_{q_1}}\Big\Vert_\eta}
	\leq\eta^{1/2}\sqrt{\Big\Vert\sum_{j_{q_1}}E_{j_{q_1}}^\dagger E_{j_{q_1}}\Big\Vert_\eta}
	\prod_{y}\max_{j_q}\norm{B_{q,y}}_{\eta-1}.
	\end{equation}
	
	We now write $E_{j_q}=\sum_{k_q}\tau_{j_q,k_q}F_{k_q}$, where
	\begin{equation}
	F_{k_q}=\sum_{\substack{j_1,\ldots,j_{q-1},\\k_1,\ldots,k_{q-1}}}\prod_{x=1}^{q-1}\tau_{j_x,k_x}\prod_{x=1}^{q-1}\delta_{k_{x+1},j_x}
	\cdot \prod_{z}C_{q,z}\prod_{x=1}^{q-1}\left(\prod_{y}B_{x,y}\prod_{z}C_{x,z}\right)A_{k_1}.
	\end{equation}
	Then,
	\begin{equation}
	\sum_{j_{q_1}}E_{j_{q_1}}^\dagger E_{j_{q_1}}
	=\sum_{k_{q_1},k_{q_2}}\left(\sum_{j_{q_1}}\bar{\tau}_{j_{q_1},k_{q_1}}\tau_{j_{q_1},k_{q_2}}\right)
	F_{k_{q_1}}^\dagger F_{k_{q_2}}.
	\end{equation}
	We perform diagonalization using \lem{diagonalization}, obtaining
	\begin{equation}
	\sum_{j_{q_1}}E_{j_{q_1}}^\dagger E_{j_{q_1}}
	\leq\norm{\tau^\dagger \tau}\sum_{k_{q_1}}F_{k_{q_1}}^\dagger F_{k_{q_1}}.
	\end{equation}
	Next, we write $F_{k_q}=\prod_{z}C_{q,z}G_{k_q}$, where
	\begin{equation}
	G_{k_q}=\sum_{\substack{j_1,\ldots,j_{q-1},\\k_1,\ldots,k_{q-1}}}\prod_{x=1}^{q-1}\tau_{j_x,k_x}\prod_{x=1}^{q-1}\delta_{k_{x+1},j_x}
	\cdot \prod_{x=1}^{q-1}\left(\prod_{y}B_{x,y}\prod_{z}C_{x,z}\right)A_{k_1}.
	\end{equation}
	Invoking again the H\"{o}lder-type inequality for the expectation value (\lem{fermionic_holder}), we get
	\begin{equation}
	\norm{X}_\eta
	\leq\norm{\tau}\eta^{1/2}\sqrt{\Big\Vert\sum_{k_{q_1}}G_{k_{q_1}}^\dagger G_{k_{q_1}}\Big\Vert_\eta}
	\prod_{y}\max_{j_q}\norm{B_{q,y}}_{\eta-1}\prod_{z}\max_{k_q}\norm{C_{q,z}}_{\eta-1}.
	\end{equation}
	
	Note that we can write $G_{k_q}=\sum_{j_{q-1}}\delta_{k_q,j_{q-1}}H_{j_{q-1}}$ with
	\begin{equation}
	H_{j_{q-1}}=\sum_{\substack{j_1,\ldots,j_{q-2},\\k_1,\ldots,k_{q-1}}}\prod_{x=1}^{q-1}\tau_{j_x,k_x}\prod_{x=1}^{q-2}\delta_{k_{x+1},j_x}
	\cdot \prod_{x=1}^{q-1}\left(\prod_{y}B_{x,y}\prod_{z}C_{x,z}\right)A_{k_1},
	\end{equation}
	which implies
	\begin{equation}
	\sum_{k_{q}}G_{k_{q}}^\dagger G_{k_{q}}
	=\sum_{j_{q-1}}H_{j_{q-1}}^\dagger H_{j_{q-1}}.
	\end{equation}
	We can now iterate this procedure $q$ times to get
	\begin{equation}
	\begin{aligned}
	\norm{X}_\eta
	&\leq\norm{\tau}^q\eta^{1/2}\sqrt{\Big\Vert\sum_{k_1}A_{k_1}^\dagger A_{k_1}\Big\Vert_{\eta}}
	\prod_{x=1}^{q}\left(\prod_{y}\max_{j_x}\norm{B_{x,y}}_{\eta-1}\prod_{z}\max_{k_x}\norm{C_{x,z}}_{\eta-1}\right)\\
	&=\norm{\tau}^q\eta
	\prod_{x=1}^{q}\left(\prod_{y}\max_{j_x}\norm{B_{x,y}}_{\eta-1}\prod_{z}\max_{k_x}\norm{C_{x,z}}_{\eta-1}\right).
	\end{aligned}
	\end{equation}
	This completes the proof of \eq{fermionic_chain_bound}.
	
	Essentially the same argument can be applied to bound the fermionic seminorm of fermionic subchains. The only difference is that we have additional coefficients $\beta_{x,{x}_0'}$ in $B_{x,y}$ and respectively $\chi_{x,{x}_0''}$ in $C_{x,z}$. But their indices will be contracted in the ${x}_0'$th and ${x}_0''$th iteraction of the above analysis and the coefficients can then be bounded by $\norm{\nu}_{\max}$, which completes the proof of \eq{fermionic_subchain_bound}.
\end{proof}

We now apply \prop{multilayer} to expand each nested commutator $\left[H_{\gamma_{p+1}},\cdots\left[H_{\gamma_2},H_{\gamma_1}\right]\right]$ into fermionic chains and use \prop{multilayer_seminorm} to bound their fermionic seminorm. The $\tau$ factors are already bounded by their spectral-norm $\norm{\tau}$ in \prop{multilayer_seminorm}. To proceed, we need to further bound each $\norm{B_{x,y}}_{\eta-1}$ and $\norm{C_{x,z}}_{\eta-1}$ separately. We have\footnotemark
\footnotetext{Note that one could also diagonalize the coefficient matrix $\nu^\dagger\nu$ and get a bound in terms of $\norm{\nu}$. However, such a bound will be loose for the electronic-structure simulation and will not be further considered here.}
\begin{equation}
\begin{aligned}
	\Big\Vert\sum_{l}\nu_{l,j_x}N_l\Big\Vert_{\eta-1}
	&=\sqrt{\Big\Vert\sum_{l_1,l_2}\bar{\nu}_{l_1,j_{x_1}}\nu_{l_2,j_{x_2}}N_{l_1}N_{l_2}\Big\Vert_{\eta-1}}
	\leq\norm{\nu}_{\max}\eta,\\
	\Big\Vert\sum_{m}\nu_{j_x,m}N_m\Big\Vert_{\eta-1}
	&=\sqrt{\Big\Vert\sum_{m_1,m_2}\bar{\nu}_{j_{x_1},m_1}\nu_{j_{x_2},m_2}N_{m_1}N_{m_2}\Big\Vert_{\eta-1}}
	\leq\norm{\nu}_{\max}\eta,\\
	\norm{\nu_{j_x,j_x}I}_{\eta-1}
	&\leq\norm{\nu}_{\max}
\end{aligned}
\end{equation}
for $B_{x,y}$ and similar estimates hold for $C_{x,z}$. In the case where $B_{x,y}$ or $C_{x,z}$ creates a fermionic subchain, we can estimate recursively using \prop{multilayer_seminorm}. In particular, we will introduce a factor of $\norm{\nu}_{\max}\eta$ each time a subchain is created. 

We know from \prop{multilayer} that the number of $\tau$ factors in each chain agrees with the number of $H_1=T$ in the nested commutator, whereas the number of $\nu$ factors coincides with the number of $H_0=V$. Since the number of fermionic chains is at most $6^pp!$, we obtain the bound
\begin{equation}
	\norm{\left[H_{\gamma_{p+1}},\cdots\left[H_{\gamma_2},H_{\gamma_1}\right]\right]}_\eta
	=\cO{\norm{\tau}^{\abs{\pmb \gamma}} (\eta \norm{\nu}_{\max})^{p+1-\abs{\pmb \gamma}}\eta}.
\end{equation}
Here, we have $1\leq\abs{\pmb \gamma}\leq p$ as $[T,T]=[V,V]=0$. This completes the proof of Eq.\ \eq{general_bound} of our main result \thm{fermionic_seminorm}.

\section{Path-counting bound on the expectation of fermionic operators}
\label{sec:path}

We now present the second strategy for bounding the expectation of fermionic operators, and apply it to estimate the fermionic seminorm of Trotter error.
Recall from \eq{pf2k_eta} that
\begin{equation}
\norm{\mathscr{S}_p(t)-e^{-itH}}_{\eta}=\cO{\max_{\pmb{\gamma}}\norm{\left[H_{\gamma_{p+1}},\cdots\left[H_{\gamma_2},H_{\gamma_1}\right]\right]}_{\eta}t^{p+1}},
\end{equation}
where $H_0=V=\sum_{l,m}\nu_{l,m}N_lN_m$, $H_1=T=\sum_{j,k}\tau_{j,k}A_j^\dagger A_k$ and $\gamma_{j} \in \{0,1\}$.
Hence to analyze the Trotter error, it suffices to bound the fermionic seminorm $\norm{\left[H_{\gamma_{p+1}},\cdots\left[H_{\gamma_2},H_{\gamma_1}\right]\right]}_{\eta}$.
We develop a general bound on this quantity in \sec{path_tech} based on a path-counting technique. We then use it to analyze the simulation of $d$-sparse interacting electrons in \sec{path-count-sparse}, proving Eq.\ \eq{sparse_bound} of our main result \thm{fermionic_seminorm}.

It is worth noting that our approach can also be adapted to establish \eq{general_path_bound}, a bound slightly weaker than our main result \eq{general_bound} but sufficient for our applications. See \append{pathcountdense} for details.

\subsection{Path-counting bound}
\label{sec:path_tech}
We start by bounding the transition amplitude between any two states in terms of the expectation value.
Since $\left[H_{\gamma_{p+1}},\cdots\left[H_{\gamma_2},H_{\gamma_1}\right]\right]$ is antihermitian, we have
\begin{equation}
\label{eq:trans2expect}
\norm{\left[H_{\gamma_{p+1}},\cdots\left[H_{\gamma_2},H_{\gamma_1}\right]\right]}_{\eta} = \max_{\ket{\psi_\eta}} \abs{\bra{\psi_\eta} \left[H_{\gamma_{p+1}},\cdots\left[H_{\gamma_2},H_{\gamma_1}\right]\right] \ket{\psi_\eta}}.
\end{equation}
We now aim to bound the expectation
\begin{equation}
\abs{\langle X\rangle} = \abs{\bra{\psi_\eta} \left[H_{\gamma_{p+1}},\cdots\left[H_{\gamma_2},H_{\gamma_1}\right]\right] \ket{\psi_\eta}}
\end{equation}
for any $\ket{\psi_\eta}$.
To this end, we first expand $\ket{\psi_\eta}$ as:
\begin{equation}
\ket{\psi_\eta} = \sum_{\pmb c \in \{0, 1\}^n, |\pmb c| = \eta} \alpha_{\pmb c} \ket{\pmb c},
\end{equation}
where $\pmb c$ is a configuration with $\eta$ electrons, and the number of ones in $\pmb c$ is given by the Hamming weight $\abs{\pmb c}=\sum_{j=0}^{n-1}c_j$.
Using the notation
\begin{equation}
\mu^0 = \nu,\quad \mu^1 = \tau,\quad H^0_{jk} = N_j N_k,\quad H^1_{jk} = A^\dagger_j A_k,
\end{equation}
we expand everything to get
\begin{align}
\abs{\langle X\rangle} &= \bigg|\sum_{j_{p+1}, k_{p+1}} \ldots \sum_{j_{1}, k_{1}} \sum_{\pmb{c}_1} \sum_{\pmb{c}_2} \bar{\alpha}_{\pmb{c}_1} \alpha_{\pmb{c}_2} \mu^{\gamma_{p+1}}_{j_{p+1} k_{p+1}} \ldots \mu^{\gamma_{1}}_{j_{1} k_{1}} \bra{\pmb{c}_1} \left[H^{\gamma_{p+1}}_{j_{p+1} k_{p+1}}, \ldots \left[H^{\gamma_{2}}_{j_{2} k_{2}}, H^{\gamma_{1}}_{j_{1} k_{1}}\right]\right] \ket{\pmb{c}_2}\bigg| \\
&\leq \norm{\tau}_{\max}^{\abs{\pmb \gamma}} \norm{\nu}_{\max}^{p+1-\abs{\pmb \gamma}} \sum_{\pmb{c}_1} \sum_{\pmb{c}_2} \left|\alpha_{\pmb{c}_1}\right| \left|\alpha_{\pmb{c}_2}\right| \times \nonumber\\
&\qquad \qquad\qquad \qquad\qquad \quad \sum_{\langle j_{p+1}, k_{p+1} \rangle} \ldots \sum_{\langle j_{1}, k_{1} \rangle}  \left|\bra{\pmb{c}_1} \left[H^{\gamma_{p+1}}_{j_{p+1} k_{p+1}}, \ldots \left[H^{\gamma_{2}}_{j_{2} k_{2}}, H^{\gamma_{1}}_{j_{1} k_{1}}\right]\right] \ket{\pmb{c}_2}\right|,
\end{align}
where $\pmb{c}_1, \pmb{c}_2$ are configurations with $\eta$ electrons, and $\langle j, k \rangle$ only sum over indices such that the corresponding $\mu_{j, k}^\gamma \neq 0$ (either $\tau$ or $\nu$ depending on $\gamma$).

Using the commutation relations in Equation~\eqref{eq:commutator_t}~and~\eqref{eq:commutator_v}, we know that the nested commutator $\left[H^{\gamma_{p+1}}_{j_{p+1} k_{p+1}}, \ldots \left[H^{\gamma_{2}}_{j_{2} k_{2}}, H^{\gamma_{1}}_{j_{1} k_{1}}\right]\right]$
can be written as a sum of
\begin{equation}
(-1)^a \ldots A^\dagger_j \ldots N_l \ldots A_k \ldots,
\end{equation}
for some $a \in \{0, 1\}$ and a sequence of elementary fermionic operators.
We call each term $P$ a \emph{fermionic path} and write $P \rhd \left(H^{\gamma_{p+1}}_{j_{p+1} k_{p+1}}, \ldots, H^{\gamma_{1}}_{j_{1} k_{1}} \right)$ to mean $P$ is one of the terms in the expansion of the nested commutator.
If the nested commutator evaluates to zero, then we consider the set
\begin{equation}
\left\{P \mbox{ such that } P \rhd \left(H^{\gamma_{p+1}}_{j_{p+1} k_{p+1}}, \ldots, H^{\gamma_{1}}_{j_{1} k_{1}} \right)\right\}
\end{equation}
to be an empty set.
One possible expansion of the nested commutator is presented in \prop{multilayer}.
This allows us to make a further expansion to yield
\begin{equation}
\label{eq:quadratic_form}
\abs{\langle X\rangle} \leq c_{\tau \nu} \sum_{\pmb{c}_1} \sum_{\pmb{c}_2} \left|\alpha_{\pmb{c}_1}\right| \left|\alpha_{\pmb{c}_2}\right| \sum_{\langle j_{p+1}, k_{p+1} \rangle} \ldots \sum_{\langle j_{1}, k_{1} \rangle} \sum_{P \rhd \left(H^{\gamma_{p+1}}_{j_{p+1} k_{p+1}}, \ldots, H^{\gamma_{1}}_{j_{1} k_{1}} \right)} \left| \bra{\pmb{c}_1} P \ket{\pmb{c}_2} \right|,
\end{equation}
where $c_{\tau \nu} = \norm{\tau}_{\max}^{\abs{\pmb \gamma}} \norm{\nu}_{\max}^{p+1-\abs{\pmb \gamma}}$.
We use the following proposition to characterize $\left| \bra{\pmb{c}_1} P \ket{\pmb{c}_2} \right|$.

\begin{lemma} \label{lem:fermionic_path}
For any computational basis state $\ket{\pmb{c}}$ where $\pmb{c}$ is a fermionic configuration, and fermionic path
\begin{equation}
P = (-1)^a \ldots A^\dagger_j \ldots N_l \ldots A_k \ldots,
\end{equation}
we have either $P \ket{\pmb{c}}$ is a computational basis state with some phase $\pm 1$ or $P \ket{\pmb{c}} = 0$.
\end{lemma}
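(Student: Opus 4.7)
My plan is to prove this by a straightforward induction on the number of elementary fermionic operators composing $P$, applying them to $\ket{\pmb c}$ one at a time from right to left. The base case is trivial: if $P$ consists only of the overall sign $(-1)^a$, then $P\ket{\pmb c} = \pm\ket{\pmb c}$, which is a computational basis state with phase $\pm 1$.

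For the inductive step, I would suppose that after applying the rightmost $r$ elementary operators to $\ket{\pmb c}$ the result is $\pm\ket{\pmb c'}$ for some configuration $\pmb c'$, or the zero vector. If zero, then applying further operators keeps it zero and we are done. Otherwise, I apply the next elementary operator and check the three cases using the definitions \eqref{eq:createrule} and \eqref{eq:annihirule}: (i) $A_j^\dagger$ either sends $\pm\ket{\pmb c'}$ to $0$ (if $c_j' = 1$) or to $\pm(-1)^{\sum_{k<j}c_k'}\ket{\pmb c''}$ where $\pmb c''$ differs from $\pmb c'$ only by flipping coordinate $j$ from $0$ to $1$; (ii) $A_k$ is symmetric, mapping $\pm\ket{\pmb c'}$ to $0$ (if $c_k' = 0$) or to another basis state with phase $\pm 1$; (iii) $N_l = A_l^\dagger A_l$ acts as the identity or annihilates $\ket{\pmb c'}$ depending on whether $c_l' = 1$ or $c_l' = 0$. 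In every case the image is again a basis state with phase $\pm 1$, or zero, closing the induction.

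I do not anticipate any real obstacle, since the lemma is essentially a restatement of the fact that elementary fermionic operators act as signed partial permutations on the computational basis of Fock space. The only mild subtlety is bookkeeping the Jordan--Wigner signs $(-1)^{\sum_{k<j}c_k}$, but these are already absorbed into the overall $\pm 1$ phase and do not affect the structure of the argument. Thus the entire proof reduces to a short induction invoking the definitions of $A_j^\dagger$, $A_k$, and $N_l$ from \sec{prelim_quantization}.
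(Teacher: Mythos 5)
Your proposal is correct and matches the paper's own proof: both argue by induction on the number of elementary fermionic operators, peeling them off from the right and checking the three cases $A_j^\dagger$, $A_k$, $N_l$ against the defining relations \eqref{eq:createrule} and \eqref{eq:annihirule}. No gaps.
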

\begin{proof}
The proof follows from a simple induction.
For the base case, we have $P = (-1)^a$ without any fermionic operator, so $P \ket{\pmb{c}}$ is a computational basis state with some phase $\pm 1$.
Now we consider the three cases: $P = N_l P'$, $P = A_k P'$, or $P = A^\dagger_j P'$.
By induction, we have $P' \ket{\pmb{c}}$ is a computational basis state $\ket{\pmb{c}'}$ with some phase $\pm 1$ or $P' \ket{\pmb{c}} = 0$. The latter is trivial. For the former case, we go through the following three cases.
\begin{itemize}
\item If $N_l$ is applied on $\ket{\pmb{c}'}$, we check if site-$l$ has an electron in configuration $\pmb{c}'$. If site-$l$ has an electron, then $N_l \ket{\pmb{c}'} = \ket{\pmb{c}'}$; otherwise, $N_l \ket{\pmb{c}'} = 0$.
\item If $A_k$ is applied on $\ket{\pmb{c}'}$, we check if site-$k$ has an electron in configuration $\pmb{c}'$. If site-$k$ has an electron, then $A_k \ket{\pmb{c}'}$ will remove the site-$k$ electron and add some phase according to the rule in Equation~\eqref{eq:annihirule}; otherwise, $A_k \ket{\pmb{c}'} = 0$.
\item If $A^\dagger_j$ is applied on $\ket{\pmb{c}'}$, we check if site-$j$ has an electron in configuration $\pmb{c}'$. If site-$j$ does not have an electron, then $A^\dagger_j \ket{\pmb{c}'}$ will create an electron at site-$j$ and add some phase according to the rule in Equation~\eqref{eq:createrule}; otherwise, $A^\dagger_j \ket{\pmb{c}'} = 0$.
\end{itemize}
Therefore, $P\ket{\pmb{c}}$ is either a computational basis state with some phase $\pm 1$ or $P \ket{\pmb{c}} = 0$.
\end{proof}

\begin{corollary}\label{cor:sumc1-c1Pc2}
We have that $\left| \bra{\pmb{c}_1} P \ket{\pmb{c}_2} \right|$ is either $0$ or $1$. Furthermore, $\sum_{\pmb{c}_1 \in \mathcal{S}} \left| \bra{\pmb{c}_1} P \ket{\pmb{c}_2} \right| \leq \norm{P \ket{\pmb{c}_2}}$ for any set $\mathcal{S}$ of configurations.
\end{corollary}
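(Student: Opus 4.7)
The plan is to derive both claims as immediate consequences of \lem{fermionic_path}, which was just established. By that lemma, the vector $P\ket{\pmb{c}_2}$ is either the zero vector or of the form $\pm\ket{\pmb{c}'}$ for some uniquely determined basis configuration $\pmb{c}'$ (depending on $P$ and $\pmb{c}_2$).

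For the first assertion, I would simply take the inner product with a basis state $\ket{\pmb{c}_1}$: in the zero case we get $0$, and in the $\pm\ket{\pmb{c}'}$ case orthonormality of the basis gives $\bra{\pmb{c}_1}P\ket{\pmb{c}_2} \in \{0,\pm 1\}$, so the absolute value is either $0$ or $1$.

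For the second assertion, I would split into the same two cases. If $P\ket{\pmb{c}_2}=0$, both sides are $0$ and the inequality is trivial. Otherwise $\|P\ket{\pmb{c}_2}\|=1$, and by the previous paragraph at most one value of $\pmb{c}_1 \in \mathcal{S}$ (namely $\pmb{c}_1 = \pmb{c}'$) contributes a nonzero summand equal to $1$, so $\sum_{\pmb{c}_1 \in \mathcal{S}}|\bra{\pmb{c}_1}P\ket{\pmb{c}_2}| \leq 1 = \|P\ket{\pmb{c}_2}\|$.

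There is essentially no obstacle here, since all the substantive work was done in \lem{fermionic_path}; the corollary is a one-line unpacking of that lemma together with the orthonormality of computational-basis states. The only thing worth being careful about is to observe that the bound also follows directly from Bessel's inequality applied to the orthonormal family $\{\ket{\pmb{c}_1}\}_{\pmb{c}_1\in\mathcal{S}}$, which gives a conceptually cleaner derivation if one wishes to avoid the case split.
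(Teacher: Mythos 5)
Your proof is correct and follows exactly the argument the paper intends: the corollary is stated as an immediate consequence of \lem{fermionic_path}, and your case split (zero vector versus $\pm\ket{\pmb{c}'}$ with at most one nonvanishing, unit-magnitude overlap by orthonormality) is precisely that one-line unpacking. No further comment is needed.
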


Next, we define a graph $\mathcal{G} = (\mathcal{V}, \mathcal{E})$ where the vertices $\mathcal{V}$ are the second-quantized configurations with $\eta$ electrons, and the weighted adjacency matrix for the edges $\mathcal{E}$ is defined as
\begin{equation}
w_{\pmb{c}_1, \pmb{c}_2} = \sum_{\langle j_{p+1}, k_{p+1} \rangle} \ldots \sum_{\langle j_{1}, k_{1} \rangle} \sum_{P \rhd \left(H^{\gamma_{p+1}}_{j_{p+1} k_{p+1}}, \ldots, H^{\gamma_{1}}_{j_{1} k_{1}} \right)} \left| \bra{\pmb{c}_1} P \ket{\pmb{c}_2} \right|.
\end{equation}
The weight $w_{\pmb{c}_1, \pmb{c}_2}$ counts the number of fermionic paths that can take $\ket{\pmb{c}_2}$ to $\ket{\pmb{c}_1}$.
Note that this graph may contain self-loops (equivalent to $w_{\pmb{c}_1, \pmb{c}_1} > 0$) as there are fermionic paths that leave $\ket{\pmb{c}_1}$ unchanged or simply add a phase of $-1$. We now define the \emph{degree} of $\pmb{c}_2$ as
\begin{align}
\label{eq:sym_degree}
\mathrm{deg}(\pmb{c}_2) &= \sum_{\pmb{c}_1} \frac{w_{\pmb{c}_1, \pmb{c}_2} + w_{\pmb{c}_2, \pmb{c}_1}}{2}\\
& = \sum_{\pmb{c}_1} \sum_{\langle j_{p+1}, k_{p+1} \rangle} \ldots \sum_{\langle j_{1}, k_{1} \rangle} \sum_{P \rhd \left(H^{\gamma_{p+1}}_{j_{p+1} k_{p+1}}, \ldots, H^{\gamma_{1}}_{j_{1} k_{1}} \right)} \frac{\left| \bra{\pmb{c}_1} P \ket{\pmb{c}_2} \right| + \left| \bra{\pmb{c}_2} P \ket{\pmb{c}_1} \right| }{2} \\
& = \sum_{\pmb{c}_1} \sum_{\langle j_{p+1}, k_{p+1} \rangle} \ldots \sum_{\langle j_{1}, k_{1} \rangle} \sum_{P \rhd \left(H^{\gamma_{p+1}}_{j_{p+1} k_{p+1}}, \ldots, H^{\gamma_{1}}_{j_{1} k_{1}} \right)} \frac{\left| \bra{\pmb{c}_1} P \ket{\pmb{c}_2} \right| + \left| \bra{\pmb{c}_1} P^\dagger \ket{\pmb{c}_2} \right| }{2} \\
&\leq \sum_{\langle j_{p+1}, k_{p+1} \rangle} \ldots \sum_{\langle j_{1}, k_{1} \rangle} \sum_{P \rhd \left(H^{\gamma_{p+1}}_{j_{p+1} k_{p+1}}, \ldots, H^{\gamma_{1}}_{j_{1} k_{1}} \right)} \frac{\norm{P \ket{\pmb{c}_2}} + \norm{P^\dagger \ket{\pmb{c}_2}}}{2}, \label{eq:uppdegC2}
\end{align}
which is equivalent to counting the number of fermionic paths that evaluate nonzero on the initial state $\ket{\pmb{c}_2}$.
The last inequality follows from \cor{sumc1-c1Pc2}.
We now introduce the following lemma which relates the maximum degree and the quadratic form \eq{quadratic_form} we wish to bound.

\begin{lemma} \label{lem:ev-and-maxcol}
For any real symmetric matrix $w \in \mathbb{R}^{k\times k}$ with nonnegative entries and normalized real vectors $v \in \mathbb{R}^k$ with $\norm{v} = 1$, we have\footnotemark
\begin{equation}
\sum_{i,j} w_{i,j} v_i v_j \leq \max_i \sum_j w_{i,j},
\end{equation}
\end{lemma}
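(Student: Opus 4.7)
My plan is to prove the inequality via an AM–GM bound on the product $v_i v_j$, together with the symmetry of $w$. Since $w$ has nonnegative entries, I can first bound
\[
\sum_{i,j} w_{i,j} v_i v_j \leq \sum_{i,j} w_{i,j} \abs{v_i}\abs{v_j},
\]
so without loss of generality I may assume $v_i \geq 0$ for every $i$. The key estimate is then the elementary bound $v_i v_j \leq \tfrac{1}{2}(v_i^2 + v_j^2)$, which combined with $w_{i,j} \geq 0$ gives
\[
\sum_{i,j} w_{i,j} v_i v_j \leq \tfrac{1}{2}\sum_{i,j} w_{i,j}\,v_i^2 + \tfrac{1}{2}\sum_{i,j} w_{i,j}\,v_j^2.
\]

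Next I would invoke the symmetry $w_{i,j} = w_{j,i}$ to relabel the indices in the second sum, so that both sums become identical and collapse to $\sum_i v_i^2 \sum_j w_{i,j}$. Factoring out the maximum row sum and using the normalization $\sum_i v_i^2 = \norm{v}^2 = 1$ yields
\[
\sum_{i,j} w_{i,j} v_i v_j \leq \Big(\max_i \sum_j w_{i,j}\Big)\sum_i v_i^2 = \max_i \sum_j w_{i,j},
\]
which is the desired bound.

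There is essentially no obstacle here; the only care needed is to verify that the AM–GM step indeed uses the nonnegativity of the $w_{i,j}$ entries (otherwise the inequality could flip) and that the symmetry hypothesis is what allows the two halves to be combined cleanly. The lemma can alternatively be viewed as the well-known fact that for a nonnegative symmetric matrix the quadratic form is bounded by the $\ell_\infty$ operator norm (maximum absolute row sum), but the AM–GM argument above is the most direct and self-contained route.
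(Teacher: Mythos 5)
Your proof is correct, and it takes a genuinely different route from the paper's. The paper invokes the Rayleigh quotient theorem to bound the quadratic form by the largest eigenvalue $\lambda_1$ of $w$, and then bounds $\lambda_1$ by evaluating the eigenvalue equation at the coordinate where the top eigenvector attains its maximum entry — essentially the standard Perron--Frobenius-style argument that the spectral radius of a nonnegative matrix is at most its maximum row sum (the paper's footnote notes the Ger\v{s}gorin-disc alternative). Your argument replaces all of this with the elementary chain $\sum_{i,j} w_{i,j} v_i v_j \leq \sum_{i,j} w_{i,j}\abs{v_i}\abs{v_j} \leq \tfrac{1}{2}\sum_{i,j} w_{i,j}(v_i^2+v_j^2)$, followed by a relabeling that uses symmetry and the normalization $\sum_i v_i^2 = 1$; each step correctly uses the nonnegativity of $w$ where needed. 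What your approach buys is self-containedness (no spectral theorem or eigenvector existence needed) and a painless extension to the non-symmetric case, where the same computation yields the average of the maximum row sum and the maximum column sum — which is exactly the quantity the paper manufactures by symmetrizing its weighted adjacency matrix before applying the lemma. What the paper's approach buys is the slightly stronger intermediate statement that the top eigenvalue itself is bounded by the maximum row sum, which the authors flag as relevant had they needed to bound transition amplitudes without first passing to expectations.
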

\footnotetext{Note that this can alternatively be proved by analyzing the Ger\v{s}gorin discs; see for example \cite[Corollary 6.1.5]{horn2012matrix}. Had we not taken the expectation in \eq{trans2expect} and symmetrized the weighted adjacency matrix in \eq{sym_degree}, we would have needed a stronger bound here \cite[5.6.P21]{horn2012matrix}.}
\begin{proof}
Let $u_1$ be an eigenvector corresponding to the largest eigenvalue $\lambda_1$ of $w$ with $\norm{u_1}=1$.
By the Rayleigh quotient theorem \cite[Theorem 4.2.2]{horn2012matrix},
\begin{equation}
	v^T w v \leq u_1^T w u_1=\lambda_1
\end{equation}
for any $v \in \mathbb{R}^k$ with $\norm{v} = 1$, where $v^T$ denotes the vector transpose of $v$. Consider $i^* = \argmax_{j} (u_1)_j$.
We assume $(u_1)_{i^*} > 0$ without loss of generality, for otherwise we multiply $u_1$ by $-1$.
Then, we have
\begin{equation}
u_1^T w u_1 = \lambda_1 = \frac{(w u_1)_{i^*}}{(u_1)_{i^*}} = \frac{\sum_j w_{i^*, j} (u_1)_j}{(u_1)_{i^*}} \leq \sum_j w_{i^*, j} \leq \max_i \sum_j w_{i,j}.
\end{equation}
This concludes the proof.
\end{proof}

Using Lemma~\ref{lem:ev-and-maxcol}, we obtain an upper bound of $\abs{\langle X\rangle}$
\begin{align}
\abs{\langle X\rangle} &\leq c_{\tau \nu} \sum_{\pmb{c}_1} \sum_{\pmb{c}_2} \left|\alpha_{\pmb{c}_1}\right| \left|\alpha_{\pmb{c}_2}\right| \sum_{\langle j_{p+1}, k_{p+1} \rangle} \ldots \sum_{\langle j_{1}, k_{1} \rangle} \sum_{P \rhd \left(H^{\gamma_{p+1}}_{j_{p+1} k_{p+1}}, \ldots, H^{\gamma_{1}}_{j_{1} k_{1}} \right)} \left| \bra{\pmb{c}_1} P \ket{\pmb{c}_2} \right|\\
&= c_{\tau \nu} \sum_{\pmb{c}_1} \sum_{\pmb{c}_2} w_{\pmb{c}_1, \pmb{c}_2} \left|\alpha_{\pmb{c}_1}\right| \left|\alpha_{\pmb{c}_2}\right|
= c_{\tau \nu} \sum_{\pmb{c}_1} \sum_{\pmb{c}_2} \frac{w_{\pmb{c}_1, \pmb{c}_2}+w_{\pmb{c}_2, \pmb{c}_1}}{2} \left|\alpha_{\pmb{c}_1}\right| \left|\alpha_{\pmb{c}_2}\right| \\
&\leq c_{\tau \nu} \max_{\pmb{c}_2} \sum_{\pmb{c}_1} \frac{w_{\pmb{c}_1, \pmb{c}_2}+w_{\pmb{c}_2, \pmb{c}_1}}{2} 
= c_{\tau \nu} \max_{\pmb{c}} \mathrm{deg}\left(\pmb{c}\right)
\end{align}
in terms of the maximum degree of the graph $\mathcal{G}$. Finally, we arrive at the following proposition by combining the above bound with Equation~\eqref{eq:uppdegC2}.

\begin{proposition}[Path-counting bound of fermionic seminorm]
	\label{prop:path_bound_seminorm}
	Let $H=T+V=\sum_{j,k}\tau_{j,k}A_j^\dagger A_k+\sum_{l,m}\nu_{l,m}N_lN_m$ be an interacting-electronic Hamiltonian as in \eq{second_quantized_ham}. Then, each nested commutator $\left[H_{\gamma_{p+1}},\cdots\left[H_{\gamma_2},H_{\gamma_1}\right]\right]$, where $H_0=V$ and $H_1=T$, can be bounded as
	\begin{equation} \label{eq:pathcountbd}
	\begin{aligned}
	\norm{\left[H_{\gamma_{p+1}},\cdots\left[H_{\gamma_2},H_{\gamma_1}\right]\right]}_{\eta}
	\leq \norm{\tau}_{\max}^{\abs{\pmb \gamma}} \norm{\nu}_{\max}^{p+1-\abs{\pmb \gamma}} \max_{\pmb{c}_\eta} \mathrm{deg}\left(\pmb{c}_\eta\right),
	\end{aligned}
	\end{equation}
	where $\abs{\pmb \gamma}=\sum_{q=0}^{p+1}\gamma_q$ and $\pmb{c}_\eta$ is a fermionic configuration with $\eta$ electrons.
	Here, the degree of configuration $\pmb c_\eta$ is defined as
	\begin{equation}
	\mathrm{deg}\left(\pmb{c}_\eta\right) = \sum_{\langle j_{p+1}, k_{p+1} \rangle} \ldots \sum_{\langle j_{1}, k_{1} \rangle} \sum_{P \rhd \left(H^{\gamma_{p+1}}_{j_{p+1} k_{p+1}}, \ldots, H^{\gamma_{1}}_{j_{1} k_{1}} \right)} \frac{1}{2}\left(\norm{P \ket{\pmb{c}_\eta}} + \Vert P^\dagger \ket{\pmb{c}_\eta} \Vert\right),
	\end{equation}
	where for $q = 1, \ldots, p+1$, $\langle j_q, k_q \rangle$ sum over indices $j_q, k_q$ such that $\mu_{j_q,k_q}^{\nu_q}\neq 0$, fermionic path $P \rhd \left(H^{\gamma_{p+1}}_{j_{p+1} k_{p+1}}, \ldots, H^{\gamma_{1}}_{j_{1} k_{1}} \right)$ goes over all the terms $P=(-1)^a \ldots A^\dagger_j \ldots N_l \ldots A_k \ldots$ in the expansion of $\left[H^{\gamma_{p+1}}_{j_{p+1} k_{p+1}}, \ldots \left[H^{\gamma_{2}}_{j_{2} k_{2}}, H^{\gamma_{1}}_{j_{1} k_{1}}\right]\right]$, and $\mu^0 = \nu,\ \mu^1 = \tau,\ H^0_{jk} = N_j N_k,\ H^1_{jk} = A^\dagger_j A_k$.
\end{proposition}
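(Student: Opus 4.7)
The plan is to assemble the three ingredients set up in the preceding derivation. First, since the nested commutator $\left[H_{\gamma_{p+1}},\cdots\left[H_{\gamma_2},H_{\gamma_1}\right]\right]$ is Hermitian or antihermitian depending on the parity of $p$, I will invoke \prop{transition_expectation} to replace the maximum transition amplitude defining $\norm{\cdot}_\eta$ by a maximum expectation $|\bra{\psi_\eta}\cdot\ket{\psi_\eta}|$ on the $\eta$-electron manifold. Expanding $\ket{\psi_\eta}=\sum_{\pmb c}\alpha_{\pmb c}\ket{\pmb c}$ in the computational basis, expanding each $H_{\gamma_q}$ as $\sum_{j_q,k_q}\mu^{\gamma_q}_{j_q k_q}H^{\gamma_q}_{j_q k_q}$, and pulling the coefficients outside by the triangle inequality will factor out the scalar $\norm{\tau}_{\max}^{|\pmb\gamma|}\norm{\nu}_{\max}^{p+1-|\pmb\gamma|}$ and leave a double sum $\sum_{\pmb c_1,\pmb c_2}|\alpha_{\pmb c_1}||\alpha_{\pmb c_2}|\,w_{\pmb c_1,\pmb c_2}$, where $w_{\pmb c_1,\pmb c_2}$ counts fermionic paths $P$ arising in the commutator expansion that connect the two configurations.

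Next, I will invoke \lem{fermionic_path} to conclude that each fermionic path $P=(-1)^a\cdots A_j^\dagger\cdots N_l\cdots A_k\cdots$ either annihilates a computational basis state or maps it to another basis state with a phase $\pm 1$, so that $|\bra{\pmb c_1}P\ket{\pmb c_2}|\in\{0,1\}$; by \cor{sumc1-c1Pc2}, summing this quantity over $\pmb c_1$ in any subset is bounded by $\norm{P\ket{\pmb c_2}}$, and summing over $\pmb c_2$ by $\Vert P^\dagger\ket{\pmb c_2}\Vert$. Then, since $w$ need not be symmetric, I will symmetrize it to $\tfrac{w+w^{\top}}{2}$, which leaves the quadratic form on the nonnegative vector $(|\alpha_{\pmb c}|)_{\pmb c}$ unchanged, and apply \lem{ev-and-maxcol} to bound it by the maximum symmetric row sum $\max_{\pmb c_2}\sum_{\pmb c_1}\tfrac{w_{\pmb c_1,\pmb c_2}+w_{\pmb c_2,\pmb c_1}}{2}$. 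Combining this with \cor{sumc1-c1Pc2} applied separately to $P$ and to $P^\dagger$ recovers exactly the definition of $\mathrm{deg}(\pmb c_\eta)$ in the statement and gives \eq{pathcountbd}.

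The main obstacle, already dispatched by \lem{fermionic_path}, is avoiding an expansion that loses a combinatorial factor proportional to the number of terms produced by nesting commutators: the path-counting viewpoint reinterprets the expansion as a walk on the graph of $\eta$-electron configurations and exploits the sparsity of $T$ and $V$ in the computational basis to control the degree. A secondary care-point is the symmetrization step, without which the Rayleigh-quotient argument in \lem{ev-and-maxcol} would not directly apply; the remedy is essentially automatic here because $P^\dagger$ also appears in the expansion of the nested commutator (reflecting its underlying (anti)Hermiticity), so bounding both $\norm{P\ket{\pmb c_\eta}}$ and $\Vert P^\dagger\ket{\pmb c_\eta}\Vert$ suffices and no extra factor is incurred.
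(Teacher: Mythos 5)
Your proposal is correct and follows essentially the same route as the paper: reduction to an expectation via \prop{transition_expectation}, expansion in the computational basis with the coefficients bounded by max-norms, the path lemma \lem{fermionic_path} with \cor{sumc1-c1Pc2}, and symmetrization of the weighted adjacency matrix followed by \lem{ev-and-maxcol}. The only cosmetic difference is your justification of the $P^\dagger$ term (via the commutator's (anti)Hermiticity) where the paper simply uses $\abs{\bra{\pmb c_2}P\ket{\pmb c_1}}=\abs{\bra{\pmb c_1}P^\dagger\ket{\pmb c_2}}$; both yield the same degree bound.
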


\subsection{Counting fermionic paths for \texorpdfstring{$d$}{d}-sparse interactions}
\label{sec:path-count-sparse}
As an illustrative example, let us consider an upper bound of $\max_{\pmb{c}} \mathrm{deg}(\pmb{c})$ for electronic Hamiltonians with $d$-sparse interactions.
We will use the commutation relations in Equation~\eqref{eq:commutator_t}~and~\eqref{eq:commutator_v}, restated below
\begin{equation} \label{eq:commutator_t_2}
	\left[A_j^\dagger A_k,A_{j_x}^\dagger\right]=\delta_{k,j_x}A_j^\dagger,\quad
	\left[A_j^\dagger A_k,A_{k_y}\right]=-\delta_{k_y,j}A_k,\quad
	\left[A_j^\dagger A_k,N_{l_z}\right]
	=\delta_{k,l_z}A_j^\dagger A_k
	-\delta_{j,l_z}A_j^\dagger A_k.
\end{equation}
\begin{equation} \label{eq:commutator_v_2}
	\left[N_lN_m,A_{j_x}^\dagger\right]
	=\delta_{m,j_x}N_lA_{j_x}^\dagger+\delta_{l,j_x}A_{j_x}^\dagger N_m, \quad
	\left[N_lN_m,A_{k_x}\right]
	=-\delta_{m,k_x}N_lA_{k_x}-\delta_{l,k_x}A_{k_x} N_m.
\end{equation}
We start with an intuitive argument.
For every $q = 2, \ldots, p+1$, we have
\begin{equation}
\left[H^{\gamma_{q}}_{j_{q} k_{q}}, \ldots \left[H^{\gamma_{2}}_{j_{2} k_{2}}, H^{\gamma_{1}}_{j_{1} k_{1}}\right]\right] = \sum_{P \rhd \left(H^{\gamma_{q-1}}_{j_{q-1} k_{q-1}}, \ldots, H^{\gamma_{1}}_{j_{1} k_{1}}\right)} \left[H^{\gamma_{q}}_{j_{q} k_{q}}, P\right],
\end{equation}
where $P$ only contains fermionic operator acting on sites $j_1, k_1, \ldots, j_{q-1}, k_{q-1}$. From the commutation relations \eq{commutator_t_2} and \eq{commutator_v_2}, we see that at least one of $j_q, k_q$ must match one of the indices $j_1, k_1, \ldots, j_{q-1}, k_{q-1}$.
Furthermore, for every $j_q$, there are at most $d$ $k_q$'s that have non-zero coefficient in $\tau_{j_q, k_q}$ (for $\gamma_q = 1$) or $\nu_{j_q, k_q}$ (for $\gamma_q = 0$).
Hence, we have the following bound
\begin{equation}
\sum_{\langle j_{p+1}, k_{p+1} \rangle} \ldots \sum_{\langle j_{1}, k_{1} \rangle} \sum_{P \rhd \left(H^{\gamma_{p+1}}_{j_{p+1} k_{p+1}}, \ldots, H^{\gamma_{1}}_{j_{1} k_{1}} \right)} 1 = \mathcal{O}(n d^{p+1}).
\end{equation}
The $n$ factor follows from the fact that only one index can be freely choosen between $0, \ldots, n-1$. And for any pair of indices $\langle j_q, k_q \rangle$, one of them has to match the previous indices, while the other one can only choose from the $d$ indices under the sparsity constraint.
Hence we have the $d^p$ factor in the asymptotic bound.

However, this analysis can be further improved using certain properties of $P$.
Specifically, we will show that the rightmost fermionic operator in $P$ can be either an annihilation operator $A$ or an occupation-number operator $N$.
This means that, for $\norm{P \ket{\pmb{c}_\eta}}$ to be nonzero, the rightmost fermionic operator of $P$ must act on the $\eta$ occupied sites in the configuration $\pmb{c}_\eta$.
Therefore, we have the bound
\begin{equation}
\sum_{\langle j_{p+1}, k_{p+1} \rangle} \ldots \sum_{\langle j_{1}, k_{1} \rangle} \sum_{P \rhd \left(H^{\gamma_{p+1}}_{j_{p+1} k_{p+1}}, \ldots, H^{\gamma_{1}}_{j_{1} k_{1}} \right)} \norm{P \ket{\pmb{c}_\eta}} = \mathcal{O}(\eta d^{p+1}).
\end{equation}
Similarly, we have
\begin{equation}
\sum_{\langle j_{p+1}, k_{p+1} \rangle} \ldots \sum_{\langle j_{1}, k_{1} \rangle} \sum_{P^\dagger \rhd \left(H^{\gamma_{p+1}}_{j_{p+1} k_{p+1}}, \ldots, H^{\gamma_{1}}_{j_{1} k_{1}} \right)} \Vert P^\dagger \ket{\pmb{c}_\eta} \Vert = \mathcal{O}(\eta d^{p+1}).
\end{equation}
Combining with the Trotter error bound \eq{pf2k_eta} and the path-counting bound \eq{pathcountbd}, we obtain
\begin{equation}
\norm{\mathscr{S}_p(t)-e^{-itH}}_{\eta}
=\cO{\norm{\tau}_{\max}^{\abs{\pmb \gamma}} \norm{\nu}_{\max}^{p+1-\abs{\pmb \gamma}} d^{p+1}\eta t^{p+1}}.
\end{equation}
Finally, since $1 \leq \abs{\pmb \gamma}=\sum_{q=1}^{p+1} \gamma_q \leq p$, we have
\begin{equation}
\norm{\tau}_{\max}^{\abs{\pmb \gamma}} \norm{\nu}_{\max}^{p+1-\abs{\pmb \gamma}} \leq (\norm{\tau}_{\max} + \norm{\nu}_{\max})^p \norm{\tau}_{\max} \norm{\nu}_{\max}.
\end{equation}
This sketches the proof of the scaling in Eq.\ \eq{sparse_bound} of \thm{fermionic_seminorm}.
A rigorous proof using induction is given in \prop{cntpathsparse}.

\begin{proposition}[Sparse path-counting bound]\label{prop:cntpathsparse} 
Under the same assumption as in \prop{path_bound_seminorm},
if each column and row of coefficient matrices $\tau,\nu$ has at most $d$ nonzero elements,
\begin{align}
\sum_{\langle j_{p+1}, k_{p+1} \rangle} \ldots \sum_{\langle j_{1}, k_{1} \rangle} \sum_{P \rhd \left(H^{\gamma_{p+1}}_{j_{p+1} k_{p+1}}, \ldots, H^{\gamma_{1}}_{j_{1} k_{1}} \right)} \norm{P \ket{\pmb{c}_\eta}} &= \mathcal{O}(\eta d^{p+1}),\\
\sum_{\langle j_{p+1}, k_{p+1} \rangle} \ldots \sum_{\langle j_{1}, k_{1} \rangle} \sum_{P^\dagger \rhd \left(H^{\gamma_{p+1}}_{j_{p+1} k_{p+1}}, \ldots, H^{\gamma_{1}}_{j_{1} k_{1}} \right)} \norm{P^\dagger \ket{\pmb{c}_\eta}} &= \mathcal{O}(\eta d^{p+1}).
\end{align}
\end{proposition}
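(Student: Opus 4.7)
The plan is to establish both bounds by induction on $p$, with the inductive step resting on a structural lemma about fermionic paths. First I would prove the following auxiliary fact by induction on $p$: every fermionic path $P \rhd \left(H^{\gamma_{p+1}}_{j_{p+1} k_{p+1}}, \ldots, H^{\gamma_1}_{j_1 k_1}\right)$ has its rightmost elementary operator equal to an annihilation $A_k$ or a number operator $N_l$ (never $A^\dagger_j$), and symmetrically its leftmost elementary operator equal to $A^\dagger_j$ or $N_l$. The base case is immediate because $H^{\gamma_1}_{j_1 k_1}$ is either $A^\dagger_{j_1} A_{k_1}$ or $N_{j_1} N_{k_1}$. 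For the inductive step, one expands $\left[H^{\gamma_{p+1}}_{j_{p+1} k_{p+1}}, P\right]$ using the commutation rules \eq{commutator_t_2} and \eq{commutator_v_2} together with the derivation rule of the commutator, and checks term-by-term that no expanded term ever places an $A^\dagger$ on the far right or an $A$ on the far left. Combined with \lem{fermionic_path}, this lemma implies that whenever $\norm{P \ket{\pmb c_\eta}} \neq 0$, the site-index of the rightmost operator of $P$ must correspond to an occupied site of $\pmb c_\eta$, yielding at most $\eta$ choices for that index.

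Next I would prove the main bound by induction on $p$, with induction hypothesis that
\begin{equation}
f_p(\pmb c_\eta) \;:=\; \sum_{\langle j_{p+1}, k_{p+1} \rangle} \cdots \sum_{\langle j_1, k_1 \rangle} \sum_{P \rhd \left(H^{\gamma_{p+1}}_{j_{p+1} k_{p+1}}, \ldots, H^{\gamma_1}_{j_1 k_1}\right)} \norm{P \ket{\pmb c_\eta}} \;=\; \mathcal{O}(\eta d^{p+1}),
\end{equation}
with constants allowed to depend on $p$. The base case $p = 0$ uses the structural lemma: the rightmost site of $H^{\gamma_1}_{j_1 k_1}$ must be occupied, giving at most $\eta$ choices, and $d$-sparsity of $\tau$ or $\nu$ yields at most $d$ choices for the other index, hence $f_0 \leq \eta d$. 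For the inductive step, I would write each level-$(p+1)$ path as a term in the expansion of $\left[H^{\gamma_{p+1}}_{j_{p+1} k_{p+1}}, P'\right]$ for some level-$p$ path $P'$. By \eq{commutator_t_2}--\eq{commutator_v_2}, every term in this commutator expansion carries a Kronecker delta that forces one of $j_{p+1}, k_{p+1}$ to equal a site-index already appearing in $P'$, while the remaining index has at most $d$ choices by sparsity of the corresponding coefficient matrix. Since the path $P'$ contains $\mathcal{O}(p)$ site-indices and the commutator expansion has $\mathcal{O}(1)$ summands per application, summing over $\langle j_{p+1}, k_{p+1} \rangle$ yields $f_{p+1}(\pmb c_\eta) = \mathcal{O}(p \cdot d) \cdot f_p(\pmb c_\eta)$, which closes the induction.

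The symmetric bound for $P^\dagger$ follows by an analogous induction: the structural lemma also controls the leftmost operator of $P$, which after conjugation becomes the rightmost operator of $P^\dagger$ (with $A^\dagger \leftrightarrow A$), so the same occupancy-plus-sparsity counting applies. I expect the main obstacle to be the structural lemma: one must verify that across every case of \eq{commutator_t_2}--\eq{commutator_v_2}, including when the outer Hamiltonian commutes with a deeply nested interior sub-product rather than the outer boundary of $P$, the boundary-operator types are preserved. Once that lemma is established, the counting recursion is clean and the factor of $\eta$ enters exactly once, from the single occupancy constraint on the rightmost operator, while each of the $p+1$ commutator levels contributes one factor of $d$.
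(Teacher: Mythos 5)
Your structural lemma (the rightmost elementary operator of every fermionic path is $A$ or $N$, never $A^\dagger$, and symmetrically for the leftmost) is correct and is exactly the first claim in the paper's own induction; your final use of it, namely that the occupancy constraint on the starting operator contributes the single factor of $\eta$, is also the paper's mechanism. The gap is in how you set up the counting induction. You induct on the configuration-weighted quantity $f_p(\pmb c_\eta)=\sum \norm{P\ket{\pmb c_\eta}}$ and claim $f_{p+1}(\pmb c_\eta)=\cO{pd}\cdot f_p(\pmb c_\eta)$. This recursion does not close, because commuting with $H^{\gamma_{p+1}}_{j_{p+1}k_{p+1}}$ can change the site on which the rightmost operator of the path acts: by \eq{commutator_t_2}, commuting $A^\dagger_{j_{p+1}}A_{k_{p+1}}$ against the starting operator $A_{k_y}$ of a parent path $P'$ replaces it by $-\delta_{k_y,j_{p+1}}A_{k_{p+1}}$, so a parent $P'$ whose starting site $k_y$ is unoccupied in $\pmb c_\eta$ (hence $\norm{P'\ket{\pmb c_\eta}}=0$ and $P'$ contributes nothing to $f_p$) can produce children $P$ whose new starting site $k_{p+1}$ is occupied, with $\norm{P\ket{\pmb c_\eta}}=1$. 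Those children are invisible to your induction hypothesis, so $f_{p+1}$ is not bounded by $\cO{pd}\cdot f_p$.

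The fix is to carry a configuration-independent quantity through the induction: the number of fermionic paths whose starting operator acts on a \emph{fixed} site $i$. One shows this is $\cO{d^{p+1}}$ uniformly in $i$ --- each commutator level multiplies it by $\cO{pd}$, and in the boundary-replacement case above the new starting site $k_{p+1}$ is pinned to $i$ while sparsity leaves at most $d$ choices for $j_{p+1}=k_y$, so the per-site count remains controlled --- and only at the end applies the occupancy constraint once, summing over the $\eta$ occupied starting sites. This is precisely how the paper's proof is organized. Your argument for the $P^\dagger$ bound inherits the same issue and admits the same fix.
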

\begin{proof}
We will prove the following claims by induction on $q = 2, \ldots, p+1$.
\begin{itemize}
\item All fermionic paths $P$ start with either $N$ or $A$, but not $A^\dagger$ (we refer to the rightmost operator as the starting point).
\item All fermionic paths $P$ have at most $q+1$ elementary fermionic operators.
\item The number of fermionic paths $P$ that start with a fermionic operator acting on a specific site $i$ is at most $(2d)^qq!/2$.
\end{itemize}
The base case $q=2$ can be easily verified by noting that we only need to consider $[T, V]$ or $[V, T]$.
Using the commutation relations given in Eq.~\eq{commutator_t_2}~and~\eq{commutator_v_2},
we can see in both cases that the fermionic paths all start with either $N$ or $A$.
For every site $i$, there are at most $4 d^2$ fermionic paths starting with site $i$.
Furthermore, every fermionic path consists of $3$ elementary fermionic operators.
These results establish all the bullet points for the base case of $q=2$.

For every $q>2$, we use the induction hypothesis for $q-1$ to prove the desired result.
If $\gamma_q = 1$, then we will take another commutator with $T = \sum_{j_q, k_q} \tau_{j_q, k_q} A^\dagger_{j_q} A_{k_q}$.
We can see that all fermionic paths $P \rhd \left(H^{\gamma_{q}}_{j_{q} k_{q}}, \ldots, H^{\gamma_{1}}_{j_{1} k_{1}} \right)$ come from the expansion of
\begin{equation}
[A^\dagger_{j_q} A_{k_q}, P'],\ \forall \langle j_q, k_q \rangle,\ \forall P' \rhd \left(H^{\gamma_{q-1}}_{j_{q-1} k_{q-1}}, \ldots, H^{\gamma_{1}}_{j_{1} k_{1}} \right).
\end{equation}
Using the commutation rule $[X, Y_1 \ldots Y_\kappa] = \sum_{k=1}^\kappa Y_1\ldots Y_{k-1} [X, Y_k] Y_{k+1} \ldots Y_{\kappa}$,
we can show that all the claims hold for $q$ as follows.
First, if all the fermionic paths $P' \rhd \left(H^{\gamma_{q-1}}_{j_{q-1} k_{q-1}}, \ldots, H^{\gamma_{1}}_{j_{1} k_{1}} \right)$ start with either $N$ or $A$,
then all the paths $P \rhd \left(H^{\gamma_{q}}_{j_{q} k_{q}}, \ldots, H^{\gamma_{1}}_{j_{1} k_{1}} \right)$ will start with either $N$ or $A$.
This follows from the commutation relations: $[A^\dagger_j A_k, A_{k_q}] = -\delta_{k_y, j} A_k$ and $\left[A_j^\dagger A_k,N_{l_z}\right]
=\delta_{k,l_z}A_j^\dagger A_k
-\delta_{j,l_z}A_j^\dagger A_k$.
Furthermore, since $P'$ has at most $(q-1)+1=q$ elementary fermionic operators, the expansion of $[A^\dagger_{j_q} A_{k_q}, P']$ will have at most $q+1$ elementary fermionic operators.

We now prove an upper bound for the number of fermionic paths starting from a specific site $i$.
If we take the commutator of $A^\dagger_{j_q} A_{k_q}$ with a fermionic operator that is not the starting operator in $P$, then the starting operator is not affected.
Because of the sparsity constraint and the $\delta$ function created by the commutation relation in Eq.\ \eq{commutator_t_2}, we have created at most $2d(q-1) \times$ more fermionic paths starting with site $i$.
Now if we take the commutator of $A^\dagger_{j_q} A_{k_q}$ with the starting operator $A_{k_y}$ (for some index $k_y$) in the fermionic path $P'$, then the starting operator becomes $A_{k_q}$ and we have an additional $\delta_{k_y, j_q}$.
In this case, $k_q$ can start from any site, but there will be at most $d$ choices of $j_q$, hence $d$ choices of $k_y$.
This means we have created at most $2d \times$ more fermionic paths starting with each site.
The case where $N_{l_z}$ is the starting operator can be analyzed in a similar way.
Together, we have created at most $2dq \times$ more fermionic paths starting with each site.
This leads to an upper bound of
\begin{equation}
2dq(2d)^{q-1}(q-1)!/2 = (2d)^qq!/2
\end{equation}
fermionic paths for each fixed starting site.
The analysis for $\gamma_q = 0$ proceeds in a similar way using Eq.\ \eq{commutator_v_2}.
This completes the inductive step for $q$.

Performing the induction over $q$ from $2$ to $p+1$ shows that
the number of fermionic paths starting with site $i$ is at most
\begin{equation}
(2d)^{p+1}(p+1)!/2 = \mathcal{O}(d^{p+1}).
\end{equation}
Because $P$ starts with either $A$ or $N$,
$\norm{P \ket{\pmb{c}_\eta}}$ would be nonzero only if the starting fermionic operator acts on one of the $\eta$ occupied sites in the configuration $\pmb{c}_\eta$.
Hence there are at most $\eta \mathcal{O}(d^{p+1})$ fermionic paths with non-zero $\norm{P \ket{\pmb{c}_\eta}}$.
Finally, recall from \lem{fermionic_path} that $\norm{P \ket{\pmb{c}_\eta}}$ is either $0$ or $1$. Therefore,
\begin{equation}
\sum_{\langle j_{p+1}, k_{p+1} \rangle} \ldots \sum_{\langle j_{1}, k_{1} \rangle} \sum_{P \rhd \left(H^{\gamma_{p+1}}_{j_{p+1} k_{p+1}}, \ldots, H^{\gamma_{1}}_{j_{1} k_{1}} \right)} \norm{P \ket{\pmb{c}_\eta}} = \mathcal{O}(\eta d^{p+1}).
\end{equation}

A similar argument can be used to prove the other claimed bound
\begin{equation}
\sum_{\langle j_{p+1}, k_{p+1} \rangle} \ldots \sum_{\langle j_{1}, k_{1} \rangle} \sum_{P^\dagger \rhd \left(H^{\gamma_{p+1}}_{j_{p+1} k_{p+1}}, \ldots, H^{\gamma_{1}}_{j_{1} k_{1}} \right)} \norm{P^\dagger \ket{\pmb{c}_\eta}} = \mathcal{O}(\eta d^{p+1}).
\end{equation}
The argument uses the property that all fermionic paths $P$ end with either $N$ or $A^\dagger$ but not $A$, which again follows from the commutation relations \eq{commutator_t_2} and \eq{commutator_v_2}. The proof is now completed.
\end{proof}

It is worth mentioning that the path-counting approach can also be used to analyze the simulation of non-sparse electronic Hamiltonians. The resulting bound, as given by \eq{general_path_bound}, is slightly weaker than Eq.\ \eq{general_bound} of \thm{fermionic_seminorm}, but suffices for our applications to be discussed in \sec{app_plane_wave}. See \append{pathcountdense} for details and proofs.

\section{Tightness}
\label{sec:tight}

We have already established multiple bounds in \thm{fermionic_seminorm} on the fermionic seminorm of the Trotter error. However, a common issue with the Trotterization algorithm is that its error estimate can be very loose for simulating specific systems. Here, we prove \thm{fermionic_seminorm_tightness} that demonstrates the tightness of our analysis for the interacting-electronic Hamiltonian \eq{second_quantized_ham}.

Specifically, we construct concrete examples of interacting-electronic Hamiltonian $H=T+V$ and lower-bound the fermionic seminorm of nested commutators: $\norm{[T,\ldots[T,V]]}_{\eta}$ in \sec{tightness_t} and $\norm{[V,\ldots[V,T]]}_{\eta}$ in \sec{tightness_v}. We show that the results almost match the upper bounds in \thm{fermionic_seminorm}. Since Trotter error depends on these nested commutators, this shows that our result is nearly tight modulo an application of the triangle inequality.

\subsection{Lower-bounding \texorpdfstring{$\norm{[T,\ldots[T,V]]}_{\eta}$}{nested commutators with T}}
\label{sec:tightness_t}
We construct the electronic Hamiltonian $H=T+V$, where
\begin{equation}
\label{eq:def_tv}
T=\sum_{j,k=0}^{n-1}A_j^\dagger A_k,\qquad
V=\sum_{x,y=0}^{\frac{n}{2}-1}N_{x}N_{y}.
\end{equation}
Note that we may without loss of generality assume that $n$ is even, for otherwise we restrict to the first $n-1$ spin orbitals.
Comparing with the definition of the interacting-electronic model \eq{second_quantized_ham}, we see that the coefficient matrix $\tau$ is an all-ones matrix with spectral norm $\norm{\tau}=n$, whereas $\nu$ contains an all-ones submatrix on the top left corner with max-norm $\norm{\nu}_{\max}=1$. Our goal is to lower-bound the fermionic seminorm $\norm{[T,\ldots[T,V]]}_{\eta}$.

Due to the complicated commutation relations between $T$ and $V$, a direct computation of $[T,\ldots[T,V]]$ seems technically challenging. Instead, we perform a change of basis by applying the fermionic Fourier transform
\begin{equation}
\label{eq:ffft}
\mathrm{FFFT}^\dagger\cdot A_j^\dagger\cdot \mathrm{FFFT}=\frac{1}{\sqrt{n}}\sum_{l=0}^{n-1} e^{-\frac{2\pi i jl}{n}}A_l^\dagger,\qquad
\mathrm{FFFT}^\dagger\cdot A_k\cdot \mathrm{FFFT}=\frac{1}{\sqrt{n}}\sum_{m=0}^{n-1} e^{\frac{2\pi i km}{n}}A_m.
\end{equation}
This gives
\begin{equation}
\label{eq:def_tv_tilde}
\begin{aligned}
	\widetilde{T}&=\mathrm{FFFT}^\dagger\cdot T\cdot\mathrm{FFFT}=nN_0,\\
	\widetilde{V}&=\mathrm{FFFT}^\dagger\cdot V\cdot\mathrm{FFFT}
	=\frac{1}{n^2}\sum_{j,k,l,m}
	\left(\sum_{x=0}^{\frac{n}{2}-1}e^{\frac{2\pi i x(k-j)}{n}}\right)
	\left(\sum_{y=0}^{\frac{n}{2}-1}e^{\frac{2\pi i y(m-l)}{n}}\right)
	A_j^\dagger A_k A_l^\dagger A_m.
\end{aligned}
\end{equation}
We also define the $\eta$-electron states for $\eta\leq\frac{n}{2}$:
\begin{equation}
\label{eq:def_state_tilde}
\begin{aligned}
	\ket{\widetilde{\psi}_\eta}=\frac{\ket{010\cdots 0\overbrace{1\cdots 1}^{\eta-1}}+\ket{100\cdots 0\overbrace{1\cdots 1}^{\eta-1}}}{\sqrt{2}},\quad
	\ket{\widetilde{\phi}_\eta}=\frac{\ket{010\cdots 0\overbrace{1\cdots 1}^{\eta-1}}+i\ket{100\cdots 0\overbrace{1\cdots 1}^{\eta-1}}}{\sqrt{2}}.
\end{aligned}
\end{equation}
The following proposition shows that the above choice of operators and states almost saturates the fermionic seminorm of nested commutators.
\begin{proposition}
	\label{prop:tightness_t}
	Define $\widetilde{T}$, $\widetilde{V}$ as in \eq{def_tv_tilde} and $\ket{\widetilde{\psi}_\eta}$, $\ket{\widetilde{\phi}_\eta}$ as in \eq{def_state_tilde}. Then,
	\begin{equation}
		\begin{rcases*}
		\big|\bra{\widetilde{\psi}_\eta}\overbrace{\big[\widetilde{T},\ldots\big[\widetilde{T}}^{p},\widetilde{V}\big]\big]\ket{\widetilde{\psi}_\eta}\big|, & p\text{ odd}\\
		\big|\bra{\widetilde{\phi}_\eta}\underbrace{\big[\widetilde{T},\ldots\big[\widetilde{T}}_{p},\widetilde{V}\big]\big]\ket{\widetilde{\phi}_\eta}\big|, & p\text{ even}
		\end{rcases*}
		=\frac{n^p\eta}{\pi}+\cO{n^p}.
	\end{equation}
\end{proposition}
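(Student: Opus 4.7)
The plan is to exploit the diagonal structure of $\widetilde{T} = nN_0$ in the computational basis to reduce the nested commutator to a single off-diagonal matrix element of $\widetilde{V}$ between $\ket{c_1}$ and $\ket{c_2}$, and then to evaluate that matrix element explicitly. Since $\widetilde{T}\ket{c} = nc_0\ket{c}$, induction on $p$ gives
\begin{equation*}
\bra{c'}[\widetilde{T},\ldots,[\widetilde{T},X]]\ket{c} = n^p (c'_0 - c_0)^p \bra{c'}X\ket{c}
\end{equation*}
for any operator $X$. Applying this to $X = \widetilde{V}$ and expanding $\bra{\widetilde{\psi}_\eta}\cdot\ket{\widetilde{\psi}_\eta}$ (for $p$ odd) or $\bra{\widetilde{\phi}_\eta}\cdot\ket{\widetilde{\phi}_\eta}$ (for $p$ even), the $c = c'$ diagonal terms drop out because $c_0 - c_0 = 0$, and the two cross terms combine via $\widetilde{V}_{c_1,c_2} = \overline{\widetilde{V}_{c_2,c_1}}$. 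The coefficients $1,1$ in $\widetilde{\psi}_\eta$ versus $1,i$ in $\widetilde{\phi}_\eta$ are tuned precisely so that the relative sign $(-1)^p$ coming from $(c'_0 - c_0)^p$ conspires with the phase of the state to yield, for each parity of $p$, an expectation of absolute value $n^p\abs{\operatorname{Im}(\widetilde{V}_{c_2,c_1})}$.

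Next I would compute $\widetilde{V}_{c_2,c_1}$ by factoring $\widetilde{V} = \widetilde{N}_*^2$, where $\widetilde{N}_* = \frac{1}{n}\sum_{l,m}f(m-l)A_l^\dagger A_m$ with $f(x)=\sum_{y=0}^{n/2-1}e^{2\pi i xy/n}$ is the Fourier transform of the half-particle-number operator $N_* = \sum_{x=0}^{n/2-1}N_x$. Inserting a resolution of the identity in the $\eta$-electron subspace gives
\begin{equation*}
\widetilde{V}_{c_2,c_1} = \sum_s (\widetilde{N}_*)_{c_2,s}(\widetilde{N}_*)_{s,c_1},
\end{equation*}
and because $\widetilde{N}_*$ is a single-particle hopping operator, only configurations $s$ reachable from both $c_1$ and $c_2$ by at most one hop can contribute. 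A short enumeration identifies these as $s \in \{c_1, c_2\}$, $s_l = \{l\} \cup R$ for $l \in \{2,\ldots,n-\eta\}$, and $s'_r = \{0,1\} \cup R \setminus \{r\}$ for $r \in R$, where $R = \{n-\eta+1,\ldots,n-1\}$. Evaluating each matrix element and tracking the Jordan-Wigner signs yields
\begin{equation*}
\widetilde{V}_{c_2,c_1} = \frac{\eta f(1)}{n} + \frac{1}{n^2}\sum_{l=2}^{n-\eta} f(l)f(1-l) - \frac{1}{n^2}\sum_{r \in R} f(r)f(1-r).
\end{equation*}

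The final step is to collapse the two residual sums. The Fourier orthogonality $\sum_{k=0}^{n-1} f(k)f(1-k) = nf(1)$ (which follows from $\sum_k e^{2\pi i ka/n} = n\delta_{a,0}$), together with $2f(0)f(1) = nf(1)$, forces $\sum_{k=2}^{n-1} f(k)f(1-k) = 0$; and after relabelling $r = n-s$ and using $f(x+n) = f(x)$, the $s'_r$ sum becomes $\sum_{s=1}^{\eta-1}\overline{f(s)}f(s+1)$. Since $f$ vanishes at every even $s \not\equiv 0 \pmod n$, every summand of both residual sums contains a zero factor, so they both vanish and $\widetilde{V}_{c_2,c_1} = \eta f(1)/n$ exactly. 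A geometric-series evaluation $f(1) = 2/(1 - e^{2\pi i/n}) = 1 + i\cot(\pi/n)$ then gives $\abs{\operatorname{Im}(\widetilde{V}_{c_2,c_1})} = (\eta/n)\cot(\pi/n)$, and the expansion $\cot(\pi/n) = n/\pi + O(1/n)$ produces the claimed $\frac{n^p\eta}{\pi} + O(n^p)$. The main obstacle is the enumeration of admissible intermediate states and the Jordan-Wigner sign bookkeeping in the second paragraph; the subsequent algebraic cancellation, though striking, becomes immediate once the parity support of $f$ is invoked.
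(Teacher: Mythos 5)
Your proposal is correct, and it reaches the result by a genuinely different route from the paper. The paper restricts to the two-dimensional subspace $\widetilde{\Pi}$ spanned by the two basis configurations, expands $\widetilde{\Pi}[\widetilde{T},\widetilde{V}]\widetilde{\Pi}$ through a twelve-term case analysis of the quartic operators $A_j^\dagger A_k A_l^\dagger A_m$, groups and cancels terms, and then evaluates the expectation for $p=1,2$ before asserting the pattern for general $p$. You instead use the diagonality of $\widetilde{T}=nN_0$ to get the identity $\bra{\pmb{c}'}[\widetilde{T},\ldots[\widetilde{T},X]]\ket{\pmb{c}}=n^p(c'_0-c_0)^p\bra{\pmb{c}'}X\ket{\pmb{c}}$, which collapses the entire nested commutator, for every order $p$ simultaneously, to the single off-diagonal matrix element $\widetilde{V}_{\pmb{c}_2,\pmb{c}_1}$; you then compute that element by writing $\widetilde{V}=\widetilde{N}_*^2$ and inserting a resolution of the identity, with the parity support of $f$ killing the residual sums exactly. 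I checked the reduction to $n^p\lvert\operatorname{Im}\widetilde{V}_{\pmb{c}_2,\pmb{c}_1}\rvert$ for both parities of $p$, the enumeration of intermediate configurations $\{\pmb{c}_1,\pmb{c}_2\}\cup\{s_l\}\cup\{s'_r\}$, the Jordan--Wigner signs (in particular the relative minus sign on the $s'_r$ sum), and the vanishing of both residual sums; all are correct, and $f(1)=1+i\cot(\pi/n)$ gives $\tfrac{n^p\eta}{\pi}+\cO{n^{p-1}}$, slightly sharper than the stated $\cO{n^p}$ error. What your approach buys is a uniform treatment of all $p$ (the paper only computes $p=1,2$ explicitly) and an exact closed form $\widetilde{V}_{\pmb{c}_2,\pmb{c}_1}=\eta f(1)/n$; what it costs is the intermediate-state enumeration and sign bookkeeping, which is comparable in labor to the paper's term-by-term cancellation but more systematic.
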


A proof of this proposition is given in \append{commutator_t}. By rescaling the Hamiltonian constructed in \eq{def_tv_tilde}, we can demonstrate the tightness of our bound as follows. For any $s,w>0$, we define the rescaled Hamiltonian
\begin{equation}
\label{eq:def_tv_rescale}
T=\frac{s}{n}\sum_{j,k=0}^{n-1}A_j^\dagger A_k,\qquad
V=w\sum_{x,y=0}^{\frac{n}{2}-1}N_{x}N_{y}.
\end{equation}
Comparing with the definition of the interacting-electronic model \eq{second_quantized_ham}, we see that $\norm{\tau}=s$ and $\norm{\nu}_{\max}=w$. The above proposition then shows that
\begin{equation}
	\Big\Vert\underbrace{\big[{T},\ldots\big[{T}}_{p},{V}\big]\big]\Big\Vert_\eta
	=\Big\Vert\underbrace{\big[\widetilde{T},\ldots\big[\widetilde{T}}_{p},\widetilde{V}\big]\big]\Big\Vert_\eta
	=\Om{s^{p}w\eta},
\end{equation}
where we have used the unitary invariance of the fermionic seminorm in the first equality. This establishes the first claimed bound in \eq{tightness_general} of \thm{fermionic_seminorm_tightness}.

Note that a similar example can be constructed to demonstrate the tightness of our bound for simulating sparse electronic Hamiltonians. Specifically, suppose we have $u,w>0$ and positive integer $2\leq d\leq \eta\leq\frac{n}{2}$.\footnotemark\ We may assume without loss of generality that $d$ is even, for otherwise we use $d-1$. We then define
\footnotetext{The special case $d=1$ can be handled separately by choosing $T=uA_0^\dagger A_1+uA_1^\dagger A_0$ and $V=wN_0$.}
\begin{equation}
\label{eq:def_tv_sparse}
	T=u\sum_{j,k=0}^{d-1}A_j^\dagger A_k,\qquad
	V=w\sum_{x,y=0}^{\frac{d}{2}-1}N_{x}N_{y}.
\end{equation}
Comparing with the definition of the interacting-electronic model \eq{second_quantized_ham}, we see that $\norm{\tau}_{\max}=u$ and $\norm{\nu}_{\max}=w$. We also perform a fermionic Fourier transform to define $\widetilde{T}$ and $\widetilde{V}$, but only to the first $d$ spin orbitals
\begin{equation}
\begin{aligned}
	\mathrm{FFFT}_d^\dagger\cdot A_j^\dagger\cdot \mathrm{FFFT}_d&=
	\begin{cases}
	\frac{1}{\sqrt{d}}\sum_{l=0}^{d-1} e^{-\frac{2\pi i jl}{d}}A_l^\dagger, &0\leq j\leq d-1,\\
	A_j^\dagger, &j\geq d.
	\end{cases}\\
	\mathrm{FFFT}_d^\dagger\cdot A_k\cdot \mathrm{FFFT}_d&=
	\begin{cases}
	\frac{1}{\sqrt{d}}\sum_{m=0}^{d-1} e^{\frac{2\pi i km}{d}}A_m, &0\leq k\leq d-1,\\
	A_k, &k\geq d.
	\end{cases}
\end{aligned}
\end{equation}
Then, a similar calculation shows that
\begin{equation}
\Big\Vert\underbrace{\big[{T},\ldots\big[{T}}_{p},{V}\big]\big]\Big\Vert_\eta
=\Big\Vert\underbrace{\big[\widetilde{T},\ldots\big[\widetilde{T}}_{p},\widetilde{V}\big]\big]\Big\Vert_\eta
=\Om{(ud)^{p}wd}.
\end{equation}
This establishes the first claimed bound in \eq{tightness_sparse} of \thm{fermionic_seminorm_tightness}.

\subsection{Lower-bounding \texorpdfstring{$\norm{[V,\ldots[V,T]]}_{\eta}$}{nested commutators with V}}
\label{sec:tightness_v}
Recall from the previous section that we have constructed the electronic Hamiltonian \eq{def_tv} to prove the tightness of our bound. Comparing to the definition of the interacting-electronic model \eq{second_quantized_ham}, we see that the coefficient matrix $\tau$ has spectral norm $\norm{\tau}=n$, whereas coefficient matrix $\nu$ has max-norm $\norm{\nu}_{\max}=1$. Our goal in this subsection is to lower-bound the fermionic seminorm $\norm{[V,\ldots[V,T]]}_{\eta}$. To this end, we define the $\eta$-electron states for $\eta\leq\frac{n}{2}$:
\begin{equation}
\label{eq:def_state}
\begin{aligned}
	\ket{\psi_\eta}&=\frac{\ket{\overbrace{0\underbrace{1\cdots 1}_{\eta-1}0\cdots 0}^{\frac{n}{2}}10\cdots 0}
		+i\ket{\overbrace{1\underbrace{1\cdots 1}_{\eta-1}0\cdots 0}^{\frac{n}{2}}00\cdots 0}}{\sqrt{2}},\\
	\ket{\phi_\eta}&=\frac{\ket{\overbrace{0\underbrace{1\cdots 1}_{\eta-1}0\cdots 0}^{\frac{n}{2}}10\cdots 0}
		+\ket{\overbrace{1\underbrace{1\cdots 1}_{\eta-1}0\cdots 0}^{\frac{n}{2}}00\cdots 0}}{\sqrt{2}}.
\end{aligned}
\end{equation}
Similar to the previous subsection, we may assume that $n$ is even. We have the following proposition showing that the fermionic seminorm of nested commutators is nearly attained.
\begin{proposition}
	\label{prop:tightness_v}
	Define ${T}$, ${V}$ as in \eq{def_tv} and $\ket{{\psi}_\eta}$, $\ket{{\phi}_\eta}$ as in \eq{def_state}. Then,
	\begin{equation}
		\begin{rcases*}
		\big|\bra{{\psi}_\eta}\overbrace{\big[{V},\ldots\big[{V}}^{p},{T}\big]\big]\ket{{\psi}_\eta}\big|, & p\text{ odd}\\
		\big|\bra{{\phi}_\eta}\underbrace{\big[{V},\ldots\big[{V}}_{p},{T}\big]\big]\ket{{\phi}_\eta}\big|, & p\text{ even}
		\end{rcases*}
		=2^p\eta^p+\cO{\eta^{p-1}}.
	\end{equation}
\end{proposition}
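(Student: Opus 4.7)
The plan is to exploit the fact that both computational basis components of the superpositions $\ket{\psi_\eta}$ and $\ket{\phi_\eta}$ are joint eigenstates of $V$. Let $\ket{a}$ denote the computational basis state with electrons at orbitals $\{1,\ldots,\eta-1,n/2\}$ and $\ket{b}$ the state with electrons at $\{0,1,\ldots,\eta-1\}$, so that $\ket{\psi_\eta}=(\ket{a}+i\ket{b})/\sqrt 2$ and $\ket{\phi_\eta}=(\ket{a}+\ket{b})/\sqrt 2$. Since $V=\big(\sum_{x=0}^{n/2-1}N_x\big)^2$ and $\eta-1<n/2$, counting the occupations of $\ket{a}$ and $\ket{b}$ in the first $n/2$ orbitals immediately yields $V\ket{a}=(\eta-1)^2\ket{a}$ and $V\ket{b}=\eta^2\ket{b}$.

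The main reduction is that, because $\ket{a},\ket{b}$ are eigenvectors of the Hermitian operator $V$, for any operator $Y$ we have $\bra{b}[V,Y]\ket{a}=(\eta^2-(\eta-1)^2)\bra{b}Y\ket{a}=(2\eta-1)\bra{b}Y\ket{a}$; iterating $p$ times gives
\begin{equation}
\bra{b}[V,\ldots[V,T]]\ket{a}=(2\eta-1)^p\bra{b}T\ket{a}.
\end{equation}
The symmetric argument with the roles of $\ket{a},\ket{b}$ swapped yields $\bra{a}[V,\ldots[V,T]]\ket{b}=(-1)^p(2\eta-1)^p\bra{a}T\ket{b}$, and the same argument with equal endpoints shows $\bra{a}[V,\ldots[V,T]]\ket{a}=\bra{b}[V,\ldots[V,T]]\ket{b}=0$ whenever $p\ge 1$.

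Next I compute the single seed matrix element $\bra{b}T\ket{a}$. Since $\ket{a}$ and $\ket{b}$ differ exactly by moving an electron from orbital $n/2$ to orbital $0$, the only term in $T=\sum_{j,k}A_j^\dagger A_k$ that contributes is $A_0^\dagger A_{n/2}$. Applying \eq{createrule}--\eq{annihirule} and counting the $\eta-1$ occupied orbitals below index $n/2$ in $\ket{a}$ gives $A_0^\dagger A_{n/2}\ket{a}=(-1)^{\eta-1}\ket{b}$, hence $\bra{b}T\ket{a}=\bra{a}T\ket{b}=(-1)^{\eta-1}$ (the second equality by Hermiticity of $T$).

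Substituting these matrix elements into the bilinear expansion of the expectation gives
\begin{equation}
\bra{\psi_\eta}[V,\ldots[V,T]]\ket{\psi_\eta}=\tfrac{i}{2}(-1)^{\eta-1}(2\eta-1)^p\big((-1)^p-1\big),
\end{equation}
which has modulus $(2\eta-1)^p$ when $p$ is odd and vanishes when $p$ is even; the analogous computation for $\ket{\phi_\eta}$ replaces the factor $\tfrac{i}{2}((-1)^p-1)$ by $\tfrac{1}{2}((-1)^p+1)$, yielding modulus $(2\eta-1)^p$ when $p$ is even. The claim then follows from the elementary expansion $(2\eta-1)^p=2^p\eta^p+\cO{\eta^{p-1}}$. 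There is no serious obstacle to this plan; the only care needed is the Jordan--Wigner sign bookkeeping in computing $A_0^\dagger A_{n/2}\ket{a}$.
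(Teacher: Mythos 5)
Your argument is correct, and it reaches a sharper conclusion than the paper's asymptotic statement: you obtain the exact value $(2\eta-1)^p$ for the relevant matrix elements, which indeed equals $2^p\eta^p+\cO{\eta^{p-1}}$. The route is genuinely different in execution from the paper's proof in \append{commutator_v}. The paper first expands $[V,T]$ explicitly into fermionic operators, projects onto the two-dimensional subspace $\spn\{\ket{a},\ket{b}\}$ to obtain an effective operator $\sum_x N_x\,(A_0^\dagger A_{n/2}-A_{n/2}^\dagger A_0)+(A_0^\dagger A_{n/2}-A_{n/2}^\dagger A_0)\sum_y N_y$, and then evaluates the higher nested commutators of that effective operator term by term, keeping only leading orders. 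You instead observe that $V=\big(\sum_{x<n/2}N_x\big)^2$ has $\ket{a}$ and $\ket{b}$ as eigenvectors with eigenvalues $(\eta-1)^2$ and $\eta^2$, so each layer of commutation contributes an exact scalar factor $\pm(2\eta-1)$ to the off-diagonal matrix elements and kills the diagonal ones; the whole computation collapses to the single seed element $\bra{b}T\ket{a}=(-1)^{\eta-1}$. This buys you a closed form with no asymptotic bookkeeping and no commutator expansion, at the cost of being tied to the specific fact that both basis components are joint $V$-eigenstates (the paper's more operator-level computation is the template it reuses for the companion bound on $[T,\ldots[T,V]]$, where the analogous eigenvector structure only appears after a Fourier transform). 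Your sign bookkeeping for $A_0^\dagger A_{n/2}\ket{a}$ and the verification that the $\ket{a}\!\to\!\ket{b}$ and $\ket{b}\!\to\!\ket{a}$ contributions reinforce rather than cancel for the stated parities are both right.
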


A proof of this proposition is given in \append{commutator_v}. By rescaling the Hamiltonian constructed in \eq{def_tv_tilde}, we can demonstrate the tightness of our bound as follows. For any $s,w>0$, we define the rescaled Hamiltonian as in \eq{def_tv_rescale}. Comparing with the definition of the interacting-electronic model \eq{second_quantized_ham}, we see that $\norm{\tau}=s$ and $\norm{\nu}_{\max}=w$. The above proposition then shows that
\begin{equation}
\Big\Vert\underbrace{\big[{V},\ldots\big[{V}}_{p},{T}\big]\big]\Big\Vert_\eta
=\Om{(w\eta)^{p}s/n}.
\end{equation}
This establishes the second claimed bound in \eq{tightness_general} of \thm{fermionic_seminorm_tightness}.

Note that a similar example can be constructed to demonstrate the tightness of our bound for simulating sparse electronic Hamiltonians. Specifically, for $u,w>0$ and integer $2\leq d\leq \eta\leq\frac{n}{2}$,\footnotemark\ we define the electronic Hamiltonian as in \eq{def_tv_sparse}. Comparing with the definition of the interacting-electronic model \eq{second_quantized_ham}, we see that $\norm{\tau}_{\max}=u$ and $\norm{\nu}_{\max}=w$. A similar calculation then shows that
\footnotetext{Similar to above, the special case $d=1$ can be handled using $T=uA_0^\dagger A_1+uA_1^\dagger A_0$ and $V=wN_0$.}
\begin{equation}
\Big\Vert\underbrace{\big[{V},\ldots\big[{V}}_{p},{T}\big]\big]\Big\Vert_\eta
=\Om{(wd)^{p}u}.
\end{equation}
This proves the second claimed bound in \eq{tightness_sparse} of \thm{fermionic_seminorm_tightness}.

\section{Applications}
\label{sec:app}

The class of interacting-electronic Hamiltonians \eq{second_quantized_ham} encompasses various quantum systems arising in physics and chemistry, for which the performance of digital quantum simulation can be improved using our result. As for illustration, we consider improving quantum simulation of the plane-wave-basis electronic structure in \sec{app_plane_wave} and the Fermi-Hubbard model in \sec{app_hubbard}.

\subsection{Plane-wave-basis electronic structure}
\label{sec:app_plane_wave}
Simulating the electronic-structure Hamiltonians is one of the most promising applications of digital quantum computers. Recall that in the second-quantized plane-wave basis, such a Hamiltonian takes the form
\begin{equation}
\begin{aligned}
H&=\frac{1}{2n}\sum_{j,k,\mu}\kappa_{\mu}^2\cos[\kappa_{\mu}\cdot r_{k-j}]A_{j}^\dagger A_{k}\\
&\quad-\frac{4\pi}{\omega}\sum_{j,\iota,\mu\neq 0}\frac{\zeta_\iota\cos[\kappa_{\mu}\cdot(\widetilde{r}_\iota-r_j)]}{\kappa_{\mu}^2}N_{j}
+\frac{2\pi}{\omega}\sum_{\substack{j\neq k\\\mu\neq 0}}\frac{\cos[\kappa_{\mu}\cdot r_{j-k}]}{\kappa_{\mu}^2}N_{j}N_{k},
\end{aligned}
\end{equation}
where $\omega$ is the volume of the computational cell, $\kappa_{\mu}=2\pi\mu/\omega^{1/3}$ are $n$ vectors of plane-wave frequencies, $\mu$ are three-dimensional vectors of integers with elements in the interval $[-n^{1/3},n^{1/3}]$, $r_j$ are the positions of electrons, $\zeta_\iota$ are nuclear charges, and $\widetilde{r}_\iota$ are the nuclear coordinates. We further rewrite the second term as
\begin{equation}
\begin{aligned}
-\frac{4\pi}{\omega}\sum_{j,\iota,\mu\neq 0}\frac{\zeta_\iota\cos[\kappa_{\mu}\cdot(\widetilde{r}_\iota-r_j)]}{\kappa_{\mu}^2}N_{j}
=-\frac{4\pi}{\omega\eta}\sum_{j,k,\iota,\mu\neq 0}\frac{\zeta_\iota\cos[\kappa_{\mu}\cdot(\widetilde{r}_\iota-r_j)]}{\kappa_{\mu}^2}N_{j}N_{k},
\end{aligned}
\end{equation}
which is valid since we estimate the simulation error within the $\eta$-electron manifold. Comparing with the definition of interacting-electronic model \eq{second_quantized_ham}, we see that
\begin{equation}
\begin{aligned}
	\tau_{j,k}&=\frac{1}{2n}\sum_{\mu}\kappa_{\mu}^2\cos[\kappa_{\mu}\cdot r_{k-j}],\\
	\nu_{l,m}&=-\frac{4\pi}{\omega\eta}\sum_{\iota,\mu\neq 0}\frac{\zeta_\iota\cos[\kappa_{\mu}\cdot(\widetilde{r}_\iota-r_l)]}{\kappa_{\mu}^2}
	+\frac{2\pi}{\omega}\sum_{\mu\neq 0}\frac{\cos[\kappa_{\mu}\cdot r_{l-m}]}{\kappa_{\mu}^2}\left(1-\delta_{l,m}\right).
\end{aligned}
\end{equation}

To proceed, we need to bound the spectral norm $\norm{\tau}$ and the max-norm $\norm{\nu}_{\max}$ of the coefficient matrices. We have
\begin{equation}
	\norm{\tau}=\cO{\frac{n^{2/3}}{\omega^{2/3}}},\qquad
	\norm{\tau}_{\max}=\cO{\frac{1}{n^{1/3}\omega^{2/3}}},\qquad
	\norm{\nu}_{\max}=\cO{\frac{n^{1/3}}{\omega^{1/3}}},
\end{equation}
where the first equality follows from \cite[Eq.\ (F10)]{BWMMNC18}, the second equality follows from \cite[Eq.\ (F11)--(F13)]{BWMMNC18} and the third equality follows from \cite[Eq.\ (F7) and (F9)]{BWMMNC18}. We also consider a constant system density $\eta=\cO{\omega}$ following the setting of \cite{BWMMNC18}. Applying \thm{fermionic_seminorm}, we find that a $p$th-order formula $\mathscr{S}_p(t)$ can approximate the evolution of electronic-structure Hamiltonian with Trotter error
\begin{equation}
\begin{aligned}
	\norm{\mathscr{S}_p(t)-e^{-itH}}_\eta
	&=\cO{\left(\norm{\tau}+\norm{\nu}_{\max}\eta\right)^{p-1}
		\norm{\tau}\norm{\nu}_{\max}\eta^2 t^{p+1}}\\
	&=\cO{\left(\frac{n^{2/3}}{\eta^{2/3}}+n^{1/3}\eta^{2/3}\right)^{p}n^{1/3}\eta^{2/3} t^{p+1}}.
\end{aligned}
\end{equation}
This approximation is accurate for sufficiently small $t$. To evolve for a longer time, we divide the evolution into $r$ steps and use $\mathscr{S}_p(t/r)$ within each step, which gives an approximation with error
\begin{equation}
\begin{aligned}
	\norm{\mathscr{S}_p^r(t/r)-e^{-itH}}_\eta
	\leq r\norm{\mathscr{S}_p(t/r)-e^{-i\frac{t}{r}H}}_\eta
	=\cO{\left(\frac{n^{2/3}}{\eta^{2/3}}+n^{1/3}\eta^{2/3}\right)^{p}n^{1/3}\eta^{2/3} \frac{t^{p+1}}{r^p}}.
\end{aligned}
\end{equation}
To simulate with accuracy $\epsilon$, it suffices to choose
\begin{equation}
	r=\cO{\left(\frac{n^{2/3}}{\eta^{2/3}}+n^{1/3}\eta^{2/3}\right)n^{1/3p}\eta^{2/3p} \frac{t^{1+1/p}}{\epsilon^{1/p}}}.
\end{equation}
Note that this can also be achieved using the weaker bound \eq{general_path_bound} from path counting, since both $\norm{\tau}$ and $n\norm{\tau}_{\max}$ have the same asymptotic scaling.

To simplify our discussion, we consider digital quantum simulation with constant time and accuracy, obtaining
\begin{equation}
	r=\cO{\left(\frac{n^{2/3}}{\eta^{2/3}}+n^{1/3}\eta^{2/3}\right)n^{1/p}}.
\end{equation}
We further implement each Trotter step using the approach of \cite[Sect.\ 5]{LW18}, and obtain a quantum circuit with gate complexity
\begin{equation}
g=\cO{\left(\frac{n^{5/3}}{\eta^{2/3}}+n^{4/3}\eta^{2/3}\right)n^{1/p}\mathrm{polylog}(n)}.
\end{equation}
which implies
\begin{equation}
g=\left(\frac{n^{5/3}}{\eta^{2/3}}+n^{4/3}\eta^{2/3}\right)n^{o(1)}
\end{equation}
by choosing the order $p$ sufficiently large.

Up to a negligible factor $n^{o(1)}$, this gate complexity improves the best previous result of the electronic-structure simulation in the second-quantized plane-wave basis. This is because our approach improves the performance of digital quantum simulation by simultaneously exploiting commutativity of the Hamiltonian and prior knowledge of the initial state, whereas previous results were only able to employ at most one of these information. Indeed, previous work \cite[Appendix G]{BWMMNC18} gave a Trotterization with error bound
\begin{equation}
\norm{\mathscr{S}_p(t)-e^{-itH}}_{\eta}
=\cO{\left(\norm{\tau}+\norm{\nu}_{\max}\eta\right)^{p-1}
	\norm{\tau}\norm{\nu}_{\max}\eta^{p+2} t^{p+1}}.
\end{equation}
Their approach used the initial-state information by computing the Trotter error within the $\eta$-electron manifold, but the commutativity of the Hamiltonian was ignored, giving a simulation with gate count $\left(n^{5/3}\eta^{1/3}+n^{4/3}\eta^{5/3}\right)n^{o(1)}$ worse than our result. On the other hand, the work \cite[Proposition F.4]{CSTWZ19} used commutativity of the Hamiltonian to show
\begin{equation}
\norm{\mathscr{S}_p(t)-e^{-itH}}_{\eta}
=\cO{\left(\norm{\tau}_{\max}+\norm{\nu}_{\max}\right)^{p-1}
	\norm{\tau}_{\max}\norm{\nu}_{\max}n^{p+2} t^{p+1}}.
\end{equation}
and gave a simulation with complexity $\frac{n^{7/3}}{\eta^{1/3}}n^{o(1)}$, whereas Ref.\ \cite{LW18} gave an interaction-picture approach with cost $\cO{\frac{n^{8/3}}{\eta^{2/3}}\mathrm{polylog}(n)}$. Our new result matches these when $\eta$ and $n$ are comparable to each other, but can be much more efficient in the regime where $\eta$ is much smaller than $n$.

Interestingly, our asymptotic result remains conditionally advantageous even when compared with the first-quantized simulations. There, the best previous approach is the interaction-picture approach \cite{Babbush2019} with gate complexity $\cO{n^{1/3}\eta^{8/3}\mathrm{polylog}(n)}$, larger than our new complexity when $n=\cO{\eta^{2-2/(p+1)}\mathrm{polylog}(n)}$. A related approach was described in \cite{Babbush2019} based on qubitization, which has a similar performance comparison with our result.\footnote{It is a subject for future work to compare the concrete resources required by our approach and those required by the first-quantized approach of \cite{SBWRB21}.} See \tab{result_summary} for details.

We mention however that there is one caveat when ignoring the factor $n^{o(1)}$ in our above discussion. This is achieved by choosing the order $p$ of Trotterization sufficiently large, which can result in a gate complexity with an unrealistically large prefactor depending on the definition of higher-order formulas.\footnote{For the $p$th-order Suzuki formula, we have a factor of $5^p$ in the gate complexity, although this may be different for different formulas.} Nevertheless, recent work suggests that Trotterization remains advantageous for simulating the plane-wave-basis electronic structure even with a low-order formula \cite{kivlichan2020improved}, to which our paper provides new theoretical insights.

\subsection{Fermi-Hubbard model}
\label{sec:app_hubbard}
We also consider applications of our result to simulations of the Fermi-Hubbard Hamiltonian, which models many important properties of interacting electrons. This Hamiltonian is defined as
\begin{equation}
H=-s \sum_{\langle j,k\rangle,\sigma}\left(A_{j,\sigma}^\dagger A_{k,\sigma}+A_{k,\sigma}^\dagger A_{j,\sigma}\right)
+v\sum_{j}N_{j,0}N_{j,1},
\end{equation}
where $\langle j,k\rangle$ denotes a summation over nearest-neighbor lattice sites and $\sigma\in\{0,1\}$.

We note that this Hamiltonian can be represented in terms of a sparse interacted Hamiltonian. Indeed, in the one-dimensional case, we have
\begin{equation}
H=-s \sum_{j,\sigma}\left(A_{j,\sigma}^\dagger A_{j+1,\sigma}+A_{j+1,\sigma}^\dagger A_{j,\sigma}\right)
+v\sum_{j}N_{j,0}N_{j,1},
\end{equation}
where $j=0,1,\ldots,n-1$ and $\sigma=0,1$. Comparing with the definition of interacting-electronic model \eq{second_quantized_ham}, we see that
\begin{equation}
	\tau=-s\sum_{j}\left(\ket{j}\bra{j+1}+\ket{j+1}\bra{j}\right)\otimes(\ket{0}\bra{0}+\ket{1}\bra{1}),\qquad
	\nu=\frac{v}{2}\sum_{j}\ket{j}\bra{j}\otimes\left(\ket{0}\bra{1}+\ket{1}\bra{0}\right),
\end{equation}
so the coefficient matrices $\tau$ and $\nu$ are $2$-sparse. Similar analysis holds for the higher-dimensional Fermi-Hubbard model, with the sparsity $d=2^m$ where $m$ is the dimensionality of the lattice.

We can therefore apply \thm{fermionic_seminorm} to conclude that a $p$th-order formula $\mathscr{S}_p(t)$ approximates the evolution of Fermi-Hubbard Hamiltonian with Trotter error
\begin{equation}
	\norm{\mathscr{S}_p(t)-e^{-itH}}_\eta
	=\cO{(s+v)^{p-1}sv2^{m(p+1)}\eta t^{p+1}}
	=\cO{\eta t^{p+1}},
\end{equation}
assuming $s$, $v$, and $m$ are constant. For $r$ steps of Trotterization, we apply the triangle inequality to get
\begin{equation}
	\norm{\mathscr{S}_p^r(t/r)-e^{-itH}}_\eta
	\leq r\norm{\mathscr{S}_p(t/r)-e^{-i\frac{t}{r}H}}_\eta
	=\cO{\eta \frac{t^{p+1}}{r^p}}.
\end{equation}
To simulate with constant time and accuracy, it thus suffices to choose
\begin{equation}
	r=\cO{\eta^{1/p}},
\end{equation}
giving gate complexity\footnotemark
\begin{equation}
	g=\cO{n\eta^{1/p}}.
\end{equation}
\footnotetext{By exploiting locality, we can perform $e^{-itV}$ with $\OO{n}$ gates. An additional logarithmic factor will be introduced if we implement $e^{-itT}$ using the fermionic Fourier transform. However, this can be avoided by further decomposing $T=T_{\text{odd}}+T_{\text{even}}$ as in \cite{CS19}, so that $H=T_{\text{odd}}+T_{\text{even}}+V$. The analysis of Trotter error proceeds in a similar way as in \sec{path}, which establishes the claimed gate complexity of $\cO{n\eta^{1/p}}$.}

The Fermi-Hubbard model only contains nearest-neighbor interactions and, according to \cite{CS19}, can be near optimally simulated with $\cO{n^{1+1/p}}$ gates. On the other hand, recent work \cite{CBC20} shows that Trotterization algorithm has gate complexity $\cO{n\eta^{1+1/p}}$ when restricted to the $\eta$-electron manifold. Our result improves over those previous work by combining the sparsity of interactions, commutativity of the Hamiltonian and information about the initial state.

\section{Discussion}
\label{sec:discuss}

We have given improved quantum simulations of a class of interacting electrons using Trotterization, by simultaneously exploiting commutativity of the Hamiltonian, sparsity of interactions, and prior knowledge of the initial state. We identified applications to simulating the plane-wave-basis electronic structure, improving the best previous result in second quantization up to a negligible factor while conditionally outperforming the first-quantized simulation. We obtained further speedups when the electronic Hamiltonian has $d$-sparse interactions, which gave faster Trotterization of the Fermi-Hubbard model. We constructed concrete electronic systems for which our bounds are almost saturated, providing a provable guarantee on the tightness of our analysis.

Our focus has been on the asymptotic performance of digital quantum simulation throughout this paper. However, we believe that the techniques we have developed can also be used to give quantum simulations with low constant-prefactor overhead, for instance, through a more careful application of our \prop{multilayer}, \prop{multilayer_seminorm}, \prop{path_bound_seminorm} and \prop{cntpathsparse}. Such improvements would especially benefit the simulation of plane-wave-basis electronic structure, where many pairs of Hamiltonian terms commute and the number of electrons can be significantly smaller than the spin-orbital number. Existing numerical studies almost exclusively used the second-order Suzuki formula \cite{kivlichan2020improved,wecker2015solving,babbush2015chemical,Pou15,Campbell20} and their results did not fully leverage the commutativity of the Hamiltonian and the initial-state knowledge, which may then be improved by the techniques presented here using general Trotterization schemes.

Our analysis is applicable to a class of electronic Hamiltonians of the form $H=\sum_{j,k}\tau_{j,k}A_j^\dagger A_k+\sum_{l,m}\nu_{l,m}N_l N_m$. By imposing further constraints on the coefficients, we may somewhat sacrifice this generality but instead get further improvement on the simulation performance. One possibility is to consider the subclass of systems that are translation-invariant, i.e., $\tau_{j,k}=\tau_{j+q,k+q}$ and $\nu_{l,m}=\nu_{l+q,m+q}$. This translational invariance is used in the circuit implementations for both our applications (electronic-structure Hamiltonians and Fermi-Hubbard model), but is nevertheless ignored in the proof of our upper bounds (\thm{fermionic_seminorm}) and tightness result (\thm{fermionic_seminorm_tightness}). By incorporating additional features of the Hamiltonian such as translational invariance, it is plausible that our current complexity estimate can be further improved.

A natural problem that has yet to be addressed here is the simulation of electronic-structure Hamiltonians in a more compact molecular basis. Such Hamiltonians typically take the form $H=\sum_{j,k}h_{j,k}A_j^\dagger A_k+\sum_{j,k,l,m}h_{j,k,l,m}A_j^\dagger A_k A_l^\dagger A_m$, more complex than the electronic model \eq{second_quantized_ham} considered here. In this case, the exponentials of the two-body terms $\sum_{j,k,l,m}h_{j,k,l,m}A_j^\dagger A_k A_l^\dagger A_m$ do not have a convenient circuit implementation and our current approach is not directly applicable. This motivates further developments of \emph{hybrid quantum simulation}, in which Trotterization is combined with more advanced quantum algorithms to speed up digital quantum simulation. We leave a detailed study of such problems as a subject for future work.

More generally, we could consider digital quantum simulations of other types of physical systems, such as bosonic systems \cite{Sawaya2020} or fermion-boson interacting systems \cite{Shaw2020quantumalgorithms}.
We hope our techniques could offer insights to such problems and find further applications in digital quantum simulation beyond what have been discussed here.

\section*{Acknowledgements}
We thank Fernando Brand\~{a}o for inspiring discussions during the initial stages of this work, and G\'{e}za Giedke and anonymous referees for their comments on an earlier draft. YS thanks Nathan Wiebe, Guang Hao Low, Ryan Babbush, Minh Cong Tran, Kunal Sharma, John Preskill, and Andrew Childs for helpful discussions. He is supported by the National Science Foundation RAISE-TAQS 1839204 and Amazon Web Services, AWS Quantum Program. HH is supported by the J. Yang \& Family Foundation. The Institute for Quantum Information and Matter is an NSF Physics Frontiers Center PHY-1733907.

\appendix
\section{Analysis of single-layer commutator}
\label{append:singlelayer}

In this appendix, we complete the proof of \prop{singlelayer_seminorm} that bounds the terms arising in the commutator analysis of first-order formula.

For the third statement of \prop{singlelayer_seminorm}, we let $X=\sum_{j,k,l}\tau_{j,k}\nu_{l,k}A_j^\dagger N_lA_k$ and compute
\begin{equation}
\begin{aligned}
	X^\dagger X
	&=\sum_{j_1,k_1,l_1,j_2,k_2,l_2}\bar{\tau}_{j_1,k_1}\bar{\nu}_{l_1,k_1}\tau_{j_2,k_2}\nu_{l_2,k_2}A_{k_1}^\dagger N_{l_1}A_{j_1}A_{j_2}^\dagger N_{l_2}A_{k_2}\\
	&=\sum_{j_1,k_1,l_1,k_2,l_2}\bar{\tau}_{j_1,k_1}\bar{\nu}_{l_1,k_1}\tau_{j_1,k_2}\nu_{l_2,k_2}A_{k_1}^\dagger N_{l_1}N_{l_2}A_{k_2}\\
	&\quad-\sum_{j_1,k_1,l_1,j_2,k_2,l_2}\bar{\tau}_{j_1,k_1}\bar{\nu}_{l_1,k_1}\tau_{j_2,k_2}\nu_{l_2,k_2}A_{k_1}^\dagger N_{l_1}A_{j_2}^\dagger A_{j_1}N_{l_2}A_{k_2}.
\end{aligned}
\end{equation}
Applying the operator Cauchy-Schwarz inequality (\lem{cauchy}) similarly as in \eq{cauchy_cal},
\begin{equation}
\begin{aligned}
	X^\dagger X
	&\leq\sum_{j_1,k_1,l_1,k_2,l_2}\bar{\tau}_{j_1,k_1}\bar{\nu}_{l_1,k_1}\tau_{j_1,k_2}\nu_{l_2,k_2}A_{k_1}^\dagger N_{l_1}N_{l_2}A_{k_2}\\
	&\quad+\sum_{j_1,k_1,l_1,j_2,k_2,l_2}\bar{\tau}_{j_1,k_1}\bar{\nu}_{l_1,k_1}\tau_{j_1,k_2}\nu_{l_2,k_2}A_{k_1}^\dagger N_{l_1}A_{j_2}^\dagger A_{j_2}N_{l_2}A_{k_2}\\
	&=\sum_{j_1,k_1,l_1,k_2,l_2}\bar{\tau}_{j_1,k_1}\bar{\nu}_{l_1,k_1}\tau_{j_1,k_2}\nu_{l_2,k_2}A_{k_1}^\dagger N_{l_1}N_{l_2}A_{k_2}N.
\end{aligned}
\end{equation}
We now perform diagonalization using \lem{diagonalization}, obtaining
\begin{equation}
	X^\dagger X
	\leq\norm{\tau}^2\sum_{k_1,l_1,l_2}\bar{\nu}_{l_1,k_1}\nu_{l_2,k_1}A_{k_1}^\dagger N_{l_1}N_{l_2}A_{k_1}N.
\end{equation}
Using the H\"{o}lder-type inequality for the expectation value (\lem{fermionic_holder}), we have
\begin{equation}
\begin{aligned}
	\norm{X^\dagger X}_\eta
	&\leq\norm{\norm{\tau}^2\sum_{k_1,l_1,l_2}\bar{\nu}_{l_1,k_1}\nu_{l_2,k_1}A_{k_1}^\dagger N_{l_1}N_{l_2}A_{k_1}N}_\eta
	=\norm{\tau}^2\eta\norm{\sum_{k_1,l_1,l_2}\bar{\nu}_{l_1,k_1}\nu_{l_2,k_1}A_{k_1}^\dagger N_{l_1}N_{l_2}A_{k_1}}_\eta\\
	&\leq\norm{\tau}^2\eta\norm{\sum_{k_1}A_{k_1}^\dagger A_{k_1}}_\eta\max_{k_1}\norm{\sum_{l_1,l_2}\bar{\nu}_{l_1,k_1}\nu_{l_2,k_1}N_{l_1}N_{l_2}}_{\eta-1},
\end{aligned}
\end{equation}
where $\norm{\sum_{k_1}A_{k_1}^\dagger A_{k_1}}_\eta=\eta$ and $\norm{\sum_{l_1,l_2}\bar{\nu}_{l_1,k_1}\nu_{l_2,k_1}N_{l_1}N_{l_2}}_{\eta-1}
\leq\norm{\nu}_{\max}^2\eta^2$.
This completes the proof of the third statement of \prop{singlelayer_seminorm}.

For the fourth statement, we let $X=\sum_{j,k,m}\tau_{j,k}\nu_{j,m}A_j^\dagger N_mA_k$ and compute
\begin{equation}
\begin{aligned}
	X^\dagger X
	&=\sum_{j_1,k_1,m_1,j_2,k_2,m_2}\bar{\tau}_{j_1,k_1}\bar{\nu}_{j_1,m_1}\tau_{j_2,k_2}\nu_{j_2,m_2}A_{k_1}^\dagger N_{m_1}A_{j_1} A_{j_2}^\dagger N_{m_2}A_{k_2}\\
	&=\sum_{j_1,k_1,m_1,k_2,m_2}\bar{\tau}_{j_1,k_1}\bar{\nu}_{j_1,m_1}\tau_{j_1,k_2}\nu_{j_1,m_2}A_{k_1}^\dagger N_{m_1} N_{m_2}A_{k_2}\\
	&\quad-\sum_{j_1,k_1,m_1,j_2,k_2,m_2}\bar{\tau}_{j_1,k_1}\bar{\nu}_{j_1,m_1}\tau_{j_2,k_2}\nu_{j_2,m_2}A_{k_1}^\dagger N_{m_1}A_{j_2}^\dagger A_{j_1}N_{m_2}A_{k_2}.
\end{aligned}
\end{equation}
Applying the operator Cauchy-Schwarz inequality (\lem{cauchy}),
\begin{equation}
\begin{aligned}
	X^\dagger X
	&\leq\sum_{j_1,k_1,m_1,k_2,m_2}\bar{\tau}_{j_1,k_1}\bar{\nu}_{j_1,m_1}\tau_{j_1,k_2}\nu_{j_1,m_2}A_{k_1}^\dagger N_{m_1} N_{m_2}A_{k_2}\\
	&\quad+\sum_{j_1,k_1,m_1,j_2,k_2,m_2}\bar{\tau}_{j_1,k_1}\bar{\nu}_{j_1,m_1}\tau_{j_1,k_2}\nu_{j_1,m_2}A_{k_1}^\dagger N_{m_1}A_{j_2}^\dagger A_{j_2}N_{m_2}A_{k_2}\\
	&=\sum_{j_1,k_1,m_1,k_2,m_2}\bar{\tau}_{j_1,k_1}\bar{\nu}_{j_1,m_1}\tau_{j_1,k_2}\nu_{j_1,m_2}A_{k_1}^\dagger N_{m_1}N_{m_2}A_{k_2}N.
\end{aligned}
\end{equation}
We now use the H\"{o}lder-type inequality for the expectation value (\lem{fermionic_holder}) to get
\begin{equation}
\begin{aligned}
	\norm{X^\dagger X}_{\eta}
	&\leq\norm{\sum_{j_1,k_1,m_1,k_2,m_2}\bar{\tau}_{j_1,k_1}\bar{\nu}_{j_1,m_1}\tau_{j_1,k_2}\nu_{j_1,m_2}A_{k_1}^\dagger N_{m_1}N_{m_2}A_{k_2}N}_\eta\\
	&\leq\eta\norm{\sum_{j_1,k_1,k_2}\bar{\tau}_{j_1,k_1}\tau_{j_1,k_2}A_{k_1}^\dagger A_{k_2}}_\eta\max_{j_1}\norm{\sum_{m_1,m_2}\bar{\nu}_{j_1,m_1}\nu_{j_1,m_2}N_{m_1}N_{m_2}}_{\eta-1}.
\end{aligned}
\end{equation}
The second fermionic seminorm can be directly bounded as $\norm{\sum_{m_1,m_2}\bar{\nu}_{j_1,m_1}\nu_{j_1,m_2}N_{m_1}N_{m_2}}_{\eta-1}
\leq\norm{\nu}_{\max}^2\eta^2$,
whereas the first seminorm can be bounded using diagonalization (\lem{diagonalization})
\begin{equation}
	\norm{\sum_{j_1,k_1,k_2}\bar{\tau}_{j_1,k_1}\tau_{j_1,k_2}A_{k_1}^\dagger A_{k_2}}_\eta
	\leq\norm{\sum_{k_1}\norm{\tau^\dagger \tau}A_{k_1}^\dagger A_{k_1}}_\eta
	\leq\norm{\tau^\dagger \tau}\eta.
\end{equation} 
This completes the proof of the fourth statement of \prop{singlelayer_seminorm}.

For the fifth statement, we let $X=\sum_{j,k}\tau_{j,k}\nu_{j,j}A_j^\dagger A_k$ and compute
\begin{equation}
\begin{aligned}
	X^\dagger X
	&=\sum_{j_1,k_1,j_2,k_2}\bar{\tau}_{j_1,k_1}\bar{\nu}_{j_1,j_1}\tau_{j_2,k_2}\nu_{j_2,j_2}A_{k_1}^\dagger A_{j_1}A_{j_2}^\dagger A_{k_2}\\
	&=\sum_{j_1,k_1,k_2}\bar{\tau}_{j_1,k_1}\bar{\nu}_{j_1,j_1}\tau_{j_1,k_2}\nu_{j_1,j_1}A_{k_1}^\dagger A_{k_2}
	-\sum_{j_1,k_1,j_2,k_2}\bar{\tau}_{j_1,k_1}\bar{\nu}_{j_1,j_1}\tau_{j_2,k_2}\nu_{j_2,j_2}A_{k_1}^\dagger A_{j_2}^\dagger A_{j_1}A_{k_2}.
\end{aligned}
\end{equation}
Applying the operator Cauchy-Schwarz inequality (\lem{cauchy}),
\begin{equation}
\begin{aligned}
	X^\dagger X
	&\leq\sum_{j_1,k_1,k_2}\bar{\tau}_{j_1,k_1}\bar{\nu}_{j_1,j_1}\tau_{j_1,k_2}\nu_{j_1,j_1}A_{k_1}^\dagger A_{k_2}
	+\sum_{j_1,k_1,j_2,k_2}\bar{\tau}_{j_1,k_1}\bar{\nu}_{j_1,j_1}\tau_{j_1,k_2}\nu_{j_1,j_2}A_{k_1}^\dagger A_{j_2}^\dagger A_{j_2}A_{k_2}\\
	&=\sum_{j_1,k_1,k_2}\bar{\tau}_{j_1,k_1}\bar{\nu}_{j_1,j_1}\tau_{j_1,k_2}\nu_{j_1,j_1}A_{k_1}^\dagger A_{k_2}N.
\end{aligned}
\end{equation}
We now use the H\"{o}lder-type inequality for the expectation value (\lem{fermionic_holder}) to get
\begin{equation}
\begin{aligned}
	\norm{X^\dagger X}_\eta
	&\leq\norm{\sum_{j_1,k_1,k_2}\bar{\tau}_{j_1,k_1}\bar{\nu}_{j_1,j_1}\tau_{j_1,k_2}\nu_{j_1,j_1}A_{k_1}^\dagger A_{k_2}N}_\eta\\
	&=\eta\norm{\sum_{j_1,k_1,k_2}\bar{\tau}_{j_1,k_1}\tau_{j_1,k_2}A_{k_1}^\dagger A_{k_2}}_\eta
	\max_{j_1}\norm{\bar{\nu}_{j_1,j_1}\nu_{j_1,j_1}I}_{\eta-1}.
\end{aligned}
\end{equation}
The second fermionic seminorm can be directly bounded by $\norm{\nu}_{\max}^2$, whereas we perform diagonalization to the first seminorm (\lem{diagonalization}):
\begin{equation}
	\norm{\sum_{j_1,k_1,k_2}\bar{\tau}_{j_1,k_1}\tau_{j_1,k_2}A_{k_1}^\dagger A_{k_2}}_\eta
	\leq\norm{\tau^\dagger \tau}\norm{\sum_{k_1}A_{k_1}^\dagger A_{k_1}}_\eta
	=\norm{\tau^\dagger \tau}\eta.
\end{equation}
This completes the proof of the fifth statement of \prop{singlelayer_seminorm}.

For the sixth statement, we let $X=\sum_{j,k,l}\tau_{j,k}\nu_{l,j}A_j^\dagger N_l A_k$ and compute
\begin{equation}
\begin{aligned}
	X^\dagger X
	&=\sum_{j_1,k_1,l_1,j_2,k_2,l_2}\tau_{j_1,k_1}\nu_{l_1,j_1}\tau_{j_2,k_2}\nu_{l_2,j_2}A_{k_1}^\dagger N_{l_1}A_{j_1}A_{j_2}^\dagger N_{l_2} A_{k_2}\\
	&=\sum_{j_1,k_1,l_1,k_2,l_2}\tau_{j_1,k_1}\nu_{l_1,j_1}\tau_{j_1,k_2}\nu_{l_2,j_1}A_{k_1}^\dagger N_{l_1}N_{l_2} A_{k_2}\\
	&\quad-\sum_{j_1,k_1,l_1,j_2,k_2,l_2}\tau_{j_1,k_1}\nu_{l_1,j_1}\tau_{j_2,k_2}\nu_{l_2,j_2}A_{k_1}^\dagger N_{l_1}A_{j_2}^\dagger A_{j_1}N_{l_2} A_{k_2}.
\end{aligned}
\end{equation}
Applying the operator Cauchy-Schwarz inequality (\lem{cauchy}),
\begin{equation}
\begin{aligned}
	X^\dagger X
	&\leq\sum_{j_1,k_1,l_1,k_2,l_2}\tau_{j_1,k_1}\nu_{l_1,j_1}\tau_{j_1,k_2}\nu_{l_2,j_1}A_{k_1}^\dagger N_{l_1}N_{l_2} A_{k_2}\\
	&\quad+\sum_{j_1,k_1,l_1,j_2,k_2,l_2}\tau_{j_1,k_1}\nu_{l_1,j_1}\tau_{j_1,k_2}\nu_{l_2,j_1}A_{k_1}^\dagger N_{l_1}A_{j_2}^\dagger A_{j_2}N_{l_2} A_{k_2}\\
	&=\sum_{j_1,k_1,l_1,k_2,l_2}\tau_{j_1,k_1}\nu_{l_1,j_1}\tau_{j_1,k_2}\nu_{l_2,j_1}A_{k_1}^\dagger N_{l_1}N_{l_2} A_{k_2}N.
\end{aligned}
\end{equation}
We now use the H\"{o}lder-type inequality for the expectation value (\lem{fermionic_holder}) to get
\begin{equation}
\begin{aligned}
	\norm{X^\dagger X}_\eta
	&\leq\norm{\sum_{j_1,k_1,l_1,k_2,l_2}\tau_{j_1,k_1}\nu_{l_1,j_1}\tau_{j_1,k_2}\nu_{l_2,j_1}A_{k_1}^\dagger N_{l_1}N_{l_2} A_{k_2}N}_\eta\\
	&=\eta\norm{\sum_{j_1,k_1,k_2}\tau_{j_1,k_1}\tau_{j_1,k_2}A_{k_1}^\dagger A_{k_2}}_\eta
	\max_{j_1}\norm{\sum_{l_1,l_2}\nu_{l_1,j_1}\nu_{l_2,j_1}N_{l_1}N_{l_2}}_{\eta-1}.
\end{aligned}
\end{equation}
The second fermionic seminorm can be directly bounded by $\norm{\nu}_{\max}^2\eta^2$, whereas we perform diagonalization to the first seminorm (\lem{diagonalization}):
\begin{equation}
	\norm{\sum_{j_1,k_1,k_2}\tau_{j_1,k_1}\tau_{j_1,k_2}A_{k_1}^\dagger A_{k_2}}_\eta
	\leq\norm{\tau^\dagger \tau}\norm{\sum_{k_1}A_{k_1}^\dagger A_{k_2}}_\eta
	=\norm{\tau^\dagger \tau}\eta.
\end{equation}
This completes the proof of the sixth statement of \prop{singlelayer_seminorm}.

\section{Counting fermionic paths for non-sparse interactions}
\label{append:pathcountdense}

In this appendix, we use the path-counting technique to prove \eq{general_path_bound} for non-sparse interacting electrons.
We will make use of the following commutation relations
\begin{align}
	\left[A_j^\dagger A_k,A_{j_x}^\dagger\right]&=\delta_{k,j_x}A_j^\dagger,\quad
	\left[A_j^\dagger A_k,A_{k_y}\right]=-\delta_{k_y,j}A_k,\quad
	\left[A_j^\dagger A_k,N_{l_z}\right]
	=\delta_{k,l_z}A_j^\dagger A_k
	-\delta_{j,l_z}A_j^\dagger A_k, \label{eq:commutator_t_3}\\
	\left[N_lN_m,A_{j_x}^\dagger\right]
	&=\delta_{m,j_x}N_lA_{j_x}^\dagger+\delta_{l,j_x}N_m A_{j_x}^\dagger- \delta_{l,k_x} \delta_{m,k_x} A^\dagger_{j_x},\label{eq:commutator_v_3a}\\
	\left[N_lN_m,A_{k_x}\right]
	&=-\delta_{m,k_x} A_{k_x} N_l-\delta_{l,k_x} A_{k_x} N_m + \delta_{l,k_x} \delta_{m,k_x} A_{k_x},\label{eq:commutator_v_3b}
\end{align}
which are slightly different from the ones used before. These relations can be derived in a similar way as in Equation~\eqref{eq:commutator_t}~and~\eqref{eq:commutator_v}.

Our analysis of the non-sparse interactions mirrors that of the sparse case in \sec{path-count-sparse}.
\begin{proposition}[Non-sparse path-counting bound] \label{prop:cntpathdense} 
Under the same assumption as in \prop{path_bound_seminorm}, we have
\begin{align}
\sum_{\langle j_{p+1}, k_{p+1} \rangle} \ldots \sum_{\langle j_{1}, k_{1} \rangle} \sum_{P \rhd \left(H^{\gamma_{p+1}}_{j_{p+1} k_{p+1}}, \ldots, H^{\gamma_{1}}_{j_{1} k_{1}} \right)} \norm{P \ket{\pmb{c}_\eta}} &= \cO{n^{\abs{\pmb \gamma}}  \eta^{p+2-\abs{\pmb \gamma}}},\\
\sum_{\langle j_{p+1}, k_{p+1} \rangle} \ldots \sum_{\langle j_{1}, k_{1} \rangle} \sum_{P^\dagger \rhd \left(H^{\gamma_{p+1}}_{j_{p+1} k_{p+1}}, \ldots, H^{\gamma_{1}}_{j_{1} k_{1}} \right)} \norm{P^\dagger \ket{\pmb{c}_\eta}} &= \cO{n^{\abs{\pmb \gamma}}  \eta^{p+2-\abs{\pmb \gamma}}}.
\end{align}
\end{proposition}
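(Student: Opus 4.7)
The plan is to adapt the inductive counting argument of \prop{cntpathsparse} to the non-sparse setting by replacing the sparsity factor $d$ per layer with $\cO{n}$ when commuting with $T$ (the new free index is a creation site, bounded by the $\cO{n}$ unoccupied sites) and $\cO{\eta}$ when commuting with $V$ (the new free index is a number-operator site, bounded by the $\cO{\eta}$ occupied sites of an intermediate configuration). I will proceed by induction on $q = 2, \ldots, p+1$, maintaining the following invariants: (i) each fermionic path $P\rhd \left(H^{\gamma_q}_{j_q k_q}, \ldots, H^{\gamma_1}_{j_1 k_1}\right)$ has at most $\cO{q}$ elementary fermionic operators; (ii) its rightmost operator is $A$ or $N$, never $A^\dagger$; (iii) every intermediate configuration arising in the evaluation of $P\ket{\pmb{c}_\eta}$ has electron count $\eta+\cO{q}=\cO{\eta}$, since each elementary operator shifts the count by at most one. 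Invariants (i) and (ii) carry over from the sparse case by case analysis on \eq{commutator_t_3}, \eq{commutator_v_3a}, and \eq{commutator_v_3b}.

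The central inductive claim I will prove is
\begin{equation}
\sum_{\langle j_q, k_q \rangle} \cdots \sum_{\langle j_1, k_1 \rangle} \sum_{P \rhd \left(H^{\gamma_q}_{j_q k_q}, \ldots, H^{\gamma_1}_{j_1 k_1}\right)} \norm{P \ket{\pmb{c}_\eta}} = \cO{n^{\abs{\pmb \gamma}_q} \eta^{q+1 - \abs{\pmb \gamma}_q}},
\end{equation}
where $\abs{\pmb \gamma}_q = \sum_{r=1}^{q} \gamma_r$; setting $q = p+1$ then yields the proposition. The base case $q=2$ is verified directly from the six summands of $[T,V]$ in \prop{singlelayer_structure}: each is a fermionic chain with one $A^\dagger$, one $A$, and at most one $N$, so the index of the $A^\dagger$ ranges over $\cO{n}$ unoccupied sites while each of the other indices ranges over $\cO{\eta}$ occupied sites, giving $\cO{n\eta^2}$ nonzero paths per summand.

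For the inductive step, commuting $H_{\gamma_q}$ with a layer-$(q-1)$ path $P'$ yields at most $\cO{q}$ child paths, one per elementary operator of $P'$ that we commute with, times the constant number of terms from \eq{commutator_t_3}, \eq{commutator_v_3a}, or \eq{commutator_v_3b}. Each child path has exactly one more free index than $P'$, namely the non-contracted one of $j_q, k_q$, whose worst-case cost is $\cO{n}$ when $\gamma_q = 1$ (the new operator may be an $A^\dagger$, whose index ranges over unoccupied sites) and $\cO{\eta}$ when $\gamma_q = 0$ (the new operator is an $N$, whose index ranges over the occupied sites of the relevant intermediate configuration, which has $\cO{\eta}$ electrons by invariant (iii)). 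Combining with the innermost contribution --- $\cO{n\eta}$ from $A^\dagger_{j_1} A_{k_1}$ when $\gamma_1 = 1$, and $\cO{\eta^2}$ from $N_{j_1} N_{k_1}$ when $\gamma_1 = 0$ --- yields the claimed bound. The dagger version follows symmetrically, using the mirror invariant that the leftmost operator of $P$ is $A^\dagger$ or $N$.

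The main obstacle is verifying that every $V$-commutation truly costs only $\cO{\eta}$ rather than $\cO{n}$. This requires a position-aware analysis: when $[N_l N_m, A^\dagger_{j_x}]$ or $[N_l N_m, A_{k_x}]$ inserts a new $N$ operator into the path, the insertion site of that $N$ must be tracked so that its free index can be bounded by the electron count of the appropriate intermediate configuration via invariant (iii). A related subtlety is the third rule in \eq{commutator_t_3}, where commuting $T$ with an inner $N_{l_z}$ produces two new elementary operators $A^\dagger_j A_k$ but only one genuinely free index (the other coincides with $l_z$); this case must be checked to ensure that it does not generate an unexpected extra factor of $n$ or $\eta$, and that invariant (ii) on the rightmost operator of $P$ is preserved.
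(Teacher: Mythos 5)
Your proposal is correct and follows essentially the same route as the paper's proof: induction on the layer index $q$ with structural invariants on the fermionic paths, a cost of $\cO{n}$ per $T$-commutation (new free $A^\dagger$ index) and $\cO{\eta}$ per $V$-commutation (new free $N$ index restricted to occupied sites), and a final occupancy constraint on the rightmost operator. Your invariant (iii) on intermediate configurations makes explicit the justification for the $\cO{\eta}$ factors that the paper's argument uses somewhat implicitly, and your treatment of the $\left[A_j^\dagger A_k, N_{l_z}\right]$ case matches the paper's accounting (one net free index, worth $n$).
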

\begin{proof}
We will prove the following claims by induction on $q = 2, \ldots, p+1$.
\begin{itemize}
\item All fermionic paths $P$ are products of $A^\dagger_i A_j$ and $N_k$.
\item All fermionic paths $P$ have at most $q+1$ elementary fermionic operators.
\item The number of fermionic paths $P$ that start with a fermionic operator acting on a specific site $i$ is at most $3^{q-1}q!n^{\sum_{q'=1}^{q} \gamma_{q'}}  \eta^{\sum_{q'=1}^{q} (1-\gamma_{q'})}$.
\end{itemize}
The base case $q=2$ can be easily verified by noting that we only need to consider $[T, V]$ or $[V, T]$.
For every site $i$, there are at most $6 n \eta$ fermionic paths starting with this site, all of which are products of $A^\dagger_i A_j$ and $N_k$. 
This is because there are at most three summation indices. The rightmost index must be equal to $i$ and the indices for $N_k$, $A_j$ have at most $\eta$ choices, while the remaining index has $n$ possible choices, giving a total of $n\eta$ choices.
The additional factor of $6$ comes from the number of different expansion terms in Equation~\eqref{eq:commutator_t_3},~\eqref{eq:commutator_v_3a},~\eqref{eq:commutator_v_3b}.
Furthermore, every fermionic path consists of at most $3$ fermionic operators.
These established the claims for the base case $q=2$.

For every $q>2$, we now use the induction hypothesis for $q-1$ to prove the claims for $q$.
If $\gamma_q = 1$, then we take another commutator with $T = \sum_{j_q, k_q} \tau_{j_q, k_q} A^\dagger_{j_q} A_{k_q}$.
We can see that all fermionic paths $P \rhd \left(H^{\gamma_{q}}_{j_{q} k_{q}}, \ldots, H^{\gamma_{1}}_{j_{1} k_{1}} \right)$ come from the expansion of
\begin{equation}
[A^\dagger_{j_q} A_{k_q}, P'],\ \forall \langle j_q, k_q \rangle,\ \forall P' \rhd \left(H^{\gamma_{q-1}}_{j_{q-1} k_{q-1}}, \ldots, H^{\gamma_{1}}_{j_{1} k_{1}} \right).
\end{equation}
Using the commutation rule $[X, Y_1 \ldots Y_\kappa] = \sum_{k=1}^\kappa Y_1\ldots Y_{k-1} [X, Y_k] Y_{k+1} \ldots Y_{\kappa}$,
we show that the claims hold for $q$ as follows.
When we take the commutation of $A^\dagger_{j_q} A_{k_q}$ with $A_j^\dagger$ or $A_k$, we know from \eq{commutator_v_3a} that one free index will be introduced, resulting in an additional factor of $n$.
When we take the commutation of $A^\dagger_{j_q} A_{k_q}$ with $N_l$, we will remove the fermionic operator $N_l$ and replace it with $A^\dagger_{j_q} A_{k_q}$, which removes a factor of $\eta$ and adds an additional factor of $n \eta$.
Additionally, there are at most $(q-1)+1$ fermionic operators in $P'$.
Hence the number of fermionic paths $P$ that start with a fermionic operator acting on site $i$ is at most
\begin{equation}
(2q)3^{q-2}(q-1)! n n^{\sum_{q'=1}^{q-1} \gamma_{q'}}  \eta^{\sum_{q'=1}^{q-1} (1-\gamma_{q'})} \leq 3^{q-1}q! n^{\sum_{q'=1}^{q} \gamma_{q'}}  \eta^{\sum_{q'=1}^{q} (1-\gamma_{q'})}.
\end{equation}
Furthermore, in both cases, we add at most one additional fermionic operator.
Therefore, all fermionic paths will have at most $(q-1)+1+1 = q+1$ fermionic operators.
And $[A^\dagger_{j_q} A_{k_q}, P']$ remains a product of $A^\dagger_i A_j$ and $N_k$.

The inductive step for $\gamma_q = 0$ follows from a similar argument.
The first two claims can be directly verified.
For the last claim, we proceed in a slightly different way as follows.
When we take the commutator of $N^\dagger_{j_q} N_{k_q}$ with $A_j^\dagger$ or $A_k$, we will add $N_{j_q}$ or $N_{k_q}$ to the fermionic path, which results in an additional factor of $\eta$.
When we take the commutator of $N^\dagger_{j_q} N_{k_q}$ with $N_k$, the commutator is equal to zero.
Hence the number of fermionic paths that start with a fermionic operator acting on site $i$ is at most
\begin{equation}
(3q)3^{q-2}(q-1)! \eta n^{\sum_{q'=1}^{q-1} \gamma_{q'}}  \eta^{\sum_{q'=1}^{q-1} (1-\gamma_{q'})} \leq 3^{q-1}q! n^{\sum_{q'=1}^{q} \gamma_{q'}}  \eta^{\sum_{q'=1}^{q} (1-\gamma_{q'})}.
\end{equation}
We have thus shown that the claims hold for $q$.

Performing the induction on $q$ from $2$ to $p+1$ shows that
the number of fermionic paths starting with site $i$ is at most
\begin{equation}
3^{p}(p+1)! n^{\sum_{q=1}^{p+1} \gamma_q} \eta^{\sum_{q=1}^{p+1} (1-\gamma_q)} = \cO{n^{\sum_{q=1}^{p+1} \gamma_q}  \eta^{\sum_{q=1}^{p+1} (1-\gamma_q)}}.
\end{equation}
Because each fermionic path $P$ is a product of $A^\dagger_i A_j$ and $N_k$,
$\norm{P \ket{\pmb{c}_\eta}}$ would be nonzero only if the rightmost fermionic operator acts on one of the $\eta$ occupied sites in the configuration $\pmb{c}_\eta$.
Hence there are at most $\eta \mathcal{O}(n^{\sum_{q=1}^{p+1} \gamma_q}  \eta^{\sum_{q=1}^{p+1} (1-\gamma_q)})$ fermionic paths with non-zero $\norm{P \ket{\pmb{c}_\eta}}$.
Finally, recall from \lem{fermionic_path} that $\norm{P \ket{\pmb{c}_\eta}}$ is either $0$ or $1$. Therefore, we have
\begin{equation}
\sum_{\langle j_{p+1}, k_{p+1} \rangle} \ldots \sum_{\langle j_{1}, k_{1} \rangle} \sum_{P \rhd \left(H^{\gamma_{p+1}}_{j_{p+1} k_{p+1}}, \ldots, H^{\gamma_{1}}_{j_{1} k_{1}} \right)} \norm{P \ket{\pmb{c}_\eta}} = \cO{n^{\sum_{q=1}^{p+1} \gamma_q}  \eta^{1+\sum_{q=1}^{p+1} (1-\gamma_q)}}.
\end{equation}
The other bound
\begin{equation}
\sum_{\langle j_{p+1}, k_{p+1} \rangle} \ldots \sum_{\langle j_{1}, k_{1} \rangle} \sum_{P^\dagger \rhd \left(H^{\gamma_{p+1}}_{j_{p+1} k_{p+1}}, \ldots, H^{\gamma_{1}}_{j_{1} k_{1}} \right)} \norm{P^\dagger \ket{\pmb{c}_\eta}} = \cO{n^{\abs{\pmb \gamma}}  \eta^{p+2-\abs{\pmb \gamma}}}
\end{equation}
can be similarly proved using the fact that the leftmost fermionic operator of $P^\dagger$ must act on one of the $\eta$ occupied sites in any fixed configuration.
\end{proof}

We can combine the above proposition with the path-counting bound (\prop{path_bound_seminorm}) to obtain
\begin{align}
\norm{\left[H_{\gamma_{p+1}},\cdots\left[H_{\gamma_2},H_{\gamma_1}\right]\right]}_{\eta}
&= \cO{(n \norm{\tau}_{\max})^{\abs{\pmb \gamma}} (\eta \norm{\nu}_{\max})^{p+1-\abs{\pmb \gamma}}\eta}.
\end{align}
Finally, we use the Trotter error bound \eq{pf2k_eta} to obtain
\begin{equation}
\norm{\mathscr{S}_p(t)-e^{-itH}}_{\eta}
=\cO{ \left( n\norm{\tau}_{\max} + \eta \norm{\nu}_{\max} \right)^{p-1} \norm{\tau}_{\max} \norm{\nu}_{\max} n \eta^2 t^{p+1} },
\end{equation}
completing the proof of \eq{general_path_bound}.

Note that this bound is slightly worse than Eq.\ \eq{general_bound} of \thm{fermionic_seminorm}, as the norm inequality $\norm{\tau} \leq n \norm{\tau}_{\max}$ always holds but not necessarily saturates.
However, in the electronic-structure application, it indeed holds that $\norm{\tau}$ and $n \norm{\tau}_{\max}$ have the same asymptotic scaling, so \eq{general_bound} and \eq{general_path_bound} give digital quantum simulations with the same asymptotic gate complexity. See \sec{app_plane_wave} for further discussions.

\section{Lower-bounding \texorpdfstring{$\norm{[T,\ldots[T,V]]}_{\eta}$}{nested commutators with T}}
\label{append:commutator_t}

In this appendix, we prove \prop{tightness_t} that lower-bounds $\norm{[T,\ldots[T,V]]}_{\eta}$ for the electronic Hamiltonian \eq{def_tv}. After the fermionic Fourier transform \eq{ffft}, we have
\begin{equation}
\begin{aligned}
\widetilde{T}&=\mathrm{FFFT}^\dagger\cdot T\cdot\mathrm{FFFT}=nN_0,\\
\widetilde{V}&=\mathrm{FFFT}^\dagger\cdot V\cdot\mathrm{FFFT}
=\frac{1}{n^2}\sum_{j,k,l,m}
\left(\sum_{x=0}^{\frac{n}{2}-1}e^{\frac{2\pi i x(k-j)}{n}}\right)
\left(\sum_{y=0}^{\frac{n}{2}-1}e^{\frac{2\pi i y(m-l)}{n}}\right)
A_j^\dagger A_k A_l^\dagger A_m,
\end{aligned}
\end{equation}
which gives the commutator
\begin{equation}
\left[\widetilde{T},\widetilde{V}\right]
=\frac{1}{n}\sum_{k,l,m}H_{0klm}
-\frac{1}{n}\sum_{j,l,m}H_{j0lm}
+\frac{1}{n}\sum_{j,k,m}H_{jk0m}
-\frac{1}{n}\sum_{j,k,l}H_{jkl0}
\end{equation}
with
\begin{equation}
H_{jklm}=\tau_{jklm}A_j^\dagger A_k A_l^\dagger A_m,\qquad
\tau_{jklm}=\left(\sum_{x=0}^{\frac{n}{2}-1}e^{\frac{2\pi i x(k-j)}{n}}\right)
\left(\sum_{y=0}^{\frac{n}{2}-1}e^{\frac{2\pi i y(m-l)}{n}}\right).
\end{equation}

For $\eta\leq\frac{n}{2}$, we will choose the initial state from the two-dimensional subspace spanned by
\begin{equation}
\ket{\widetilde{\psi}_0}=\ket{010\cdots 0\overbrace{1\cdots 1}^{\eta-1}},\qquad
\ket{\widetilde{\psi}_1}=\ket{100\cdots 0\overbrace{1\cdots 1}^{\eta-1}}.
\end{equation}
Denoting the projection to this subspace as $\widetilde{\Pi}=\ket{\widetilde{\psi}_0}\bra{\widetilde{\psi}_0}+\ket{\widetilde{\psi}_1}\bra{\widetilde{\psi}_1}$, we have that $\widetilde{\Pi}$ commutes with $\widetilde{T}=nN_0$, which means $\widetilde{\Pi}[\widetilde{T},\ldots[\widetilde{T},\widetilde{V}]]\widetilde{\Pi}=[\widetilde{T},\ldots\widetilde{\Pi}[\widetilde{T},\widetilde{V}]\widetilde{\Pi}]$. We simplify the effective commutator $\widetilde{\Pi}[\widetilde{T},\widetilde{V}]\widetilde{\Pi}$ based on the following observations:
\begin{enumerate}
	\item $A_0^\dagger A_k A_l^\dagger A_m$: This will always nullify $\ket{\widetilde{\psi}_0}$ from left.
	For $\bra{\widetilde{\psi}_1}A_0^\dagger A_k A_l^\dagger A_m\ket{\widetilde{\psi}_1}$ to be nonzero, we must let one of $\{k,m\}$ be $0$, while the other is equal to $l$.
	For $\bra{\widetilde{\psi}_1}A_0^\dagger A_k A_l^\dagger A_m\ket{\widetilde{\psi}_0}$ to be nonzero, we must let one of $\{k,m\}$ be $1$, while the other is equal to $l$.
	\item $A_j^\dagger A_0 A_l^\dagger A_m$: For $\bra{\widetilde{\psi}_0}A_j^\dagger A_0 A_l^\dagger A_m\ket{\widetilde{\psi}_0}$ to be nonzero, we must let $l=0$ and $j=m$.
	For $\bra{\widetilde{\psi}_0}A_j^\dagger A_0 A_l^\dagger A_m\ket{\widetilde{\psi}_1}$ to be nonzero, we must let one of $\{j,l\}$ be $1$, while the other is equal to $m$.
	For $\bra{\widetilde{\psi}_1}A_j^\dagger A_0 A_l^\dagger A_m\ket{\widetilde{\psi}_1}$ to be nonzero, we must let $j=0$ and $l=m$.
	For $\bra{\widetilde{\psi}_1}A_j^\dagger A_0 A_l^\dagger A_m\ket{\widetilde{\psi}_0}$ to be nonzero, we must let $j=0$, $l=0$ and $m=1$.
	\item $A_j^\dagger A_k A_0^\dagger A_m$: For $\bra{\widetilde{\psi}_0}A_j^\dagger A_k A_0^\dagger A_m\ket{\widetilde{\psi}_0}$ to be nonzero, we must let $k=0$ and $j=m$.
	For $\bra{\widetilde{\psi}_1}A_j^\dagger A_k A_0^\dagger A_m\ket{\widetilde{\psi}_0}$ to be nonzero, we must let one of $\{k,m\}$ be $1$, while the other is equal to $j$.
	For $\bra{\widetilde{\psi}_1}A_j^\dagger A_k A_0^\dagger A_m\ket{\widetilde{\psi}_1}$ to be nonzero, we must let $m=0$ and $j=k$.
	For $\bra{\widetilde{\psi}_0}A_j^\dagger A_k A_0^\dagger A_m\ket{\widetilde{\psi}_1}$ to be nonzero, we must let $m=0$, $k=0$ and $j=1$.
	\item $A_j^\dagger A_k A_l^\dagger A_0$: This will always nullify $\ket{\widetilde{\psi}_0}$ from right.
	For $\bra{\widetilde{\psi}_1}A_j^\dagger A_k A_l^\dagger A_0\ket{\widetilde{\psi}_1}$ to be nonzero, we must let one of $\{j,l\}$ be $0$, while the other is equal to $k$.
	For $\bra{\widetilde{\psi}_0}A_j^\dagger A_k A_l^\dagger A_0\ket{\widetilde{\psi}_1}$ to be nonzero,  we must let one of $\{j,l\}$ be $1$, while the other is equal to $k$.
\end{enumerate}
After removing double-counting and canceling redundant terms, we obtain
\begin{equation}
\begin{aligned}
\widetilde{\Pi}\left[\widetilde{T},\widetilde{V}\right]\widetilde{\Pi}
&=\cancel{\frac{1}{n}\sum_{l}H_{00ll}}
+\frac{1}{n}\sum_{l}H_{01ll}
\cancel{+\frac{1}{n}\sum_{k}H_{0kk0}}
+\frac{1}{n}\sum_{k}H_{0kk1}\\
&\quad \cancel{-\frac{1}{n}H_{0000}}
-\frac{1}{n}H_{0111}\\
&\quad \cancel{-\frac{1}{n}\sum_{j}H_{j00j}}
-\frac{1}{n}\sum_{l}H_{10ll}
-\frac{1}{n}\sum_{j}H_{j01j}
\cancel{-\frac{1}{n}\sum_{l}H_{00ll}}
-\frac{1}{n}H_{0001}\\
&\quad \cancel{+\frac{1}{n}H_{0000}}
\underbrace{+\frac{1}{n}H_{1011}}_{0}\\
&\quad \cancel{+\frac{1}{n}\sum_{j}H_{j00j}}
+\frac{1}{n}\sum_{j}H_{jj01}
+\frac{1}{n}\sum_{j}H_{j10j}
\cancel{+\frac{1}{n}\sum_{j}H_{jj00}}
+\frac{1}{n}H_{1000}\\
&\quad \cancel{-\frac{1}{n}H_{0000}}
\underbrace{-\frac{1}{n}H_{1101}}_{0}\\
&\quad \cancel{-\frac{1}{n}\sum_{j}H_{jj00}}
-\frac{1}{n}\sum_{j}H_{jj10}
\cancel{-\frac{1}{n}\sum_{k}H_{0kk0}}
-\frac{1}{n}\sum_{k}H_{1kk0}\\
&\quad \cancel{+\frac{1}{n}H_{0000}}
+\frac{1}{n}H_{1110}.
\end{aligned}
\end{equation}
We merge the remaining twelve terms into four groups:
\begin{enumerate}
	\item The first group contains terms
	\begin{equation}
	\hspace*{-0.5cm}
	\begin{aligned}
	&\ \frac{1}{n}\sum_{l}H_{01ll}
	-\frac{1}{n}\sum_{l}H_{10ll}
	+\frac{1}{n}\sum_{j}H_{jj01}
	-\frac{1}{n}\sum_{j}H_{jj10}\\
	=&\ \frac{1}{n}\sum_{l}
	\tau_{01ll}
	A_0^\dagger A_1 A_l^\dagger A_l
	-\frac{1}{n}\sum_{l}
	\tau_{10ll}
	A_1^\dagger A_0 A_l^\dagger A_l
	+\frac{1}{n}\sum_{j}
	\tau_{jj01}
	A_j^\dagger A_j A_0^\dagger A_1
	-\frac{1}{n}\sum_{j}
	\tau_{jj10}
	A_j^\dagger A_j A_1^\dagger A_0\\
	=&\ N\left(\sum_{x=0}^{\frac{n}{2}-1}e^{\frac{2\pi i x}{n}}\right)A_0^\dagger A_1
	-N\left(\sum_{x=0}^{\frac{n}{2}-1}e^{-\frac{2\pi i x}{n}}\right)A_1^\dagger A_0.
	\end{aligned}
	\end{equation}
	We will see that this is the dominant contribution to the effective commutator that is at least $\Om{n\eta}$.
	\item The second group contains terms
	\begin{equation}
	\begin{aligned}
	&-\frac{1}{n}H_{0111}
	-\frac{1}{n}H_{0001}
	+\frac{1}{n}H_{1000}
	+\frac{1}{n}H_{1110}\\
	=&-\frac{1}{n}
	\tau_{0111}
	A_0^\dagger A_1 A_1^\dagger A_1
	-\frac{1}{n}
	\tau_{0001}
	A_0^\dagger A_0 A_0^\dagger A_1
	+\frac{1}{n}
	\tau_{1000}
	A_1^\dagger A_0 A_0^\dagger A_0
	+\frac{1}{n}
	\tau_{1110}
	A_1^\dagger A_1 A_1^\dagger A_0\\
	=&-\left(\sum_{x=0}^{\frac{n}{2}-1}e^{\frac{2\pi i x}{n}}\right)A_0^\dagger A_1
	+\left(\sum_{x=0}^{\frac{n}{2}-1}e^{-\frac{2\pi i x}{n}}\right)A_1^\dagger A_0=\cO{n},
	\end{aligned}
	\end{equation}
	which does not dominate the result scaling.
	\item The third group contains terms
	\begin{equation}
	\begin{aligned}
	\frac{1}{n}\sum_{k}H_{0kk1}
	-\frac{1}{n}\sum_{k}H_{1kk0}
	&=\frac{1}{n}\sum_{k}
	\tau_{0kk1}
	A_0^\dagger A_k A_k^\dagger A_1
	-\frac{1}{n}\sum_{k}
	\tau_{1kk0}
	A_1^\dagger A_k A_k^\dagger A_0\\
	&=\frac{1}{n}
	\tau_{0001}
	A_0^\dagger A_1
	+\frac{1}{n}\sum_{k}
	\tau_{0kk1}
	A_k A_k^\dagger A_0^\dagger A_1\\
	&\quad -\frac{1}{n}
	\tau_{1110}
	A_1^\dagger A_0
	-\frac{1}{n}\sum_{k}
	\tau_{1kk0}
	A_k A_k^\dagger A_1^\dagger A_0\\
	&=\frac{1}{n}
	\tau_{0001}
	A_0^\dagger A_1
	+\frac{1}{n}\sum_{k}
	\tau_{0kk1}
	A_0^\dagger A_1
	-\frac{1}{n}\sum_{k}
	\tau_{0kk1}
	A_k^\dagger A_k A_0^\dagger A_1\\
	&\quad -\frac{1}{n}
	\tau_{1110}
	A_1^\dagger A_0
	-\frac{1}{n}\sum_{k}
	\tau_{1kk0}
	A_1^\dagger A_0
	+\frac{1}{n}\sum_{k}
	\tau_{1kk0}
	A_k^\dagger A_k A_1^\dagger A_0,
	\end{aligned}
	\end{equation}
	where
	\begin{equation}
	\begin{aligned}
	\frac{1}{n}
	\tau_{0001}
	A_0^\dagger A_1
	-\frac{1}{n}
	\tau_{1110}
	A_1^\dagger A_0
	=\frac{1}{2}\left(\sum_{x=0}^{\frac{n}{2}-1}e^{\frac{2\pi i x}{n}}\right)A_0^\dagger A_1
	-\frac{1}{2}\left(\sum_{x=0}^{\frac{n}{2}-1}e^{-\frac{2\pi i x}{n}}\right)A_1^\dagger A_0
	=\cO{n},
	\end{aligned}
	\end{equation}
	and
	\begin{equation}
	\hspace*{-0.5cm}
	\begin{aligned}
	&\ \frac{1}{n}\sum_{k}
	\tau_{0kk1}
	A_0^\dagger A_1
	-\frac{1}{n}\sum_{k}
	\tau_{1kk0}
	A_1^\dagger A_0\\
	=&\ \frac{1}{n}\sum_{k}
	\left(\sum_{x=0}^{\frac{n}{2}-1}e^{\frac{2\pi i xk}{n}}\sum_{y=0}^{\frac{n}{2}-1}e^{\frac{2\pi i y(1-k)}{n}}\right)
	A_0^\dagger A_1
	-\frac{1}{n}\sum_{k}
	\left(\sum_{x=0}^{\frac{n}{2}-1}e^{\frac{2\pi i x(k-1)}{n}}\sum_{y=0}^{\frac{n}{2}-1}e^{-\frac{2\pi i yk}{n}}\right)
	A_1^\dagger A_0\\
	=&\ \left(\sum_{x=0}^{\frac{n}{2}-1}e^{\frac{2\pi i x}{n}}\right)A_0^\dagger A_1
	-\left(\sum_{x=0}^{\frac{n}{2}-1}e^{-\frac{2\pi i x}{n}}\right)A_1^\dagger A_0
	=\cO{n}.
	\end{aligned}
	\end{equation}
	We rewrite the remaining terms as
	\begin{equation}
	\begin{aligned}
	&-\frac{1}{n}\sum_{k}
	\tau_{0kk1}
	A_k^\dagger A_k A_0^\dagger A_1
	+\frac{1}{n}\sum_{k}
	\tau_{1kk0}
	A_k^\dagger A_k A_1^\dagger A_0\\
	=&-\frac{1}{n}\sum_{k}
	\left(\sum_{x=0}^{\frac{n}{2}-1}e^{\frac{2\pi i xk}{n}}\sum_{y=0}^{\frac{n}{2}-1}e^{\frac{2\pi i y(1-k)}{n}}\right)
	A_k^\dagger A_k A_0^\dagger A_1\\
	&\ +\frac{1}{n}\sum_{k}
	\left(\sum_{x=0}^{\frac{n}{2}-1}e^{\frac{2\pi i x(k-1)}{n}}\sum_{y=0}^{\frac{n}{2}-1}e^{-\frac{2\pi i yk}{n}}\right)
	A_k^\dagger A_k A_1^\dagger A_0.
	\end{aligned}
	\end{equation}
	\item The fourth group contains terms
	\begin{equation}
	\begin{aligned}
	&-\frac{1}{n}\sum_{j}H_{j01j}
	+\frac{1}{n}\sum_{j}H_{j10j}\\
	=&-\frac{1}{n}\sum_{j}
	\tau_{j01j}
	A_j^\dagger A_0 A_1^\dagger A_j
	+\frac{1}{n}\sum_{j}
	\tau_{j10j}
	A_j^\dagger A_1 A_0^\dagger A_j\\
	=&-\frac{1}{n}
	\tau_{1011}
	A_1^\dagger A_0
	+\frac{1}{n}\sum_{j}
	\tau_{j01j}
	A_j^\dagger A_j A_1^\dagger A_0
	+\frac{1}{n}
	\tau_{0100}
	A_0^\dagger A_1
	-\frac{1}{n}\sum_{j}
	\tau_{j10j}
	A_j^\dagger A_j A_0^\dagger A_1.
	\end{aligned}
	\end{equation}
	Similar to the previous case, we have
	\begin{equation}
	-\frac{1}{n}
	\tau_{1011}
	A_1^\dagger A_0
	+\frac{1}{n}
	\tau_{0100}
	A_0^\dagger A_1
	=\cO{n},
	\end{equation}
	whereas the remaining terms can be rewritten as
	\begin{equation}
	\begin{aligned}
	&\ \frac{1}{n}\sum_{j}
	\tau_{j01j}
	A_j^\dagger A_j A_1^\dagger A_0
	-\frac{1}{n}\sum_{j}
	\tau_{j10j}
	A_j^\dagger A_j A_0^\dagger A_1\\
	=&\ \frac{1}{n}\sum_{j}
	\left(\sum_{x=0}^{\frac{n}{2}-1}e^{-\frac{2\pi i xj}{n}}\sum_{y=0}^{\frac{n}{2}-1}e^{\frac{2\pi i y(j-1)}{n}}\right)
	A_j^\dagger A_j A_1^\dagger A_0\\
	&\ -\frac{1}{n}\sum_{j}
	\left(\sum_{x=0}^{\frac{n}{2}-1}e^{\frac{2\pi i x(1-j)}{n}}\sum_{y=0}^{\frac{n}{2}-1}e^{\frac{2\pi i yj}{n}}\right)
	A_j^\dagger A_j A_0^\dagger A_1.
	\end{aligned}
	\end{equation}
\end{enumerate}

To summarize, the effective commutator $\widetilde{\Pi}\left[\widetilde{T},\widetilde{V}\right]\widetilde{\Pi}$ has action
\begin{equation}
\begin{aligned}
	\widetilde{\Pi}\left[\widetilde{T},\widetilde{V}\right]\widetilde{\Pi}
	=&\ N\left(\sum_{x=0}^{\frac{n}{2}-1}e^{\frac{2\pi i x}{n}}\right)A_0^\dagger A_1
	-N\left(\sum_{x=0}^{\frac{n}{2}-1}e^{-\frac{2\pi i x}{n}}\right)A_1^\dagger A_0\\
	&\ -\frac{2}{n}\sum_{k}
	\left(\sum_{x=0}^{\frac{n}{2}-1}e^{\frac{2\pi i xk}{n}}\sum_{y=0}^{\frac{n}{2}-1}e^{\frac{2\pi i y(1-k)}{n}}\right)
	A_k^\dagger A_k A_0^\dagger A_1\\
	&\ +\frac{2}{n}\sum_{k}
	\left(\sum_{x=0}^{\frac{n}{2}-1}e^{\frac{2\pi i x(k-1)}{n}}\sum_{y=0}^{\frac{n}{2}-1}e^{-\frac{2\pi i yk}{n}}\right)
	A_k^\dagger A_k A_1^\dagger A_0
	+\cO{n}.
\end{aligned}
\end{equation}
We now take the expectation of this operator with respect to the state
\begin{equation}
	\ket{\widetilde{\psi}_\eta}=\frac{\ket{010\cdots 0\overbrace{1\cdots 1}^{\eta-1}}+\ket{100\cdots 0\overbrace{1\cdots 1}^{\eta-1}}}{\sqrt{2}}.
\end{equation}
Using the limit
\begin{equation}
	\lim\limits_{z\rightarrow 0}\left(\frac{2}{1-e^{2\pi iz}}+\frac{1}{\pi iz}\right)=1
	=\lim\limits_{z\rightarrow 0}\left(-\frac{2}{1-e^{-2\pi iz}}+\frac{1}{\pi iz}\right),
\end{equation}
we have
\begin{equation}
\begin{aligned}
	&\ \bra{\widetilde{\psi}_\eta}
	\left(N\left(\sum_{x=0}^{\frac{n}{2}-1}e^{\frac{2\pi i x}{n}}\right)A_0^\dagger A_1
	-N\left(\sum_{x=0}^{\frac{n}{2}-1}e^{-\frac{2\pi i x}{n}}\right)A_1^\dagger A_0\right)
	\ket{\widetilde{\psi}_\eta}\\
	=&\ \eta\bra{\widetilde{\psi}_\eta}
	\left(\frac{2}{1-e^{\frac{2\pi i}{n}}}A_0^\dagger A_1
	-\frac{2}{1-e^{-\frac{2\pi i}{n}}}A_1^\dagger A_0\right)
	\ket{\widetilde{\psi}_\eta}\\
	=&-\frac{n\eta}{\pi i}\bra{\widetilde{\psi}_\eta}
	\left(A_0^\dagger A_1+A_1^\dagger A_0\right)
	\ket{\widetilde{\psi}_\eta}+\cO{\eta}
	=-\frac{n\eta}{\pi i}+\cO{\eta}.
\end{aligned}
\end{equation}
On the other hand,
\begin{equation}
\hspace*{-1.75cm}
\begin{aligned}
&\frac{1}{n}\bra{\widetilde{\psi}_\eta}
\left(-\sum_{k}
\left(\sum_{x=0}^{\frac{n}{2}-1}e^{\frac{2\pi i xk}{n}}\sum_{y=0}^{\frac{n}{2}-1}e^{\frac{2\pi i y(1-k)}{n}}\right)
A_k^\dagger A_k A_0^\dagger A_1
+\sum_{k}
\left(\sum_{x=0}^{\frac{n}{2}-1}e^{\frac{2\pi i x(k-1)}{n}}\sum_{y=0}^{\frac{n}{2}-1}e^{-\frac{2\pi i yk}{n}}\right)
A_k^\dagger A_k A_1^\dagger A_0\right)
\ket{\widetilde{\psi}_\eta}\\
=&\frac{1}{n}\bra{\widetilde{\psi}_\eta}
\left(-\sum_{k=n-\eta+1}^{n-1}
\frac{1-e^{\pi i k}}{1-e^{\frac{2\pi i k}{n}}}\frac{1-e^{\pi i (1-k)}}{1-e^{\frac{2\pi i (1-k)}{n}}}
A_0^\dagger A_1
+\sum_{k=n-\eta+1}^{n-1}
\frac{1-e^{\pi i (k-1)}}{1-e^{\frac{2\pi i (k-1)}{n}}}\frac{1-e^{-\pi i k}}{1-e^{-\frac{2\pi i k}{n}}}
A_1^\dagger A_0\right)\ket{\widetilde{\psi}_\eta}+\cO{n}
=\cO{n},
\end{aligned}
\end{equation}
where the last equality holds since for integer $k$ exactly one of $k$ and $k-1$ is even. We have thus proved
\begin{equation}
	\bra{\widetilde{\psi}_\eta}\left[\widetilde{T},\widetilde{V}\right]\ket{\widetilde{\psi}_\eta}
	=-\frac{n\eta}{\pi i}+\cO{n}.
\end{equation}

The above argument can be extended to analyze multilayer nested commutators. Indeed, for initial state
\begin{equation}
	\ket{\widetilde{\phi}_\eta}=\frac{\ket{010\cdots 0\overbrace{1\cdots 1}^{\eta-1}}+i\ket{100\cdots 0\overbrace{1\cdots 1}^{\eta-1}}}{\sqrt{2}},
\end{equation}
we have
\begin{equation}
\begin{aligned}
	\bra{\widetilde{\phi}_\eta}\left[\widetilde{T},\left[\widetilde{T},\widetilde{V}\right]\right]\ket{\widetilde{\phi}_\eta}
	&=-\frac{n^2\eta}{\pi i}\bra{\widetilde{\phi}_\eta}
	\left(A_0^\dagger A_1-A_1^\dagger A_0\right)
	\ket{\widetilde{\phi}_\eta}+\cO{n^2+n\eta}\\
	&=\frac{n^2\eta}{\pi}+\cO{n^2},
\end{aligned}
\end{equation}
and similar results hold for general nested commutators $[\widetilde{T},\ldots[\widetilde{T},\widetilde{V}]]$. This completes the proof of \prop{tightness_t}.

For sparse interactions, we have $u,w>0$, positive integer $2\leq d\leq\eta\leq\frac{n}{2}$ and consider the electronic Hamiltonian \eq{def_tv_sparse}. Similar to the above analysis, we compute the commutators by performing the fermionic Fourier transform, but only to the first $d$ spin orbitals, obtaining
\begin{equation}
\begin{aligned}
\widetilde{T}&=\mathrm{FFFT}_d^\dagger\cdot T\cdot\mathrm{FFFT}_d=udN_0,\\
\widetilde{V}&=\mathrm{FFFT}_d^\dagger\cdot V\cdot\mathrm{FFFT}_d
=\frac{w}{d^2}\sum_{j,k,l,m=0}^{d-1}
\left(\sum_{x=0}^{\frac{d}{2}-1}e^{\frac{2\pi i x(k-j)}{d}}\right)
\left(\sum_{y=0}^{\frac{d}{2}-1}e^{\frac{2\pi i y(m-l)}{d}}\right)
A_j^\dagger A_k A_l^\dagger A_m.
\end{aligned}
\end{equation}
We choose the initial state from the two-dimensional subspace spanned by
\begin{equation}
\ket{\widetilde{\psi}_{0,d}}=\ket{\overbrace{011\cdots 1}^{d}0\cdots0\overbrace{1\cdots 1}^{\eta-d+1}},\qquad
\ket{\widetilde{\psi}_{1,d}}=\ket{\overbrace{101\cdots 1}^{d}0\cdots0\overbrace{1\cdots 1}^{\eta-d+1}}
\end{equation}
and denote the projection to this subspace as $\widetilde{\Pi}_d=\ket{\widetilde{\psi}_{0,d}}\bra{\widetilde{\psi}_{0,d}}+\ket{\widetilde{\psi}_{1,d}}\bra{\widetilde{\psi}_{1,d}}$.
Then, the effective commutator $\widetilde{\Pi}_d\left[\widetilde{T},\widetilde{V}\right]\widetilde{\Pi}_d$ has action
\begin{equation}
\begin{aligned}
\widetilde{\Pi}_d\left[\widetilde{T},\widetilde{V}\right]\widetilde{\Pi}_d
=&\ uw\sum_{j=0}^{d-1}N_j\left(\sum_{x=0}^{\frac{d}{2}-1}e^{\frac{2\pi i x}{d}}\right)A_0^\dagger A_1
-\sum_{j=0}^{d-1}N_j\left(\sum_{x=0}^{\frac{d}{2}-1}e^{-\frac{2\pi i x}{d}}\right)A_1^\dagger A_0\\
&\ -\frac{2uw}{d}\sum_{k=0}^{d-1}
\left(\sum_{x=0}^{\frac{d}{2}-1}e^{\frac{2\pi i xk}{d}}\sum_{y=0}^{\frac{d}{2}-1}e^{\frac{2\pi i y(1-k)}{d}}\right)
A_k^\dagger A_k A_0^\dagger A_1\\
&\ +\frac{2uw}{d}\sum_{k=0}^{d-1}
\left(\sum_{x=0}^{\frac{d}{2}-1}e^{\frac{2\pi i x(k-1)}{d}}\sum_{y=0}^{\frac{d}{2}-1}e^{-\frac{2\pi i yk}{d}}\right)
A_k^\dagger A_k A_1^\dagger A_0
+\cO{uwd}.
\end{aligned}
\end{equation}
We then define the state
\begin{equation}
	\ket{\widetilde{\psi}_{\eta,d}}=\frac{\ket{\overbrace{011\cdots 1}^{d}0\cdots0\overbrace{1\cdots 1}^{\eta-d+1}}
		+\ket{\overbrace{101\cdots 1}^{d}0\cdots0\overbrace{1\cdots 1}^{\eta-d+1}}}{\sqrt{2}},
\end{equation}
so that
\begin{equation}
	\bra{\widetilde{\psi}_{\eta,d}}\left[\widetilde{T},\widetilde{V}\right]\ket{\widetilde{\psi}_{\eta,d}}
	=-\frac{uwd^2}{\pi i}+\cO{uwd}.
\end{equation}
The calculation can be extended to multilayer nested commutators, using either $\ket{\widetilde{\psi}_{\eta,d}}$ or
\begin{equation}
\ket{\widetilde{\phi}_{\eta,d}}=\frac{\ket{\overbrace{011\cdots 1}^{d}0\cdots0\overbrace{1\cdots 1}^{\eta-d+1}}
	+i\ket{\overbrace{101\cdots 1}^{d}0\cdots0\overbrace{1\cdots 1}^{\eta-d+1}}}{\sqrt{2}}
\end{equation}
as the initial state.

\section{Lower-bounding \texorpdfstring{$\norm{[V,\ldots[V,T]]}_{\eta}$}{nested commutators with V}}
\label{append:commutator_v}

In this appendix, we prove \prop{tightness_v} that lower-bounds $\norm{[V,\ldots[V,T]]}_{\eta}$ for the electronic Hamiltonian \eq{def_tv}. Recall that we have $H=T+V$ with
\begin{equation}
T=\sum_{j,k=0}^{n-1}A_j^\dagger A_k,\qquad
V=\sum_{x,y=0}^{\frac{n}{2}-1}N_{x}N_{y},
\end{equation}
which implies the commutator
\begin{equation}
[V,T]
=\sum_{x=0}^{\frac{n}{2}-1}N_x\bigg(\sum_{\substack{0\leq j\leq\frac{n}{2}-1\\ \frac{n}{2}\leq k\leq n-1}}-\sum_{\substack{\frac{n}{2}\leq j\leq n-1\\ 0\leq k\leq\frac{n}{2}-1}}\bigg)A_j^\dagger A_k
+\bigg(\sum_{\substack{0\leq j\leq\frac{n}{2}-1\\ \frac{n}{2}\leq k\leq n-1}}-\sum_{\substack{\frac{n}{2}\leq j\leq n-1\\ 0\leq k\leq\frac{n}{2}-1}}\bigg)A_j^\dagger A_k \sum_{y=0}^{\frac{n}{2}-1}N_y.
\end{equation}

For $\eta\leq\frac{n}{2}$, we will choose the initial state from the two-dimensional subspace spanned by
\begin{equation}
	\ket{{\psi}_0}=\ket{\overbrace{0\underbrace{1\cdots 1}_{\eta-1}0\cdots 0}^{\frac{n}{2}}10\cdots 0},\qquad
	\ket{{\psi}_1}=\ket{\overbrace{1\underbrace{1\cdots 1}_{\eta-1}0\cdots 0}^{\frac{n}{2}}00\cdots 0}.
\end{equation}
Denoting the projection to this subspace as ${\Pi}=\ket{{\psi}_0}\bra{{\psi}_0}+\ket{{\psi}_1}\bra{{\psi}_1}$, we have that ${\Pi}$ commutes with $\sum_{0\leq x\leq\frac{n}{2}-1}N_x$. Meanwhile,
\begin{equation}
	\Pi
	\bigg(\sum_{\substack{0\leq j\leq\frac{n}{2}-1\\ \frac{n}{2}\leq k\leq n-1}}-\sum_{\substack{\frac{n}{2}\leq j\leq n-1\\ 0\leq k\leq\frac{n}{2}-1}}\bigg)A_j^\dagger A_k
	\Pi
	=A_{0}^\dagger A_{\frac{n}{2}}-A_{\frac{n}{2}}^\dagger A_{0}.
\end{equation}
This shows that the effective commutator $\Pi[V,T]\Pi$ has the action
\begin{equation}
	\Pi[V,T]\Pi
	=\sum_{x=0}^{\frac{n}{2}-1}N_x\left(A_{0}^\dagger A_{\frac{n}{2}}-A_{\frac{n}{2}}^\dagger A_{0}\right)
	+\left(A_{0}^\dagger A_{\frac{n}{2}}-A_{\frac{n}{2}}^\dagger A_{0}\right)\sum_{y=0}^{\frac{n}{2}-1}N_y.
\end{equation}

We now take the expectation of this operator with respect to the state
\begin{equation}
	\ket{\psi_\eta}=\frac{\ket{\overbrace{0\underbrace{1\cdots 1}_{\eta-1}0\cdots 0}^{\frac{n}{2}}10\cdots 0}
		+i\ket{\overbrace{1\underbrace{1\cdots 1}_{\eta-1}0\cdots 0}^{\frac{n}{2}}00\cdots 0}}{\sqrt{2}},
\end{equation}
which gives
\begin{equation}
\begin{aligned}
	\bra{\psi_\eta}[V,T]\ket{\psi_\eta}
	&=\bra{\psi_\eta}
	\left(\sum_{x=0}^{\frac{n}{2}-1}N_x\left(A_{0}^\dagger A_{\frac{n}{2}}-A_{\frac{n}{2}}^\dagger A_{0}\right)
	+\left(A_{0}^\dagger A_{\frac{n}{2}}-A_{\frac{n}{2}}^\dagger A_{0}\right)\sum_{y=0}^{\frac{n}{2}-1}N_y\right)
	\ket{\psi_\eta}\\
	&=2\eta\bra{\psi_\eta}
	\left(A_{0}^\dagger A_{\frac{n}{2}}-A_{\frac{n}{2}}^\dagger A_{0}\right)
	\ket{\psi_\eta}+\cO{1}
	=(-1)^\eta 2i\eta+\cO{1}.
\end{aligned}
\end{equation}
This proves the desired scaling for the single-layer commutator. This argument can be extended to analyze multilayer nested commutators. Indeed, for initial state
\begin{equation}
	\ket{\phi_\eta}=\frac{\ket{\overbrace{0\underbrace{1\cdots 1}_{\eta-1}0\cdots 0}^{\frac{n}{2}}10\cdots 0}
		+\ket{\overbrace{1\underbrace{1\cdots 1}_{\eta-1}0\cdots 0}^{\frac{n}{2}}00\cdots 0}}{\sqrt{2}},
\end{equation}
we have
\begin{equation}
\begin{aligned}
	\bra{\phi_\eta}[V,[V,T]]\ket{\phi_\eta}
	&=\bra{\phi_\eta}\left(\sum_{x=0}^{\frac{n}{2}-1}N_x\right)^2\left(A_0^\dagger A_{\frac{n}{2}}+A_{\frac{n}{2}}^\dagger A_0\right)\ket{\phi_\eta}\\
	&\quad+2\bra{\phi_\eta}\left(\sum_{x=0}^{\frac{n}{2}-1}N_x\right)\left(A_0^\dagger A_{\frac{n}{2}}+A_{\frac{n}{2}}^\dagger A_0\right)\left(\sum_{y=0}^{\frac{n}{2}-1}N_y\right)\ket{\phi_\eta}\\
	&\quad+\bra{\phi_\eta}\left(A_0^\dagger A_{\frac{n}{2}}+A_{\frac{n}{2}}^\dagger A_0\right)\left(\sum_{y=0}^{\frac{n}{2}-1}N_y\right)^2\ket{\phi_\eta}
	=(-1)^{\eta-1}4\eta^2+\cO{\eta},
\end{aligned}
\end{equation}
and similar results hold for general nested commutators $[{V},\ldots[{V},{T}]]$. This completes the proof of \prop{tightness_v}.

For sparse interactions, we have $u,w>0$, positive integer $2\leq d\leq\eta\leq\frac{n}{2}$ and consider the electronic Hamiltonian \eq{def_tv_sparse}. Similar to above, we have the commutator
\begin{equation}
[V,T]
=uw\sum_{x=0}^{\frac{d}{2}-1}N_x\bigg(\sum_{\substack{0\leq j\leq\frac{d}{2}-1\\ \frac{d}{2}\leq k\leq d-1}}-\sum_{\substack{\frac{d}{2}\leq j\leq d-1\\ 0\leq k\leq\frac{d}{2}-1}}\bigg)A_j^\dagger A_k
+uw\bigg(\sum_{\substack{0\leq j\leq\frac{d}{2}-1\\ \frac{d}{2}\leq k\leq d-1}}-\sum_{\substack{\frac{d}{2}\leq j\leq d-1\\ 0\leq k\leq\frac{d}{2}-1}}\bigg)A_j^\dagger A_k \sum_{y=0}^{\frac{d}{2}-1}N_y.
\end{equation}
We choose the initial state from the two-dimensional subspace spanned by
\begin{equation}
\ket{{\psi}_{0,d}}=\ket{\overbrace{\underbrace{01\cdots1}_{\frac{d}{2}}10\cdots0}^{d}0\cdots0\overbrace{1\cdots1}^{\eta-\frac{d}{2}}},\qquad
\ket{{\psi}_{1,d}}=\ket{\overbrace{\underbrace{11\cdots1}_{\frac{d}{2}}00\cdots0}^{d}0\cdots0\overbrace{1\cdots1}^{\eta-\frac{d}{2}}}
\end{equation}
and denote the projection to this subspace as ${\Pi}_{d}=\ket{{\psi}_{0,d}}\bra{{\psi}_{0,d}}+\ket{{\psi}_{1,d}}\bra{{\psi}_{1,d}}$.
Then, the effective commutator $\Pi_d[V,T]\Pi_d$ has the action
\begin{equation}
\Pi_d[V,T]\Pi_d
=uw\sum_{x=0}^{\frac{d}{2}-1}N_x\left(A_{0}^\dagger A_{\frac{d}{2}}-A_{\frac{d}{2}}^\dagger A_{0}\right)
+uw\left(A_{0}^\dagger A_{\frac{d}{2}}-A_{\frac{d}{2}}^\dagger A_{0}\right)\sum_{y=0}^{\frac{d}{2}-1}N_y.
\end{equation}
Choosing the initial state
\begin{equation}
	\ket{\psi_{\eta,d}}=\frac{\ket{\overbrace{\underbrace{01\cdots1}_{\frac{d}{2}}10\cdots0}^{d}0\cdots0\overbrace{1\cdots1}^{\eta-\frac{d}{2}}}
		+i\ket{\overbrace{\underbrace{11\cdots1}_{\frac{d}{2}}00\cdots0}^{d}0\cdots0\overbrace{1\cdots1}^{\eta-\frac{d}{2}}}}{\sqrt{2}},
\end{equation}
we have
\begin{equation}
	\bra{\psi_{\eta,d}}[V,T]\ket{\psi_{\eta,d}}
	=(-1)^{\frac{d}{2}}iuwd+\cO{uw}.
\end{equation}
The calculation can be extended to multilayer nested commutators, using either $\ket{\psi_{\eta,d}}$ or
\begin{equation}
\ket{\phi_{\eta,d}}=\frac{\ket{\overbrace{\underbrace{01\cdots1}_{\frac{d}{2}}10\cdots0}^{d}0\cdots0\overbrace{1\cdots1}^{\eta-\frac{d}{2}}}
	+\ket{\overbrace{\underbrace{11\cdots1}_{\frac{d}{2}}00\cdots0}^{d}0\cdots0\overbrace{1\cdots1}^{\eta-\frac{d}{2}}}}{\sqrt{2}},
\end{equation}
as the initial state.

\bibliographystyle{plainnat}
\bibliography{TrotterElectron}

\end{document}